\numberwithin{equation}{section}
	\theoremstyle{plain}
	\newtheorem{theorem}{Theorem}
	\numberwithin{theorem}{section}
	\newtheorem{lemma}[theorem]{Lemma}       	% [theorem] ==> theorems and lemmas will share a counter
	\newtheorem{proposition}[theorem]{Proposition}
	\newtheorem{corollary}[theorem]{Corollary}
	\theoremstyle{definition}
	\newtheorem{example}[theorem]{Example}
	\newtheorem{remark}[theorem]{Remark}
	\newtheorem{assumption}[theorem]{Assumption}
	\newtheorem*{coro-non}{Corollary}
\renewcommand{\[}{\left[}
\DeclareMathOperator*{\argmax}{arg\,max}
\newcommand\Eb{\mathds{E}}
\newcommand\Fb{\mathds{F}}
\newcommand\Ib{\mathds{1}}
\newcommand\Pb{\mathds{P}}
\newcommand\Rb{\mathds{R}}
\newcommand\Nb{\mathds{N}_n}
\newcommand\Ac{\mathds{A}}
\newcommand\Acc{\mathscr{A}}
\newcommand\Dc{\mathscr{D}}
\newcommand\Fc{\mathscr{F}}
\newcommand\Lc{\mathscr{L}}
\newcommand\Mc{\mathscr{M}}
\newcommand\eps{\varepsilon}
\newcommand\om{\omega}
\newcommand\Om{\Omega}
\newcommand\sig{\sigma}
\newcommand\Lam{\Lambda}
\newcommand\gam{\gamma}
\newcommand\lam{\lambda}
\newcommand\Del{\Delta}
\renewcommand\d{\partial}
\newcommand\dd{\mathrm{d}}
\newcommand\ee{\mathrm{e}}
\newcommand{\e}{{\bf e}}
\renewcommand{\j}{{\bf j}}
\newcommand{\N}{\mathbb{N}}
\newcommand{\Et}{\mathds{E}_{t,x,p,q}}
\begin{document}

\title{Optimal Bookmaking}

\author{
Matthew Lorig\thanks{Department of Applied Mathematics, University of Washington, Seattle, WA, USA.  \textbf{e-mail}: \url{mlorig@uw.edu}.}
\and
Zhou Zhou\thanks{School of Mathematics and Statistics, University of Sydney, Sydney, Australia.  \textbf{e-mail}: \url{zhou.zhou@sydney.edu.au}.}
\and
Bin Zou\thanks{Corresponding author. 341 Mansfield Road U1009, Department of Mathematics, University of Connecticut, Storrs, CT 06269-1009, USA. \textbf{e-mail}: \url{bin.zou@uconn.edu}. Phone: +1-860-486-3921.}
}

\date{\vspace{1cm} Accepted for publication in\\ 
	\emph{European Journal of Operational Research}\\First Version: July 2019\\This Version: March 2021}

\maketitle

%\tableofcontents

\begin{abstract}
We introduce a general framework for continuous-time betting markets, in which a bookmaker can dynamically control the prices of bets on outcomes of random events.
In turn, the prices set by the bookmaker affect the rate or intensity of bets placed by gamblers.  The bookmaker seeks an optimal price process that maximizes his expected (utility of) terminal wealth. We obtain explicit solutions or characterizations to the bookmaker's optimal bookmaking problem in various interesting models.
\end{abstract}

\noindent
\textbf{Keywords:} 
Stochastic programming;
Poisson process; 
sports betting; 
stochastic control; 
utility maximization

\vspace{2ex}

\noindent
\textbf{Declaration of interest:} none.

\newpage
%-----------------------------------------------------------------------------------
%
%       SECTION: 		Introduction
%
%-----------------------------------------------------------------------------------

\section{Introduction}
\label{sec:intro}
%Paragraph 1: Statistics about size and growth of US and global sports betting markets include recent legalization of sports betting in the USA and growth of in-game bets.\\
Sports betting is a large and fast-growing industry.  
According to a recent report by \cite{zion2018}, the global sports betting market was valued at around 104.31 billion USD in 2017 and is estimated to reach approximately 155.49 billion by 2024.  As noted by \cite{sbj2018}, growth within the United States is expected to be particularly strong due to a May 2018 ruling by the Supreme Court, which deemed the Professional and Amateur Sports Protection Act (PASPA) unconstitutional and thereby paved the way for states to rule  individually on the legality of sports gambling.  Less than a year after the PASPA ruling, eleven states had passed bills legalizing gambling on sports compared to just one (Nevada) prior to the PASPA ruling.  At the time of the writing of this paper, several additional states have drafted bills to legalize sports betting within their borders.\footnote{Please see update on the American Gaming Association \url{https://www.americangaming.org/research/state-gaming-map/}.}

%Paragraph 2: Overview of optimal bookmaking.
In this paper, we analyze sports betting markets from the perspective of a \textit{bookmaker}.  A bookmaker sets prices for bets placed on different outcomes of sporting events (or random events in general), collects revenue from these bets, and pays out winning bets.  Often the outcomes are binary (team A wins or team B wins).  But, some events may have more than two outcomes (e.g., in an association football match, a team may win, lose or tie).  Moreover, the outcomes need not be mutually exclusive (e.g., separate bets could be placed on team A winning and player X scoring).  Ideally, a bookmaker strives to set prices in such a way that, no matter what the outcome of a particular event, he collects sufficient betting revenue to pay out all winning bets while also retaining some revenue as profit.  However, because a bookmaker only controls the prices of bets -- \textit{not} the number of bets placed on different outcomes -- he may lose money on a sporting event if a particular outcome occurs.  

Consider, for example, a sporting event with only two mutually exclusive outcomes: A and B.  Suppose the bookmaker has received many more bets on outcome A than on outcome B.  If outcome A occurs, then the bookmaker would lose money unless the revenue he has collected from bets placed on outcome B is sufficient to cover the payout of bets placed on outcome A.  Thus, if a bookmaker finds himself in a situation in which he has collected many more bets on outcome A than on outcome B, he may raise the price of a bet placed on outcome A and/or lower the price of a bet placed on outcome B.  This strategy would lower the demand for bets placed on outcome A and increase the demand for bets placed on outcome B.  As a result, the bookmakers would reduce the risk that he incurs a large loss, but he would do so at the cost of sacrificing expected profits.

Traditionally, bookmakers would only take bets prior to the start of a sporting event.  In such cases, the probabilities of particular outcomes would remain fairly static as the bookmaker received bets.  More recently, however, bookmakers have begun to take bets on sporting events as the events occur.  In these cases, the probabilities of particular outcomes evolve stochastically in time as the bookmaker takes bets.  This further complicates the task of a bookmaker who, in addition to considering the number of bets he has collected on particular outcomes, must also consider the dynamics of the sporting event in play.  For example, scoring a point near the end of a basketball game when the score is tied would have a much larger effect on the outcome of the game than scoring a point in the first quarter.  And, 
the effect of scoring a goal in the first half of an association football match would be greater than the effect of scoring a goal early in a game of basketball.

%Paragraph 3: describe what we do
In this paper, we provide a \textit{general framework} for studying optimal bookmaking, which takes into account the situations described above.
In our framework, the probabilities of outcomes are allowed to either be static or to evolve stochastically in time,
and bets on any particular outcomes may arrive either at a deterministic rate or at a stochastic intensity.
This rate or intensity is a decreasing function of the price the bookmaker sets and an increasing function of the probability that the outcome occurs.
In this setting a bookmaker may seek to maximize either expected wealth or expected utility from wealth. 
We provide several examples of in-game dynamics.
And we analyze specific optimization problems in which the bookmaker's optimal strategy can be obtained explicitly.

%Paragraph 4: Describe how our research differs from previous research on optimal book-making
Existing literature on optimal bookmaking in the context of stochastic control appears to be rather scarce.
In one of the few existing papers on the topic, \cite{hodges2013fixed} consider the bookmaker of a horse race.
In their paper, the probability that a given horse wins the race is fixed and the number of bets placed on a given horse is a normally distributed random variable with a mean that is proportional to the probability that the horse will win and inversely proportional to the price set by the bookmaker.
In this one-period setting, the bookmaker seeks to set prices in order to maximize his utility from terminal wealth.
Unlike our paper, \cite{hodges2013fixed} do not allow the bookmaker to dynamically adjust prices as bets come in nor do they find an analytic solution for the bookmaker's optimal control.  As far as we are aware, our paper is the first to provide a general framework for studying optimal bookmaking in a dynamic setting.

Similar to our work, \cite{divos2018} consider dynamic betting during an association football match.  However, rather than approach bookmaking as an optimal control problem, they use a no-arbitrage replication argument to determine the value of a bet whose payoff is a function of the goals scored by each of the two teams.  As ``hedging assets'' they consider bets that pay the final goal tally of the two teams.  By contrast, in our work, the bookmaker has no underlying assets to use as hedging instruments. 
Arbitrage opportunities in soccer betting markets are also studied by \cite{grant2018new}.

As mentioned above, we study a stochastic control problem for a bookmaker who seeks to set prices of bets optimally in order to maximize his expected profit or utility. The sports betting system in our framework is fundamentally different from a parimutuel betting system, in which all winning bets share the funds in the pool. \cite{bayraktar2017high} study the impacts of large bettors on the house and small bettors in parimutuel wagering events. Note that \cite{bayraktar2017high} model parimutuel wagering as a static game and assume the event has two mutually exclusive results. We refer to \cite{thaler1988anomalies},  \cite{hausch2011handbook}, and the references therein for further discussions on parimutuel betting.

We comment that our research of optimal bookmaking, in terms of motivation, problems, and methodology, is fundamentally different from those statistical modeling works in sports betting. 
On the statistical modeling side, most existing papers attempt to model and then forecast 
the performance of a player (or a team) and  the outcomes of random (sporting) events, and analyze the impact of various factors on individual or team performance in sports. 
Some works further apply the fitted model to construct a betting strategy to exploit opportunities, mostly from the perspective of a bettor. 
For instance, in an early work, \cite{dixon1997modelling} use a regression model of Poisson distributions to fit the football (soccer) data in England and obtain a betting strategy that delivers a positive return. 
\cite{klaassen2003forecasting} propose a statistical method to forecast the winner of a tennis game.
\cite{bozoki2016application} apply pairwise comparison matrices to rank top tennis players.
A recent paper by \cite{song2020gamma} applies a gamma process to model the dynamic in-game scoring of a National Basketball Association (NBA) game. 
In other related areas, \cite{muller2017beyond} apply multilevel regression analysis to estimate the market value of professional soccer players in top European leagues.
We refer to the survey article \cite{wright2014or} and the references therein for a comprehensive review of the application of operations research and analytics in the laws and rules of various sports.

%Paragraph 5: Describe Similarities and differences to Optimal Market Making (see references from Avellaneda, Ho and Gueant).
In certain respects, the optimal bookmaking problem we consider is similar to the optimal market making problem considered by
\cite{avellaneda2008high, gueant2012optimal, adrian2019intraday} and the optimal execution problem analyzed in \cite{gatheral2013dynamical, bayraktar2014liquidation, cartea2015}.  In these papers, a market maker offers limit orders to buy and sell a risky asset whose reference price is a stochastic process.  The intensity at which limit orders are filled is a decreasing function of how far below (above) the reference price the limit order to buy (sell) is.  In this setting, a market maker seeks to maximize either his expected wealth or his expected utility from wealth, which includes both cash generated from filled limit orders and the value of any holdings in the risky asset.  In some cases, the market maker also seeks to minimize his holdings in the risky asset.  In a sense, a market maker offering limit orders to buy and sell a risky asset is akin to a bookmaker taking bets on mutually exclusive outcomes during a sporting event.  And, a market maker seeking to minimize holdings in a risky asset is similar to a bookmaker seeking to have equal money bet on mutually exclusive outcomes. 
A recent paper \cite{capponi2019large} unifies the above mentioned two strands of literature. 
They consider a large uninformed seller who wants to optimally execute an order, accounting for the response of a group of competitive market makers, under a continuous-time Stackelberg game framework. 
The large seller determines his sell intensity in order to maximize his expected net proceeds, while market markers decide their optimal buy/sell amount to maximize their expected gross wealth minus a quadratic cost of running inventory. 
On the modeling side, \cite{capponi2019large} model the clearing bid and ask prices as a function of the market makers' controls (amount offered), which then affect the end-users' buy/sell intensity and the large seller's sell intensity. 
Similarly, we model the betting intensity as a function of the bookmaker's control (price).
Both the market makers in \cite{capponi2019large} and the bookmaker in our paper want to maximize their  expected (utility of) wealth, but we do not consider the competition among different bookmakers. 

We remark that the optimal bookmaking problem considered in this paper is also related to stochastic control problems with jump-diffusion dynamics from a methodological point of view, as both attempt to control the ``jump events'' in an optimal way.
\cite{zou2014optimal} study an optimal investment and risk control problem for an insurer whose risk process is modeled by a jump-diffusion process.
\cite{capponi2014dynamic} study optimal investment in a defaultable market with regime switching, where an additional defaultable bond with recovery is available for investment.
\cite{bo2016optimal} consider a power utility maximizing agent who invests in several credit default swaps (CDSs) and a money market, where the default intensity is piece-wise constant between consecutive jumps to defaults.
\cite{bo2019credit} further extend the setup in \cite{bo2016optimal} by applying a diffusion process to model the default intensity between consecutive default events. 
Apart from the obvious different context of problem formulation, there exists a significant difference between this paper and the above-mentioned papers: the jump intensity  in this paper is endogenous and depends on the control (prices of bets), but is exogenous and cannot be controlled in the above-mentioned papers.

%Paragraph 6: outline of the paper
The rest of this paper proceeds as follows.
In Section \ref{sec:general}, we present a general framework for continuous-time betting markets, state the bookmaker's optimization problem and define his value function.
In Section \ref{sec:PDE}, we provide a characterization of the bookmaker's value function as the solution of a partial (integro)-differential equation (PDE). 
In Section \ref{sec:cont}, we study the bookmaker's optimization problem in a semi-static setting.  In this section, the probabilities of outcomes remain constant in time, but bets arrive at rates that depend on the prices set by the bookmaker.
In Section \ref{sec:wealth_max}, we study the wealth maximization problem for a risk-neutral bookmaker.
In Section \ref{sec:exp}, we focus on the bookmaker's optimization problem when his preferences are characterized by a utility function of exponential form.
Lastly, in Section \ref{sec:con}, we offer some concluding remarks. 
Several technical proofs and results are collected in an online appendix.

%-----------------------------------------------------------------------------------
%
%       SECTION: 		Genereal
%
%-----------------------------------------------------------------------------------

\section{A General Framework for Continuous Time Betting Markets}
\label{sec:general}

Let us fix a probability space $(\Om,\Fc,\Pb)$ and a filtration $\Fb = (\Fc_t)_{0 \leq t \leq T}$ completed by $\Pb$-null sets, where $T < \infty$ is a finite time horizon. 
We shall suppose that all the stochastic processes and random variables introduced in this paper are well-defined and adapted under the given filtered probability space, which may be further characterized for specific examples or models in the sequel.
We will think of $\Pb$ as the \textit{real world} or \textit{physical} probability measure.
Consider a finite number of subsets $(A_i)_{i \in \Nb}$ (where $\Nb:=\{1,2,\cdots,n\}$) of $\Om$, each of which is $\Fc_T$-measurable (i.e., $A_i \in \Fc_T$ for all $i$). 
We can think of $A_i$ as a particular set of outcomes of a sporting event,
which finishes at time $T$. 
To avoid trivial cases, we suppose $\Pb(A_i) \in (0,1)$ for all $i$.
Note that the sets $(A_i)_{i \in \Nb}$ need not be a partition of $\Om$.  In particular, there may be outcomes $\om$ such that $\om \notin \cup_{i=1}^n A_i$.
Moreover, sets may overlap (i.e., we may have $A_i \cap A_j \neq \emptyset$ where $i \neq j$).

We will denote by $P^i = ( P_t^i )_{0 \leq t \leq T}$ the $\Fc_t$-conditional probability of $A_i$.  We have 
%$P_t^i=	\Eb_t  \Ib_{A_i} $,
\begin{align}
\label{eq:prob_def}
P_t^i	&=	\Eb_t  \Ib_{A_i}  ,
\end{align}
where $\Eb_t \, \cdot \, := \Eb( \, \cdot \, | \Fc_t )$ denotes conditional expectation.  Note that, by the tower property of conditional expectation, $P^i$ is a $(\Pb,\Fb)$-martingale.  We will denote by $P = (P^1, P^2, \cdots, P^n)$ the vector of conditional probabilities. 
In general, the conditional probability $P^i$ is a stochastic process.
However, we will also consider scenarios where it is reasonable to assume that $P_t^i$ is a fixed constant $p_i$ for all $t < T$.
Let us take a look at some examples of sporting events and show how we can describe them probabilistically.

\begin{example}
\label{ex:n-goals}
Suppose the number of goals scored by player X in an association football match is a Poisson process with intensity $\mu$.
Consider an in-game bet on a set of outcomes of the form $A_i =$ ``player X will finish the match with exactly $i$ goals.''
We can model the conditional probability that $A_i$ occurs as follows.
Fix a probability space $(\Om,\Fc,\Pb)$ and let $\Fb = (\Fc_t)_{0 \leq t \leq T}$ be the augmented filtration generated by the Poisson process $N^\mu = (N_t^\mu)_{0 \leq t \leq T}$ with intensity $\mu$.  Then we have 
$P_t^i = \sum_{n=0}^i \Pb( N_T^\mu - N_t^\mu = i - n ) \Ib_{\{N_t^\mu = n\}} = \sum_{n=0}^i \frac{1}{(i-n)!} \ee^{-\mu(T-t)}[\mu(T-t)]^{i-n} \,  \Ib_{\{N_t^\mu = n\}} $, 
from which we can easily construct the conditional probabilities of outcomes of the form ``player X will finish the match with between $i$ and $j$ goals (inclusive).''
For example, if $A_j =$ ``player X will score at least one goal'' then we have 
$P_t^j = \Ib_{\{N_t^\mu \geq 1 \}} + \Ib_{\{N_t^\mu = 0\}} ( 1 - \ee^{-\mu(T-t)} )$.
The dynamics of $P^j$ can be easily deduced as 
$\dd P_t^j=	\Ib_{\{P_{t-}^j < 1\}} \ee^{-\mu(T-t)}( \dd N_t^\mu - \mu \dd t)$, with $P_0^j
=	1 - \ee^{- \mu T} $ .
Observe that the above $P^j$  is a martingale, as it must be.
\end{example}

\begin{example}
\label{ex:n-points}
Consider an National Basketball Association (NBA) game.
Although points in NBA games are integer-valued, as the number of points in an NBA game is on the order of 100, it is reasonable to approximate the number of points scored as a $\Rb$-valued process.  To this end, let us consider a probability space $(\Om, \Fc, \Pb)$ equipped with a filtration $\Fb = (\Fc_t)_{0 \leq t \leq T}$ for a Brownian motion $W = (W_t)_{0 \leq t \leq T}$.  We can model the point differential $\Del = (\Del_t)_{0 \leq t \leq T}$ between team A and team B as a Brownian motion with drift $\Del_t = \mu t + \sig W_t$ where the sign and size of $\mu$ captures how much team A is favored by.   
Now, consider a bet on a set of outcomes of the form $A_i =$ ``team A will win by $i$ points or more.''  Then we have 
\begin{align}
P_t^i
	&=	\Pb( \Del_T \geq i | \Del_t)
	%=		\Pb( \Del_T - \Del_t \geq i - \Del_t )
	%=		\Pb \Big( W_T - W_t \geq \frac{i - \Del_t - \mu (T-t) }{ \sig} \Big) \\
	=	1 - \Phi \Big( \frac{i - \Del_t - \mu (T-t) }{ \sig \sqrt{T-t}} \Big)
	=		\Phi \Big( \frac{ \Del_t+  \mu (T-t) - i }{ \sig \sqrt{T-t}} \Big) , 
	\label{eq:prob_diff}
\end{align}
where $\Phi$ is the cumulative distribution function (c.d.f.) of a standard normal random variable.
We can easily obtain the dynamics of $P_t^i$ using It\^o's Lemma by
$\dd P_t^i=	\Phi' (  \Phi^{-1}(P_t^i) ) / \sqrt{T-t} \,  \dd W_t $.
Observe that $P^i$ is a martingale, as it must be.
\end{example}

Having seen a few examples of how we can describe sporting events probabilistically, let us now focus on the payoff structure of bets.
Throughout this paper, we will assume that
a bet placed on a set of outcomes $A_i$ pays one unit of currency at time $T$ if and only if $\om \in A_i$ (i.e., if $A_i$ occurs).  Thus, we have
\begin{align}
\label{eq:payoff}
\text{``payoff of a bet placed on $A_i$''}
	&=	\Ib_{A_i} = P_T^i .  
\end{align}

\begin{remark}
In the US, if a bookmaker quotes odds of +120 on outcome A, then a \$100 bet on outcome A would pay $\$120 + \$100 = \$220$ if A occurs.  If a bookmaker quotes odds of -110 on outcome B, then a \$110 bet on outcome B would pay $\$100 + \$110 = \$210$ if B occurs. 
Using our setup,  the price of a bet quoted at $+120$ is $100/220 \approx 0.4545$ and the price of a  bet quoted at $-110$  is $110/210\approx 0.5238$. In general, a quote of $+x$ ($x>100$) is equivalent to setting the price to $\frac{100}{x+100} < 0.5$ (underdogs in sports betting) and a quote of $-x$ ($x>100$) is equivalent to setting the price to $\frac{x}{x+100} > 0.5$ (favorites in sports betting).
The payoff in \eqref{eq:payoff} is simply a convenient normalization, which can be applied to any bet that pays a fixed (i.e., non-random) amount if a particular outcome or set of outcomes occurs. 
\end{remark}

In our framework, the bookmaker cannot control the number of bets that the public places on a set of outcomes $A_i$ directly.  However, the bookmaker can set the price of a bet placed on $A_i$ and this will affect the rate or intensity at which bets on $A_i$ are placed.  
Such a setup (i.e., controlling the price to influence demand) is similar to the setup used in operations research literature to study the problem of selling seasonal and style goods in various industries;
%for managers in many industries, 
see, e.g., \cite{pashigian1988demand} and \cite{gallego1994optimal}.
We will denote by $u^i = (u_t^i)_{0 \leq t < T }$ the price set by the bookmaker of a bet placed on $A_i$.  The vector of prices will be denoted as $u = (u^1, u^2, \cdots, u^n)$.  
It will be helpful at this stage to introduce the set of \textit{admissible pricing strategies} $\Acc(t,T)$, which we define as follows
\begin{align}
\Acc(t, T) 
	:= \{ u = (u_s)_{s \in [t,T)} : u \text{ is progressively measure w.r.t. } \Fb \text{ and }u_s \in \Ac:=		[0,1]^n \} , \label{eq:Au} 
\end{align}
with 
$t \in [0,T)$. 
Note that we do not include the control $u_T$ in the definition of $\Acc(t,T)$, as the case at time $T$ is trivial. 
Without loss of generality, we set $u^i_T = \Ib_{A_i}$ for all $i \in \Nb$.

Let us denote by $X^{u} = (X_t^u)_{0 \leq t \leq T}$ the total revenue generated by the bookmaker and 
by $Q^{u,i} = (Q_t^{u,i})_{0 \leq t \leq T}$ the total number of bets placed on a set of outcomes $A_i$. 
Note that we have indicated with a superscript the dependencies of $X^u$ and $Q^u$ on bookmaker's pricing policy $u$. 
The relationship between $X^u$, $Q^u$ and $u$ is 
$\dd X_t^u = \sum_{i =1}^n u_t^i \, \dd Q_t^{u,i}. $
Observe that $X^u$ and $Q^{u,i}$ for all $i \in \Nb$ are non-decreasing processes.
Typically, we will have $X_0^u \equiv X_0 = 0$ and $Q_0^{u,i} \equiv Q_0^i=0$ for all $i \in \Nb$.  However, we do not require this (to account for the fact that the bookmaker may have taken the bets before time 0).

In this paper, we consider two models for $Q^{u,i}$.  In one model, bets on a set of outcomes $A_i$ arrive at a rate per unit time, which is a function $\lam_i : \Ac \times \Ac \to \bar\Rb_+$ of the vectors of conditional probabilities $P$ of outcomes and prices $u$ set by the bookmaker. Here, $\Ac=[0,1]^n $ and $\bar \Rb_+ = \Rb_+ \cup \{+\infty\}$.  Under this model, we have 
\begin{align}
\label{eq:det_arrival}
\text{Continuous arrivals}:&&
Q_t^{u,i} 
	&= \int_0^t \lam_i ( P_s , u_s ) \dd s + Q_0^{i} . 
\end{align}
In another model, bets on a set of outcomes $A_i$ arrive as a state-dependent Poisson process $N^{u,i} = (N^{u,i}_t)_{0 \leq t \leq T}$ whose instantaneous  intensity is a function $\lam_i : \Ac \times \Ac \to \bar\Rb_+$ of the vectors of conditional probabilities $P$ of outcomes and prices $u$ set by the bookmaker.  Under this model, we have 
\begin{align}
\label{eq:Poi_arrival}
\text{Poisson arrivals}:&&
Q^{u,i}_t 
	&= \int_0^t \dd N^{u,i}_t + Q_0^{i}, & 
\Eb_t  \dd N^{u,i}_t  
	&=	\lam_i ( P_
	t , u_t ) \dd t . 
\end{align}
Throughout the paper, we will refer to the function $\lam_i$ as the \textit{rate} function when bets arrive continuously
and the \textit{intensity} function when bets arrive as a state-dependent Poisson process. 
To summarize, we model the bet arrivals $Q^{u,i}$ as a controlled deterministic process in \eqref{eq:det_arrival} or a controlled Poisson process in \eqref{eq:Poi_arrival}. Such a modeling is similar to that of demand process in the production-pricing literature; see \cite{li1988stochastic} and \cite{gallego1994optimal} for details, and the survey article \cite{elmaghraby2003dynamic} and the references therein for general discussions.

\begin{remark}
For sporting events with a large betting interest (e.g., the Super Bowl, the UEFA Champions League final, etc.), the continuous arrivals model given by \eqref{eq:det_arrival} is sufficient to capture the dynamics of bet arrivals.  However, for sporting events with limited betting interest (e.g., curling in the Winter Olympics, the Westminster Dog Show, etc.), the dynamics of bet arrivals are better captured by the Poisson arrivals model in \eqref{eq:Poi_arrival}. 
\end{remark}

Although our framework is sufficiently general to allow for the rate/intensity function $\lam_i$ to depend on the entire vector of conditional probabilities $P$ and vector of prices $u$, the case in which $\lam_i$ depends only on the conditional probability $P^i$ and price $u^i$ will be of special interest. 
In cases for which $\lam_i ( P_t, u_t )  \equiv \lam_i ( P_t^i, u_t^i )$, where $i \in \Nb $ and $t \in [0,T]$, 
we expect $\lam_i$ to be an increasing function of the conditional probability $P^i$ of outcome $A_i$ and a decreasing function of the price $u^i$ set by the bookmaker. 
Examples of rate/intensity functions $\lam_i : [0,1] \times [0,1] \to \bar\Rb_+$ that satisfy those analytical properties include
\begin{align}
\lam_i(p_i,u_i) 
%:= \frac{ \log u_t^i }{\log P_t^i} .
&:=	\frac{p_i}{1-p_i}  \frac{1 - u_i }{u_i} , \label{eq:lambda-1} \\
\lam_i(p_i,u_i) 
&:= \frac{ \log u_i }{\log p_i} . \label{eq:lambda-2}
\end{align}
The functions \eqref{eq:lambda-1} and \eqref{eq:lambda-2} have reasonable qualitative behavior because
(i) as the price $u_t^i$ of a bet on outcome $A_i$ goes to zero, the intensity of bets goes to infinity
\begin{align}
\lim_{u_i \to 0 } \lam_i(p_i,u_i)
&=	\infty , \label{limit-1}
\end{align}
(ii) as the price $u_t^i$ of a bet on outcome $A_i$ goes to one, the intensity of bets goes to zero
\begin{align}
\lim_{u_i \to 1 } \lam_i(p_i,u_i)
&=	0 ,  \label{limit-2}
\end{align}
and (iii) all \textit{fair bets} $u_t^i = P_t^i$ have the same intensity $\lam_i(p_i,p_i)=	\lam_i(q_i,q_i) .$
Another rate/intensity function $\lam_i$ we shall consider in this paper is
\begin{align}
\lam_i(p_i,u_i)
	&:= \kappa \ee^{-\beta (u_i - p_i)} , &
\kappa, \beta
	&> 0 . \label{eq:lambda-3}
\end{align}
As we shall see, the form of $\lam_i$ in \eqref{eq:lambda-3} facilitates analytic computation of optimal pricing strategies. 
The exponential form \eqref{eq:lambda-3} of the intensity function is also used in \cite{gallego1994optimal}[Section 2.3] to obtain optimal pricing strategies in closed-form.
Note, however, that $\lam_i$ in \eqref{eq:lambda-3} does not satisfy \eqref{limit-1} or \eqref{limit-2}. 
We mention that we can also multiple the intensity functions in \eqref{eq:lambda-1} and \eqref{eq:lambda-2} by a positive scaling factor $\kappa$ as in \eqref{eq:lambda-3}.

Let us denote by $Y_T^u$ the total wealth of the bookmaker just \textit{after} paying out all winning bets, assuming he follows pricing policy $u$.  Then we have
\begin{align}
Y_T^u
	&=	 X_T^u - \sum_{i = 1}^n P_T^i Q_T^{u,i} 
	= 	X_0^u - \sum_{i=1}^n P_T^i Q_0^i + \sum_{i=1}^{n} \int_{0}^{T} \left(u_t^i - P_T^i \right) \dd Q_t^{u,i},
	\label{eq:Y_def}
\end{align}
where $Q_t^{u,i}$ is given by either \eqref{eq:det_arrival} or \eqref{eq:Poi_arrival}.

We will denote by $J$ the bookmaker's \textit{objective function}.  We shall assume $J$ is of the form
\begin{align}
J(t,x,p,q; u)
	&=	\Eb_{t,x,p,q}  U(Y_T^u)  ,
\end{align}
where $U : \Rb \to \Rb$ is either the identity function or a utility function. 
Here, we have introduced the notation $\Eb_{t,x,p,q} \, \cdot \, = \Eb( \, \cdot \, | X_t^u = x, P_t = p, Q^u_t = q)$.

The bookmaker seeks an optimal \textit{control} or \textit{pricing policy} $u^*$ to the problem:
\begin{align}
\label{prime_prob}
V(t,x,p,q)
&:=	\sup_{u \in \Acc(t,T) } \, J(t,x,p,q; u),
\end{align}
where the admissible set $\Acc(t,T)$ is defined by \eqref{eq:Au}.
We shall refer to the function $V$ as the \textit{bookmaker's value function}. 
When we take $U$ to be the identity function, the bookmaker is risk neutral and his objective is to maximize the expected terminal wealth; see, e.g., \cite{gallego1994optimal} for the same criterion.
When we take $U$ to be an increasing and concave utility function, the bookmaker is risk-averse and his objective is to maximize the expected utility of terminal wealth.

\begin{remark}
As pointed out by an anonymous referee, the bookmakers and bettors often have different predictions about the probability of an event. 
In our framework, the (conditional) probabilities $P$ are the bookmaker's best predictions on betting outcomes. 
Note that we indirectly take into account the fact that bettors may have different views on the probabilities of outcomes and may have different objective functions.  These are all captured in a reduced form through  $\lambda_i$.
\end{remark}

\section{PDE Characterization of the Value Function}
\label{sec:PDE}

Let us denote by $\Mc$ the infinitesimal generator of the process $P$, and  
by $\d_t$, $\d_x$ and $\d_{q_i}$ the partial derivative operators with respect to the corresponding arguments.
For any $u \in \Ac$, define operator $\Lc^u$ by either of the following
\begin{align}
\Lc^u 
&:= 
\sum_{i=1}^n \lam_i(p, u) ( u_i \d_x +  \d_{q_i}  ) + \Mc 
,  \label{eq:Lc_det} \\
\Lc^u 
&:= 
\sum_{i=1}^n \lam_i(p, u) ( \theta_{u_i}^{x} \theta_1^{q_i} - 1 ) + \Mc 
,  \label{eq:Lc_poi}
\end{align} 
where $\theta_{z}^{q_i}$ is a \textit{shift operator} of size $z$ in the variable $q_i$, that is,
$\theta_z^{q_i} f(q) := f(q_1, \ldots, q_i + z , \ldots , q_n)$.
Suppose the bookmaker were to fix the prices of bets at a constant $u_t = u$.
Then $\Lc^u$ as defined in \eqref{eq:Lc_det} is the generator of $(X^u, P, Q^u)$ assuming the dynamics of $Q^u$ are described by the continuous arrivals model \eqref{eq:det_arrival}, and $\Lc^u$ as defined in \eqref{eq:Lc_poi} is the generator of $(X^u, P, Q^u)$ assuming the dynamics of $Q^u$ are described by the Poisson arrivals model \eqref{eq:Poi_arrival}.

As is standard in stochastic control theory, we provide a verification theorem 
to  Problem \eqref{prime_prob} in a general setting. 
We refer the reader to \cite{fleming2006controlled}[Section III.8] and \cite{yong1999stochastic}[Section 4.3] for proofs.

\begin{theorem}
	\label{thm:wealth_ver}
	Let $v: [0,T] \times \Rb_+ \times \Ac \times \Rb_+^n \to \Rb$ be a real-valued function which is at least once differentiable with respect to all arguments and satisfies 
	\begin{align}
	\d_x v >0  \qquad \text{ and } \qquad \d_{q_i} v < 0, \quad \forall \, i \in \Nb.
	\end{align} 
	Suppose the function $v$ satisfies the Hamilton-Jacobi-Bellman (HJB) equation 
	\begin{align}
	\label{eq:HJB}
	\d_t v + \sup_{\hat u \in \Ac} \, \Lc^{\hat u} v 
	&= 0, &
	v(T,x,p,q) 
	&=  \Eb \Big[ U \Big(x - \sum_{i = 1}^n q_i \Ib_{A_i}\Big) \Big| X_{T-}^u = x, P_{T-} = p, Q_{T-}^u = q \Big] , \quad
	\end{align}
	where $\Lc^{\hat u}$ is given by either \eqref{eq:Lc_det} or \eqref{eq:Lc_poi}.
	Then $v(t,x,p,q) = V(t,x,p,q)$ is the value function to Problem \eqref{prime_prob} 
	and the optimal price process $u^*=(u_s^*)_{s \in [t,T]}$ is given by 
	\begin{align}
	\label{eq:wealth_u1}
	u_s^* = \argmax_{\hat u \in \Ac} \; \Lc^{\hat u} v(s, X^*_s, P_s, Q_s^*). 
	\end{align}
\end{theorem}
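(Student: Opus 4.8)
The plan is to establish the two inequalities $v \ge V$ and $v(t,x,p,q) = J(t,x,p,q;u^*) \le V$ by a standard verification (Dynkin) argument, adapted to the two arrival models. Fix $(t,x,p,q)$ and an arbitrary admissible pricing policy $u \in \Acc(t,T)$. Writing $(X^u, P, Q^u)$ for the controlled state process started from $(x,p,q)$ at time $t$, I would apply It\^o's formula to $s \mapsto v(s, X_s^u, P_s, Q_s^u)$ on $[t, T)$. In the continuous-arrivals model \eqref{eq:det_arrival} this is the ordinary Markov-process It\^o/Dynkin formula, and the bounded-variation drift of $v$ equals $(\d_t v + \Lc^{u_s} v)(s, X_s^u, P_s, Q_s^u)$ with $\Lc^u$ as in \eqref{eq:Lc_det}; in the Poisson-arrivals model \eqref{eq:Poi_arrival} I would instead use the jump It\^o formula, so that the predictable compensator of $v$ is generated by $\Lc^{u_s} v$ with $\Lc^u$ as in \eqref{eq:Lc_poi}, while the jump increments $v(\cdots, q_i+1, \cdots) - v(\cdots, q_i, \cdots)$ integrated against the compensated Poisson random measures form a local martingale.

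In both cases, because $v$ solves the HJB equation \eqref{eq:HJB}, the drift obeys the pointwise bound $\d_t v + \Lc^{u_s} v \le \d_t v + \sup_{\hat u \in \Ac} \Lc^{\hat u} v = 0$. After a standard localization (stopping times reducing the local martingale to a true martingale) and taking $\Eb_{t,x,p,q}$, the martingale part vanishes and the non-positive drift yields $v(t,x,p,q) \ge \Eb_{t,x,p,q}[ v(T-, X_{T-}^u, P_{T-}, Q_{T-}^u) ]$. To pass from this to $J$, I would invoke the terminal condition in \eqref{eq:HJB}: since the bet arrivals have no jump at the single instant $T$ (so $X_T^u = X_{T-}^u$ and $Q_T^{u,i} = Q_{T-}^{u,i}$) while $P_T^i = \Ib_{A_i}$, conditioning on $\Fc_{T-}$ and applying the tower property gives $\Eb_{t,x,p,q}[ v(T-, X_{T-}^u, P_{T-}, Q_{T-}^u) ] = \Eb_{t,x,p,q}[ U(Y_T^u) ] = J(t,x,p,q;u)$, with $Y_T^u$ as in \eqref{eq:Y_def}. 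Taking the supremum over $u \in \Acc(t,T)$ yields $v \ge V$.

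For the reverse inequality, I would repeat the computation with the feedback control $u^*$ defined by the pointwise maximizer \eqref{eq:wealth_u1}. By construction $\d_t v + \Lc^{u_s^*} v = 0$ along the corresponding trajectory $(X^*, P, Q^*)$, so every inequality above collapses to an equality and $v(t,x,p,q) = J(t,x,p,q;u^*)$. As $u^* \in \Acc(t,T)$, this gives $v(t,x,p,q) \le V(t,x,p,q)$, and together with the first part we conclude $v = V$ with $u^*$ optimal. The sign conditions $\d_x v > 0$ and $\d_{q_i} v < 0$ encode the natural monotonicity of the value function (increasing in accumulated revenue $x$, decreasing in the outstanding bet counts $q_i$) and are what is used to identify the maximizer in \eqref{eq:wealth_u1} explicitly in the sequel; compactness of $\Ac = [0,1]^n$ and continuity of $\hat u \mapsto \Lc^{\hat u} v$ already guarantee that the argmax exists.

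The step I expect to be the main obstacle is the integrability and localization needed to convert the local martingale parts into true martingales, to justify interchanging limit and expectation, and to pass to the limit $s \uparrow T$ through the terminal jump of $P$; this is where one must impose or verify suitable growth/integrability conditions on $v$, on the rate/intensity functions $\lam_i$, and on $U$ (and, in the Poisson case, control the cumulative jump intensity $\sum_i \lam_i$ up to $T$). A secondary obstacle is well-posedness of the controlled dynamics under the feedback law $u^*$, in particular the existence of a solution to the state-dependent-intensity specification \eqref{eq:Poi_arrival}, which is required for $u^*$ to define a genuine admissible policy. These regularity issues are precisely why the cited references \cite{fleming2006controlled} and \cite{yong1999stochastic} formulate the verification theorem under standing assumptions, and I would invoke those to close the argument.
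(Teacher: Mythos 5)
Your proposal is correct and is essentially the argument the paper intends: the paper gives no proof of Theorem \ref{thm:wealth_ver} itself, deferring to \cite{fleming2006controlled}[Section III.8] and \cite{yong1999stochastic}[Section 4.3], and your Dynkin-formula verification argument (HJB inequality for arbitrary admissible $u$, equality under the feedback maximizer \eqref{eq:wealth_u1}, with the terminal condition absorbing the jump of $P$ to $\Ib_{A_i}$ at time $T$) is precisely the standard proof contained in those references. Your closing remarks on localization, integrability, and well-posedness of the feedback dynamics correctly identify the technical hypotheses under which those cited verification theorems are stated.
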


\begin{remark}
The PDE characterization in \eqref{eq:HJB} to the value function and the optimal price process given by \eqref{eq:wealth_u1} are obtained without any assumptions on function $U$, the arrival rate/intensity function $\lam_i$, or the conditional probabilities $P$.
With additional assumptions, we may simplify \eqref{eq:HJB} and \eqref{eq:wealth_u1} to more tractable forms.
For instance, when $Q^u$ is defined by the continuous arrivals model \eqref{eq:det_arrival}, we can simplify \eqref{eq:wealth_u1} and obtain for all $i \in \Nb$ that 
\begin{align}
\label{eq:wealth_u2}
\left[\lam_i(P_s, u_s^*) + \sum_{j=1}^n u_s^{j, *} \cdot \d_{u_i} \lam_j(P_s, u_s^*) \right] \d_x V + \sum_{j=1}^n  \d_{u_i} \lam_j(P_s, u_s^*) \cdot \d_{q_j}V=0 .
\end{align}
Equation \eqref{eq:wealth_u2} reduces the problem of finding the optimal price process $u^*$ in feedback form to solving a system of $n$ equations. 
We shall apply the results of Theorem \ref{thm:wealth_ver} to obtain the value function and/or the optimal price process in closed forms in the subsequent sections.
\end{remark}

\section{Analysis of the Semi-static Setting}
\label{sec:cont}

In this section, we solve the main problem \eqref{prime_prob} in a \textit{semi-static} setting.
The standing assumptions of this section are as follows.

\begin{assumption}
	\label{assumption:con_de}
	The arrivals process $Q^{u,i}$ is given by the continuous arrivals model \eqref{eq:det_arrival}.
	The vector of conditional probabilities is a vector of constants, namely, $P_t \equiv p \in (0,1)^n$ for all $t \in [0,T)$. 	
	The utility function $U$ is continuous and strictly increasing. 
	The rate function $\lam_i=\lam_i(u^i)$ is a continuous and decreasing function of $u^i$ for all $i \in \Nb$. 
\end{assumption}

For notational simplicity, we write the rate function as $\lambda_i(u_t^i)$ 
in the rest of this section because the conditional probabilities $P_t \equiv p$ are fixed constants. 
Under Assumption \ref{assumption:con_de}, we have from \eqref{eq:Y_def} that the bookmaker's terminal wealth $Y_T^u$ is given by 
\begin{align}
\label{eq:Y_def1}
Y_T^u =X_t - \sum_{i=1}^n Q_t^i \cdot \Ib_{A_i} + \sum_{i=1}^n\int_t^T \lambda_i(u_s^i)\,u_s^i \, \dd s-\sum_{i=1}^n\int_t^T \lambda_i(u_s^i) \, \dd s\cdot \Ib_{A_i}.
\end{align}
As $\lam_i$ is decreasing by Assumption \ref{assumption:con_de}, its inverse $\lam_i^{-1}$ exists.  Defining the function $f_i$ by 
\begin{align}
\label{eq:f}
f_i(x) &:= x \cdot \lam_i^{-1}(x), & x&>0, & i &\in \Nb ,
\end{align}
%With the help of $f_i$, 
we are able to rewrite $Y_T^u$ in \eqref{eq:Y_def1} as 
\begin{align}
\label{eq:Y_def2}
Y_T^u =X_t - \sum_{i=1}^n Q_t^i \cdot \Ib_{A_i} + \sum_{i=1}^n\int_t^T f_i(\lam_i(u_s^i)) \, \dd s-\sum_{i=1}^n\int_t^T \lambda_i(u_s^i) \, \dd s\cdot \Ib_{A_i}.
\end{align}
Denote by $\hat{f}_i$ the concave envelope of $f_i$ in \eqref{eq:f} and define $\hat{Y}_T^u$ by 
\begin{align}
\label{eq:Y_hat}
\hat{Y}_T^u := X_t - \sum_{i=1}^n Q_t^i \cdot \Ib_{A_i} + \sum_{i=1}^n\int_t^T \hat{f}_i(\lam_i(u_s^i)) \, \dd s - \sum_{i=1}^n\int_t^T \lambda_i(u_s^i) \, \dd s\cdot \Ib_{A_i}.
\end{align}
It is obvious that $\hat{Y}_T^u \ge Y_T^u$ for any pricing policy $u$.
We are now ready to present the main results of this section.

\begin{theorem}
	\label{thm:utility_cont_prob}
	Let Assumption \ref{assumption:con_de} hold,
	we have
	\begin{align}
	\label{eq:V1_hat}
	V(t,x,p,q) 
	%= \sup_{ u \in \Acc(t,T)}\Eb[U(Y_T^{ u})] 
	&=\sup_{\hat u  \in \Ac}\Et  U\left(x - \sum_{i=1}^n q_i \Ib_{A_i} +  (T-t)\sum_{i=1}^n\hat f_i(\lambda_i(\hat{u}_i)) - (T-t)\sum_{i=1}^n\lambda_i(\hat{u}_i) \Ib_{A_i}\right)  := \hat{V}, \quad \\
	&= \sup_{ \Lam \in \Dc_\lam} \Et U\left(x - \sum_{i=1}^n q_i \Ib_{A_i} +  (T-t)\sum_{i=1}^n\hat f_i(\Lambda_i) - (T-t)\sum_{i=1}^n \Lambda_i\Ib_{A_i}\right)  ,  	
	%:= \tilde{V}_2,
	\end{align}
	where $\Ac = [0,1]^n$ and $\Dc_\lam =(\Dc_{\lam_1}, \Dc_{\lam_2}, \cdots, \Dc_{\lam_n})$ with $\Dc_{\lam_i}$ being the range of $\lam_i$ for all $i \in \Nb$.
\end{theorem}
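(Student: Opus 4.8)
The plan is to establish the two inequalities $V\le\hat V$ and $V\ge\hat V$ separately, where $\hat V$ is the supremum over the single constant control $\hat u\in\Ac$ of the concavified objective. The final equality in the statement (the reformulation over $\Lam\in\Dc_\lam$) is then immediate: since $\lam_i$ is continuous on $[0,1]$ and maps onto its range $\Dc_{\lam_i}$, we have $\{\lam_i(\hat u_i):\hat u_i\in[0,1]\}=\Dc_{\lam_i}$, so the substitution $\Lam_i=\lam_i(\hat u_i)$ identifies the two suprema.

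For the lower bound $V\ge\hat V$, fix $\Lam\in\Dc_\lam$ and $\eps>0$. By the definition of the concave envelope together with Carath\'eodory's theorem in one dimension, for each $i\in\Nb$ there exist $a_i,b_i\in\Dc_{\lam_i}$ and $\theta_i\in[0,1]$ with $\theta_i a_i+(1-\theta_i)b_i=\Lam_i$ and $\theta_i f_i(a_i)+(1-\theta_i)f_i(b_i)\ge\hat f_i(\Lam_i)-\eps$. Setting $u_i^a:=\lam_i^{-1}(a_i)$ and $u_i^b:=\lam_i^{-1}(b_i)$, I would construct the deterministic (hence admissible) policy that in coordinate $i$ uses price $u_i^a$ on $[t,t+\theta_i(T-t))$ and $u_i^b$ on $[t+\theta_i(T-t),T)$. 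Because this control is piecewise constant with fixed time-proportions, the path-independent identities $\int_t^T\lam_i(u_s^i)\,\dd s=(T-t)\Lam_i$ and $\int_t^T f_i(\lam_i(u_s^i))\,\dd s\ge(T-t)(\hat f_i(\Lam_i)-\eps)$ hold exactly, so by \eqref{eq:Y_def2} the realized wealth $Y_T^u$ matches, up to an $\eps$-error, the concavified integrand appearing in $\hat V$. Applying $\Et U(\cdot)$ — a finite sum over the at most $2^n$ outcome atoms generated by $(A_i)$ — and letting $\eps\downarrow0$ via the continuity of $U$ gives $V\ge\hat V$.

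For the upper bound $V\le\hat V$, fix an arbitrary $u\in\Acc(t,T)$. Since $\hat f_i\ge f_i$ we have $Y_T^u\le\hat Y_T^u$ pathwise, and as $U$ is increasing, $J(t,x,p,q;u)\le\Et U(\hat Y_T^u)$. Writing $\bar\Lam_i:=\frac1{T-t}\int_t^T\lam_i(u_s^i)\,\dd s\in\Dc_{\lam_i}$, Jensen's inequality for the concave map $\hat f_i$ gives $\int_t^T\hat f_i(\lam_i(u_s^i))\,\dd s\le(T-t)\hat f_i(\bar\Lam_i)$, so $\hat Y_T^u$ is dominated pathwise by the concavified-constant wealth evaluated at the (a priori random) average intensity $\bar\Lam$. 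It then remains to show $\Et U\big(x-\sum_i q_i\Ib_{A_i}+(T-t)\sum_i\hat f_i(\bar\Lam_i)-(T-t)\sum_i\bar\Lam_i\Ib_{A_i}\big)\le\hat V$. I would condition on $\Fc_{T-}$: the average $\bar\Lam$ is $\Fc_{T-}$-measurable, while the semi-static hypothesis $P_t\equiv p$ forces each $A_i$ to be independent of $\Fc_{T-}$ (the constant-martingale property yields $\Pb(A_i\mid\Fc_t)=p_i$, hence $A_i\perp\Fc_{T-}$). Thus the conditional law of the outcome vector given $\Fc_{T-}$ equals its unconditional law, and the tower property reduces the conditional expectation to $\Psi(\bar\Lam)$, where $\Psi(\Lam):=\Et U(\cdots\hat f_i(\Lam_i)\cdots)$; since $\Psi(\bar\Lam)\le\sup_\Lam\Psi(\Lam)=\hat V$ pointwise, taking expectations completes the bound.

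The main obstacle is the conditioning step in the upper bound: one must rule out that an admissible policy exploits pre-terminal information about the outcomes, which is precisely where the semi-static structure is used. Marginal independence $A_i\perp\Fc_{T-}$ follows from the constant-martingale property, but the joint statement — that the whole vector $(A_i)_{i\in\Nb}$ is independent of $\Fc_{T-}$ — is what the reduction to $\Psi(\bar\Lam)$ really needs; in the continuous-arrival semi-static model this holds because no payoff-relevant randomness is resolved before $T$, so admissible controls may in fact be taken deterministic and $\bar\Lam$ is constant. By contrast, the concave-envelope/chattering construction in the lower bound is routine once Carath\'eodory is invoked, the only care being the $\eps$-approximation when the envelope is not attained on the boundary of $\Dc_{\lam_i}$.
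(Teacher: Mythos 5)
Your proposal follows the same two-sided strategy as the paper's own proof: your lower bound is exactly Lemma \ref{lemma:2} (a two-price, piecewise-constant-in-time policy realizing the concave envelope up to $\eps$, then continuity of $U$ over the finitely many outcome atoms), and your upper bound is exactly Lemma \ref{lemma:1} ($f_i\le\hat f_i$, monotonicity of $U$, and Jensen's inequality for $\hat f_i$ applied to the time-averaged intensity), with the passage between the $\hat u$- and $\Lam$-parametrizations handled by surjectivity of $\lam_i$ onto $\Dc_{\lam_i}$, as in the paper. The only substantive difference is how the final step of the upper bound is treated, and that is where your proof has a gap --- one that, to your credit, you explicitly noticed. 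For a genuinely adapted control $u\in\Acc(t,T)$, the average $\bar\Lam_i=\frac{1}{T-t}\int_t^T\lam_i(u_s^i)\,\dd s$ is an $\Fc_{T-}$-measurable random variable (with $\Fc_{T-}:=\sigma\bigl(\bigcup_{s<T}\Fc_s\bigr)$), so passing from $\Et U\bigl(\cdots\hat f_i(\bar\Lam_i)\cdots\bar\Lam_i\Ib_{A_i}\cdots\bigr)$ to the supremum over \emph{constant} controls requires the conditional law of the whole vector $(\Ib_{A_1},\dots,\Ib_{A_n})$ given $\Fc_{T-}$ to be deterministic. You isolate this correctly, but then dismiss it by asserting that ``admissible controls may in fact be taken deterministic'' because ``no payoff-relevant randomness is resolved before $T$.'' That assertion does not follow from Assumption \ref{assumption:con_de}: constancy of the marginals $P_t^i\equiv p_i$ makes each $A_i$ individually independent of $\Fc_{T-}$ (via the $\pi$-system argument you sketch), but it puts no constraint on the conditional \emph{joint} law.

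To see that this is a genuine failure point and not a technicality, let $Z$ be an $\Fc_0$-measurable fair coin with $A_1=A_2$ on $\{Z=1\}$ and $A_2=A_1^c$ on $\{Z=0\}$, each with conditional probability $1/2$; then $P_t^1\equiv P_t^2\equiv 1/2$ for $t<T$, so the assumption holds, yet for a strictly concave $U$ the admissible $Z$-dependent constant-in-time policy strictly beats every deterministic constant policy: on $\{Z=0\}$ equal intensities make the book riskless (exactly one bet pays), while on $\{Z=1\}$ both bets pay together and the optimal intensities are different, so the two scenario-wise optimizers do not coincide and $V>\hat V$. Thus your argument --- and indeed the literal statement --- needs the stronger hypothesis that the outcome vector is jointly independent of $\Fc_{T-}$ (equivalently, that no information about the joint law of the outcomes arrives before $T$); under that hypothesis your conditioning/tower argument is correct and complete. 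You should also know that the paper's own proof makes the same silent leap: in Lemma \ref{lemma:1}, the quantity $u_i^*$ defined by $\lam_i(u_i^*)=\frac{1}{T-t}\int_t^T\lam_i(u_s^i)\,\dd s$ is treated as a constant, which is legitimate only for deterministic controls; the inequality $\le\sup_{\hat u\in\Ac}\Et U(Y_T^{\hat u})$ is precisely the step you flagged. So your attempt is faithful to the paper's intent and is, if anything, more explicit about where the semi-static structure is used, but as a self-contained derivation from the stated hypotheses it is incomplete at exactly that point, and the correct repair is an added joint-independence assumption rather than the unproved claim that controls can be taken deterministic.
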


\begin{remark}
	\label{rem:reduction}
	The results in Theorem \ref{thm:utility_cont_prob} help us reduce the original problem, which is a dynamic optimization problem over an $n$-dimensional stochastic process, into a static optimization problem over an $n$-dimensional constant vector. 
\end{remark}

The proof of Theorem \ref{thm:utility_cont_prob} relies on the following two lemmas.

\begin{lemma}
	\label{lemma:1}
	Let Assumption  \ref{assumption:con_de} hold, we have
	\begin{align}
	\sup_{\hat u \in \Ac} \, \Et  U \left(\hat Y_T^{\hat u} \right)  =\sup_{u \in \Acc(t,T)} \, \Et  U \left(\hat Y_T^{u} \right) ,
	\end{align}
	where $\hat{Y}_T^u$ is defined by \eqref{eq:Y_hat} and set $\Acc(t,T)$ is defined by \eqref{eq:Au}.
\end{lemma}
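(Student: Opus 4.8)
\emph{Overview.} The inequality ``$\le$'' in the lemma is immediate, since each constant vector $\hat u\in\Ac$ gives a deterministic, hence progressively measurable, control $u_s\equiv\hat u$ lying in $\Acc(t,T)$; thus the left-hand supremum ranges over a subset of the strategies appearing on the right. The content is the reverse inequality, namely that time-varying or randomizing the price cannot beat a single constant vector once the running payoff $f_i$ has been replaced by its concave envelope $\hat f_i$. The plan is to establish this by a pathwise Jensen estimate, followed by a conditioning argument that exploits the semi-static structure.

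\emph{Step 1 (time-averaging via Jensen).} Fix any $u\in\Acc(t,T)$ and set, for each $i\in\Nb$,
\begin{align}
\bar{\lambda}_i:=\frac{1}{T-t}\int_t^T\lam_i(u_s^i)\,\dd s,
\end{align}
an $\Fc_{T-}$-measurable random variable valued in the range $\Dc_{\lam_i}$ (a barycenter of values of $\lam_i$ stays in this interval). Since $\hat f_i$ is concave by construction, applying Jensen's inequality pathwise with respect to the normalized measure $\dd s/(T-t)$ on $[t,T]$ gives
\begin{align}
\int_t^T\hat f_i\big(\lam_i(u_s^i)\big)\,\dd s\le(T-t)\,\hat f_i(\bar{\lambda}_i),\qquad i\in\Nb.
\end{align}
Feeding this into the definition \eqref{eq:Y_hat} of $\hat Y_T^u$ and using $\int_t^T\lam_i(u_s^i)\,\dd s=(T-t)\bar{\lambda}_i$, I obtain the pathwise bound $\hat Y_T^u\le Z$, where
\begin{align}
Z:=x-\sum_{i=1}^n q_i\Ib_{A_i}+(T-t)\sum_{i=1}^n\hat f_i(\bar{\lambda}_i)-(T-t)\sum_{i=1}^n\bar{\lambda}_i\Ib_{A_i}.
\end{align}
Monotonicity of $U$ then yields $\Et\,U(\hat Y_T^u)\le\Et\,U(Z)$.

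\emph{Step 2 (derandomizing the averaged intensity).} Write $\Phi(\Lam):=\Et\,U\big(x-\sum_i q_i\Ib_{A_i}+(T-t)\sum_i\hat f_i(\Lam_i)-(T-t)\sum_i\Lam_i\Ib_{A_i}\big)$ for a constant $\Lam\in\Dc_\lam$, so that $Z$ is exactly the argument of $\Phi$ with the \emph{random} vector $\bar{\lambda}$ in place of $\Lam$. I would then condition on $\Fc_{T-}$: the vector $\bar{\lambda}$ is $\Fc_{T-}$-measurable, while in the semi-static setting the terminal indicators $(\Ib_{A_i})_{i\in\Nb}$ are independent of $\Fc_{T-}$ (the hypothesis $P_s\equiv p$ for $s<T$ means no information about the outcomes is revealed before $T$). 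Hence $\Eb[U(Z)\mid\Fc_{T-}]=\Phi(\bar{\lambda})$, and the tower property gives $\Et\,U(Z)=\Et\,\Phi(\bar{\lambda})\le\sup_{\Lam\in\Dc_\lam}\Phi(\Lam)$. Finally, continuity of each $\lam_i$ makes $\hat u_i\mapsto\lam_i(\hat u_i)$ surjective onto its range $\Dc_{\lam_i}$, so $\sup_{\Lam\in\Dc_\lam}\Phi(\Lam)=\sup_{\hat u\in\Ac}\Et\,U(\hat Y_T^{\hat u})$. Taking the supremum over $u\in\Acc(t,T)$ delivers ``$\ge$'' and closes the argument.

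\emph{Main obstacle.} The delicate point is the conditioning in Step 2: converting the random barycenter $\bar{\lambda}$ into a genuine constant control requires that conditioning on $\Fc_{T-}$ reproduces the \emph{unconditional} law of the outcome vector, i.e.\ that $(\Ib_{A_i})_{i\in\Nb}$ is \emph{jointly} independent of $\Fc_{T-}$. This is the structural role of the semi-static hypothesis $P_s\equiv p$; the pointwise constancy of each conditional probability gives $\Eb[\Ib_{A_i}\mid\Fc_{T-}]=p_i$ by martingale convergence, and the semi-static modeling assumption upgrades this to full independence. The remaining ingredients---Jensen for the concave envelope, monotonicity of $U$, and surjectivity of $\lam_i$---are routine, the only care being the measurability and integrability of $\bar{\lambda}$ (finiteness of $\int_t^T\lam_i(u_s^i)\,\dd s$) so that the barycenter indeed lies in $\Dc_{\lam_i}$.
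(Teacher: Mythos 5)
Your proposal follows the same route as the paper's own proof: keep the concave envelope, time-average the intensity, apply Jensen's inequality, and reduce to a constant control. Your Steps 0--1 reproduce the paper's argument almost verbatim (the paper picks $u_i^*$ with $\lam_i(u_i^*)=\frac{1}{T-t}\int_t^T\lam_i(u_s^i)\,\dd s$ and invokes concavity of $\hat f_i$ exactly as you do). The one place you genuinely diverge is Step 2: the paper simply calls $u_i^*$ ``a constant,'' which tacitly treats the admissible control as deterministic, whereas you correctly observe that for a progressively measurable $u$ the barycenter $\bar\lambda_i$ is a random variable, so passing to a genuine element of $\Ac$ needs a derandomization argument. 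That extra care is to your credit.

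However, the way you close that step over-claims, and this is a real gap. Constancy of each $P_t^i$ for $t<T$ yields, via a $\pi$--$\lambda$ argument, that each event $A_i$ \emph{separately} is independent of $\Fc_{T-}$; it does \emph{not} yield that the vector $(\Ib_{A_1},\dots,\Ib_{A_n})$ is \emph{jointly} independent of $\Fc_{T-}$, which is what your identity $\Eb[U(Z)\mid\Fc_{T-}]=\Phi(\bar\lambda)$ requires. Concretely, with $n=2$ let a fair coin $C\in\Fc_{T/2}$ decide whether $A_1=A_2$ (perfectly correlated, each of probability $1/2$) or $A_1=A_2^c$ (mutually exclusive, each of probability $1/2$); then $P_t^1\equiv P_t^2\equiv 1/2$ for all $t<T$, so Assumption \ref{assumption:con_de} holds, yet the conditional joint law of $(\Ib_{A_1},\Ib_{A_2})$ given $\Fc_{T-}$ depends on $C$, and your Step 2 equality fails. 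Indeed, for strictly concave $U$ an adapted strategy can exploit the revealed correlation (run a large hedged book in the exclusive regime, a small one in the correlated regime), so with such a rich filtration even the conclusion of the lemma is in jeopardy. The repair is the reading the paper implicitly uses: in the semi-static setting with continuous arrivals, the state $(X,P,Q)$ is deterministic given the control and nothing is revealed before $T$, so admissible controls are effectively deterministic, $\bar\lambda_i$ is a true constant, and no conditioning is needed. In short: same approach, Steps 0--1 are fine, but the ``upgrade to full independence'' is an additional modeling assumption rather than a consequence of $P_s\equiv p$ --- it is precisely the point that both your proof and the paper's must assume away, the paper by silence, yours by an assertion that is false for general filtrations.
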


\begin{proof}
	As the ``$\leq$'' is obvious, we proceed to show the converse inequality is also true.
	Let $u$ be any pricing policy adopted by the bookmaker. 
	As $\lam_i$ is continuous, there exists a constant $u_i^*$ such that 
	$\lambda_i(u_i^*)=\frac{1}{T-t}\int_t^T\lambda_i(u_s^i)\, \dd s$.
	By the concavity of $\hat f_i$, we obtain 
	\begin{align}
	\frac{1}{T-t}\int_t^T\hat{f}_i(\lambda_i(u_s^i))\, \dd s
	\leq\hat f_i\left(\frac{1}{T-t}\int_t^T\lambda_i(u_s^i)\, \dd s\right)=\hat f_i(\lambda_i(u_i^*)).
	\end{align}
	As a result, we have
	\begin{align}
	\Et  U(\hat Y_T^{u})   &\leq \Et U\left(X_t - \sum_{i=1}^n Q_t^i \cdot \Ib_{A_i} + (T-t)\sum_{i=1}^n\hat f_i(\lambda_i(u_i^*)) - (T-t)\sum_{i=1}^n\lambda_i(u_i^*) \Ib_{A_i}\right)  \\
	&\leq \sup_{\hat u \in \Ac}\Et U(Y_T^{\hat u}) .
	\end{align}
	Then the result follows from the arbitrariness of $u$.
\end{proof}

\begin{lemma}	
	\label{lemma:2}
	Let Assumption \ref{assumption:con_de} hold, we have
	\begin{align}
	V(t,x,p,q) \geq \sup_{\hat u \in \Ac} \, \Et  U \left(\hat Y_T^{\hat u} \right) .
	\end{align}
\end{lemma}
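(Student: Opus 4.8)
The statement to be proved is $V(t,x,p,q) \geq \sup_{\hat u \in \Ac} \Et U(\hat Y_T^{\hat u})$, where $V$ is built from the \emph{true} payoff function $f_i$ (see \eqref{eq:Y_def2}) while the right-hand side uses the concave envelope $\hat f_i \geq f_i$ (see \eqref{eq:Y_hat}). The conceptual difficulty is that this inequality runs \emph{against} the pointwise bound $\hat f_i \geq f_i$: one cannot simply feed a constant control $\hat u$ into the original problem, since that would only give $Y_T^{\hat u} \leq \hat Y_T^{\hat u}$. My plan is therefore to realize the concave-envelope value as a \emph{time average}: a constant control $\hat u$ producing rate $\lam_i^\ast := \lam_i(\hat u_i)$ and envelope payoff $\hat f_i(\lam_i^\ast)$ can be matched, in the true model, by a rapidly switching (``chattering'') deterministic control whose rates average to $\lam_i^\ast$ while the associated $f_i$-values average to $\hat f_i(\lam_i^\ast)$.

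Concretely, I would fix $\hat u \in \Ac$ and $\eps > 0$. By the characterization of $\hat f_i$ as the supremum of finite convex combinations of graph values of $f_i$ with prescribed barycenter, for each $i \in \Nb$ there exist points $\mu_{i,1},\dots,\mu_{i,m_i}$ in the range of $\lam_i$ and weights $\alpha_{i,k}\geq 0$ with $\sum_k \alpha_{i,k}=1$ such that $\sum_k \alpha_{i,k}\,\mu_{i,k} = \lam_i^\ast$ and $\sum_k \alpha_{i,k} f_i(\mu_{i,k}) \geq \hat f_i(\lam_i^\ast) - \eps$. Since $\lam_i$ is continuous and these $\mu_{i,k}$ lie in its range, I can pick prices $u_{i,k}\in[0,1]$ with $\lam_i(u_{i,k}) = \mu_{i,k}$. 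I then partition $[t,T]$ into consecutive subintervals of lengths $\alpha_{i,k}(T-t)$ and set $u_s^i := u_{i,k}$ on the $k$-th subinterval (done component-wise in $i$). This $u$ is deterministic and piecewise constant with values in $[0,1]^n$, hence $u \in \Acc(t,T)$.

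With this control the two time integrals in \eqref{eq:Y_def2} are computed exactly: $\int_t^T \lam_i(u_s^i)\,\dd s = (T-t)\lam_i^\ast$ and $\int_t^T f_i(\lam_i(u_s^i))\,\dd s = (T-t)\sum_k \alpha_{i,k} f_i(\mu_{i,k}) \geq (T-t)\big(\hat f_i(\lam_i^\ast)-\eps\big)$. Because the indicator terms $\sum_i (T-t)\lam_i^\ast \Ib_{A_i}$ coincide in $Y_T^u$ and in $\hat Y_T^{\hat u}$, this yields the pathwise bound $Y_T^u \geq \hat Y_T^{\hat u} - n(T-t)\eps$. As $U$ is increasing and $u$ is admissible, $V(t,x,p,q) \geq \Et U(Y_T^u) \geq \Et U\big(\hat Y_T^{\hat u} - n(T-t)\eps\big)$. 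Finally, since $P_t \equiv p$ is constant, $\hat Y_T^{\hat u}$ depends only on the indicators $\Ib_{A_i}$ and thus takes finitely many values, so $\Et U(\,\cdot\,)$ is a finite weighted sum; letting $\eps \to 0$ and using continuity of $U$ gives $V(t,x,p,q) \geq \Et U(\hat Y_T^{\hat u})$, and taking the supremum over $\hat u \in \Ac$ completes the argument.

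The main obstacle is the first half of the second paragraph: realizing the envelope value as a genuine (approximate) convex combination with an \emph{exact} barycenter constraint, and then converting that combination into an admissible control via a partition of $[t,T]$. Everything else — admissibility of the switching control, the exact evaluation of the integrals, and the passage $\eps \to 0$ (made trivial by the finite-valued structure of $\hat Y_T^{\hat u}$) — is routine once this realization is in place.
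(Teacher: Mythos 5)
Your proof is correct and takes essentially the same route as the paper's: both realize the concave-envelope payoff $\hat f_i(\lam_i(\hat u_i))$ via a piecewise-constant (``chattering'') control whose rates time-average exactly to $\lam_i(\hat u_i)$, so the indicator terms match pathwise while the $f_i$-integrals come within $\eps$ of the envelope value. The remaining differences are cosmetic: the paper works at an $\eps/2$-optimizer of $\hat V$ with a two-point (Carath\'eodory) combination and a small $\delta$-shift inside $U$, whereas you prove the bound for each fixed $\hat u \in \Ac$ with a general finite combination and pass $\eps \to 0$ at the end using continuity of $U$.
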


\begin{proof}
	For any $\eps>0$, let $u^\eps=(u_1^\eps,\dotso,u_n^\eps)$ be an $\eps/2$-optimizer for $\hat V$ in \eqref{eq:V1_hat}. Let $\delta>0$ be small enough so that 
	$\Et  \, U(\hat Y_T^{u^\eps}-\delta ) 
	> \Et  \, U (\hat Y_T^{u^\eps} )  -\eps/2
	> \hat V-\eps$.
	Next choose $v_i$, $w_i$, $\rho_i\in[0,1]$ for each $i \in \Nb$ so that the following conditions are met: 
	$\rho_i \cdot \lam_i(v_i) + (1 - \rho_i) \cdot \lam_i(w_i) = \lam_i(u_i^\eps)$
	and $\rho_i \cdot f_i( \lambda_i (v_i))+(1-\rho_i) \cdot f_i(\lambda_i(w_i)) > \hat f_i(\lambda_i(u_i^\eps))-\frac{\delta}{n (T-t)} .$
	We construct a pricing strategy $\tilde u$ by
	\begin{equation}
	\label{eq:eps_policy}
	\tilde u_s^i=v_i \cdot \Ib_{\{t\leq s<t+\rho_i(T-t)\}} + w_i \cdot \Ib_{\{t+\rho_i(T-t)\leq s\leq T\}}.
	\end{equation}
	It is easy to verify that $Y_T^{\tilde u}>\hat Y_T^{u^\eps}-\delta$,
	which in turn implies  
	$V(t,x,p,q) 
	\geq \Et  U \left(Y_T^{\tilde u} \right)
	\geq \Et \ U \left(\hat Y_T^{u^\eps}-\delta \right)  
	>\hat V-\eps$.
	Hence, the desired result follows.
\end{proof}

\begin{proof}[\textbf{Proof of Theorem \ref{thm:utility_cont_prob}}]
	The result follows from Lemmas \ref{lemma:1} and \ref{lemma:2} and the fact that $f_i\leq\hat f_i$.
\end{proof}

\begin{corollary}
	\label{cor:static}
	If $ \hat{u}^* = (\hat{u}^*_1,\hat{u}^*_2,\cdots, \hat{u}^*_n)$ is an optimizer for $\hat V$ and
	\begin{equation}
	\label{eq:f=f_hat}
	f_i(\lambda_i(\hat{u}_i^*))=\hat f_i(\lambda_i(\hat{u}_i^*)),\qquad i \in \Nb,
	\end{equation}
	then $ \hat{u}^*$ is an optimizer for $V$. 
	That is, a static strategy $u = (u_s)_{s \in [t,T]}$, where $u_s \equiv \hat{u}^*$, is an optimal strategy to Problem {\eqref{prime_prob}}.
	In general, the pricing policy $\tilde u$ defined in \eqref{eq:eps_policy} is an $\eps$-optimizer for $V$. 
\end{corollary}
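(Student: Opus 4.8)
The plan is to leverage the identity $V=\hat V$ established in Theorem~\ref{thm:utility_cont_prob}, so that proving optimality of the static strategy reduces to exhibiting a single admissible competitor whose objective value equals $\hat V$. First I would check admissibility: since $\hat u^*\in\Ac=[0,1]^n$ is a deterministic constant, the process $u_s\equiv\hat u^*$ on $[t,T)$ is trivially progressively measurable and $\Ac$-valued, so $u\in\Acc(t,T)$. Substituting this constant strategy into the true terminal wealth \eqref{eq:Y_def2} and noting that the integrands no longer depend on $s$, the time integrals collapse to a factor of $(T-t)$, giving, under $\Et$ (so that $X_t=x$ and $Q_t^i=q_i$),
\begin{align}
Y_T^u = x - \sum_{i=1}^n q_i\,\Ib_{A_i} + (T-t)\sum_{i=1}^n f_i(\lambda_i(\hat u_i^*)) - (T-t)\sum_{i=1}^n \lambda_i(\hat u_i^*)\,\Ib_{A_i}.
\end{align}

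Next I would invoke the hypothesis \eqref{eq:f=f_hat}, which lets me replace each $f_i(\lambda_i(\hat u_i^*))$ by $\hat f_i(\lambda_i(\hat u_i^*))$ without changing the value. The right-hand side then becomes precisely the argument of $U$ appearing in the definition \eqref{eq:V1_hat} of $\hat V$, evaluated at the optimizer $\hat u^*$. Applying $U$ and taking $\Et$ therefore yields $J(t,x,p,q;u)=\hat V$. Since $J(t,x,p,q;u)\le V$ by the definition of the supremum in \eqref{prime_prob} and $V=\hat V$ by Theorem~\ref{thm:utility_cont_prob}, I conclude $J(t,x,p,q;u)=V$, so the static strategy $u_s\equiv\hat u^*$ attains the value function and is optimal.

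For the $\eps$-optimizer assertion I would not repeat any computation but simply point back to the proof of Lemma~\ref{lemma:2}: the strategy $\tilde u$ constructed there from an $\eps/2$-optimizer of $\hat V$ via the two-price schedule \eqref{eq:eps_policy} was shown to satisfy $\Et U(Y_T^{\tilde u})>\hat V-\eps$, and since $\hat V=V$ this is exactly the statement that $\tilde u$ is an $\eps$-optimizer for $V$, with no further work required.

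The conceptual crux, rather than any genuine technical obstacle, is the role of condition \eqref{eq:f=f_hat}. In general the concave-envelope relaxation opens a strict gap $\hat f_i>f_i$ that no single constant price can realize and that is only approached in the limit by oscillating between two prices, which is precisely why the generic conclusion is merely the existence of an $\eps$-optimizer. Condition \eqref{eq:f=f_hat} says exactly that this gap closes at the optimal arrival rate $\lambda_i(\hat u_i^*)$, making the relaxation tight and allowing a genuine constant price to achieve the supremum. I would emphasize this as the heart of the argument; everything else is routine bookkeeping built directly on the two lemmas already proved.
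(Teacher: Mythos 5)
Your proof is correct and is essentially the argument the paper intends: the corollary is stated without a separate proof precisely because it follows, as you show, from plugging the constant strategy into \eqref{eq:Y_def2}, using \eqref{eq:f=f_hat} to identify $Y_T^u$ with $\hat Y_T^{\hat u^*}$, and invoking $V=\hat V$ from Theorem \ref{thm:utility_cont_prob}, with the $\eps$-optimality of $\tilde u$ read off from the proof of Lemma \ref{lemma:2}. Your closing remark on the role of \eqref{eq:f=f_hat} as closing the concave-envelope gap is also exactly the point the paper makes in Remark \ref{rem:coin_con}.
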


\begin{remark}
	\label{rem:coin_con}
	If $f_i$ is concave, \eqref{eq:f=f_hat} is satisfied. For example, this is the case if $\lam_i$ is given by \eqref{eq:lambda-1}. 
	However, if $\lam_i$ is given by \eqref{eq:lambda-2}, $f_i$ in general is not a concave function. In fact, we have 
	\begin{align}
	f''_i(x) < 0 \quad \text{ if } 0< x < -\frac{2}{\log p_i}, \qquad \text{and} \qquad 
	f''_i(x) > 0 \quad \text{ if } x > -\frac{2}{\log p_i}.
	\end{align}
\end{remark}

As mentioned in Remark \ref{rem:reduction}, Theorem \ref{thm:utility_cont_prob} allows us to reduce the target optimization problem into a much simpler static optimization one, but the general characterization (either explicit or numerical solutions) of the optimal constant vector $\hat{u}^*$ to $\hat{V}$ is still not available. 
Below we provide an example where the optimizer $\hat u^*$ is given by a system of equations. 
In the example, we choose the rate function to be given by \eqref{eq:lambda-1}, and, by Corollary \ref{cor:static} and Remark \ref{rem:coin_con}, the static strategy $u \equiv \hat u^*$ is an optimal strategy to the bookmaker's problem.

\begin{corollary}
\label{cor:exp}
	Let Assumption \ref{assumption:con_de} hold. Assume further that (i) the sets $(A_i)_{i \in \Nb}$ 
	form a partition of $\Omega$,
	(ii) the rate function $\lam_i$ is given by \eqref{eq:lambda-1}, and (iii)  $U$ is given by  
	$U(y) = - \ee^{-\gam y}$,  where $\gam > 0$
	Then the optimizer $\hat{u}^*$ of $\hat{V}$, defined in \eqref{eq:V1_hat}, solves the following equation  
	\begin{align}
	\label{eq:new}
	p_i \cdot \left( \frac{1}{ (\hat{u}^*_i)^2} - 1\right) \cdot g(t, p_i, q_i; \hat{u}^*_i)
	= \sum_{j \neq i} p_j \cdot g(t, p_j, q_j; \hat{u}^*_j),   && \forall \, i \in \Nb,
	\end{align}
	where function $g$ is defined by 
	\begin{align}
	g(t, p_i, q_i; u_i) := \exp \left( \gam q_i + \gam (T-t) \frac{p_i}{1 - p_i} \left( \frac{1}{u_i} - 1\right) \right), 
	\end{align}
	for all $t \in [0,T)$, $p_i \in (0,1)$, $q_i \in \Rb_+$, and $ u_i \in (0,1)$.
	Moreover, if we replace assumption (i) with (i') the sets $(A_i)_{i \in \Nb}$  are independent, then $\hat{u}^*$ solves the following equation
	\begin{align}
	\label{eq:1st_exm1}
	p_i \cdot \left( \frac{1}{ (\hat{u}^*_i)^2} - 1\right) \cdot \exp \left(\gam (T-t) \frac{p_i}{1 - p_i}  \left( \frac{1}{\hat{u}_i^*} - 1\right) \right)  = 1 - p_i, && \forall \, i \in \Nb .
	\end{align}
\end{corollary}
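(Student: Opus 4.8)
The plan is to invoke Theorem \ref{thm:utility_cont_prob}, which collapses the dynamic problem into the static maximization of $\hat V$ in \eqref{eq:V1_hat} over a constant vector $u \in \Ac$, and then to exploit the explicit forms of $\lam_i$ in \eqref{eq:lambda-1} and of $U$ to evaluate the objective in closed form and read off first-order conditions. First I would record the ingredients for \eqref{eq:lambda-1}: writing $a_i := p_i/(1-p_i)$, one has $\lam_i(u_i) = a_i(1/u_i - 1)$, hence $\lam_i^{-1}(x) = a_i/(x+a_i)$ and $f_i(x) = x\,\lam_i^{-1}(x) = a_i x/(x+a_i)$. Since $f_i''(x) = -2a_i^2/(x+a_i)^3 < 0$, the function $f_i$ is concave, so $\hat f_i = f_i$ and, by Corollary \ref{cor:static} together with Remark \ref{rem:coin_con}, the static strategy $u \equiv \hat u^*$ is genuinely optimal; it therefore suffices to optimize $\hat V$. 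The algebraic identity I would isolate here is $f_i(\lam_i(u_i)) = u_i \lam_i(u_i) = a_i(1-u_i)$, which renders the revenue integrand affine in $u_i$ and makes the exponential utility computable.

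Next, under assumption (i) I would use the partition property $\sum_k \Ib_{A_k} = 1$ and $\Pb(A_k \mid \Fc_t) = p_k$ to evaluate the objective in \eqref{eq:V1_hat} for a constant vector $u \in \Ac$. On the event $A_k$ the argument of $U$ equals $x - q_k + (T-t)\sum_i a_i(1-u_i) - (T-t)\lam_k(u_k)$, so with $U(y) = -\ee^{-\gam y}$ and the shorthand $g_j := g(t,p_j,q_j;u_j)$,
\begin{align}
\hat V &= -\ee^{-\gam x}\inf_{u\in\Ac} G(u), & G(u) &:= \ee^{-\gam(T-t)\sum_i a_i(1-u_i)}\sum_{k=1}^n p_k\,g_k.
\end{align}
Maximizing the utility thus amounts to minimizing $G$. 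Using $\d_{u_i}\bigl[\ee^{-\gam(T-t)\sum_i a_i(1-u_i)}\bigr] = \gam(T-t)a_i\,\ee^{-\gam(T-t)\sum_i a_i(1-u_i)}$ and $\d_{u_i}g_i = -\gam(T-t)a_i u_i^{-2} g_i$, the condition $\d_{u_i}G = 0$ reduces, after dividing by the strictly positive factor $\gam(T-t)a_i\,\ee^{-\gam(T-t)\sum_i a_i(1-u_i)}$, to $\sum_k p_k g_k = u_i^{-2} p_i g_i$; isolating the $i$-th summand rearranges this to \eqref{eq:new}.

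Under assumption (i') the only change is the expectation. Collecting the stochastic part of the argument of $U$ as $-\sum_i b_i \Ib_{A_i}$ with $b_i := q_i + (T-t)\lam_i(u_i)$, so that $\ee^{\gam b_i} = g_i$, independence of the indicators factorizes the expectation into
\begin{align}
\hat V &= -\ee^{-\gam x}\inf_{u\in\Ac}\Bigl(\ee^{-\gam(T-t)\sum_i a_i(1-u_i)}\prod_{i=1}^n\bigl(p_i g_i + (1-p_i)\bigr)\Bigr).
\end{align}
Taking logarithms decouples the problem across coordinates, and the scalar first-order condition in $u_i$ reads $u_i^{-2} p_i g_i = p_i g_i + (1-p_i)$, i.e. $p_i(u_i^{-2}-1)g_i = 1-p_i$; absent pre-existing bets ($q_i = 0$, so that $g_i = \ee^{\gam(T-t)a_i(1/u_i-1)}$) this is precisely \eqref{eq:1st_exm1}.

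The computations are routine once the identity $f_i(\lam_i(u_i)) = u_i\lam_i(u_i)$ is in hand; the step demanding the most care is to justify that the maximizer lies in the interior, so that the first-order conditions hold with equality rather than as one-sided boundary inequalities. I would settle this by inspecting the boundary: as $u_i \to 0^+$ one has $g_i \to \infty$ while every other factor remains bounded, forcing the objective to $-\infty$; and at $u_i = 1$ the relevant partial derivative is strictly positive, since $\sum_k p_k g_k > p_i g_i$ in the partition case and $1-p_i > 0$ in the independent case. Consequently each $\hat u_i^* \in (0,1)$, and the stationarity equations \eqref{eq:new} and \eqref{eq:1st_exm1} hold as asserted.
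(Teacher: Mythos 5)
Your proposal is correct and follows essentially the same route as the paper: the paper's own proof is a one-line remark that \eqref{eq:new} and \eqref{eq:1st_exm1} are the first-order conditions of the static problem delivered by Theorem \ref{thm:utility_cont_prob}, and your computation simply carries that out in full (concavity of $f_i$ so that $\hat f_i = f_i$, explicit evaluation of the exponential expectation under the partition and independence assumptions, and the interiority check that the paper omits). One substantive observation your derivation surfaces: under independence the exact first-order condition is $p_i\bigl(u_i^{-2}-1\bigr)g(t,p_i,q_i;u_i) = 1-p_i$ with the factor $\ee^{\gam q_i}$ present, so \eqref{eq:1st_exm1} as printed coincides with the true stationarity equation only when $q_i=0$; you handle this correctly by flagging the zero-inventory case, and this is consistent with the paper's subsequent case study, which uses \eqref{eq:1st_exm1} with no pre-existing bets and asserts that $\hat u^*_i$ depends on $p_i$ alone.
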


\begin{proof}
	Equations \eqref{eq:new} and	\eqref{eq:1st_exm1} are the first-order conditions of the corresponding optimization problems under the given assumptions. 
\end{proof}

\subsection{Case Studies}

In this subsection, we  apply the results from Corollary \ref{cor:exp} to study two cases, which correspond to the two different conditions (i) and (i').

\begin{figure}[ht]
	\centering
	\includegraphics[width=0.4 \textwidth]{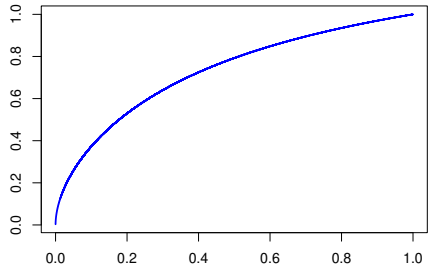}
	\\[-4ex]
	\caption{\small A plot of $\hat{u}^*$ given by \eqref{eq:1st_exm1} as a function of $p$ with $\gamma=2$ and $T-t=1$}
	\label{fig:opti}
\end{figure}

First, we study the case of independent sets $(A_i)_{i \in \Nb}$ in Corollary \ref{cor:exp}.  
According to \eqref{eq:1st_exm1}, for fixed $\gam$ and $T-t$, the optimal price $\hat u^*$ of a bet on  set $A_i$ is fully determined by the set probability $p_i$.	
In Figure \ref{fig:opti}, we plot $\hat{u}^*$ as a function of $p$, given  $\gamma=2$ and $T-t=1$.
As expected, the optimal price $\hat u^*$ is an increasing function of the outcome probability. 
In addition, $\hat{u}^*$ is a concave function of $p$, indicating that the bookmaker's optimal price is more sensitive to the changes of small probabilities. 
The economic explanation is that, the increment of bets on an event is more significant when its probability increases from 0.1 to 0.2 than that from 0.8 to 0.9. 
In other words, the changes at small probabilities cause more imbalance in the bookmaker's book than the same amount of changes but at large probabilities. 
Therefore, to counter such an effect, the bookmaker increases the price at a faster scale at small probabilities.

%\vspace{-4ex}
\begin{figure}[htb]
	\centering
	\includegraphics[width=0.85\textwidth]{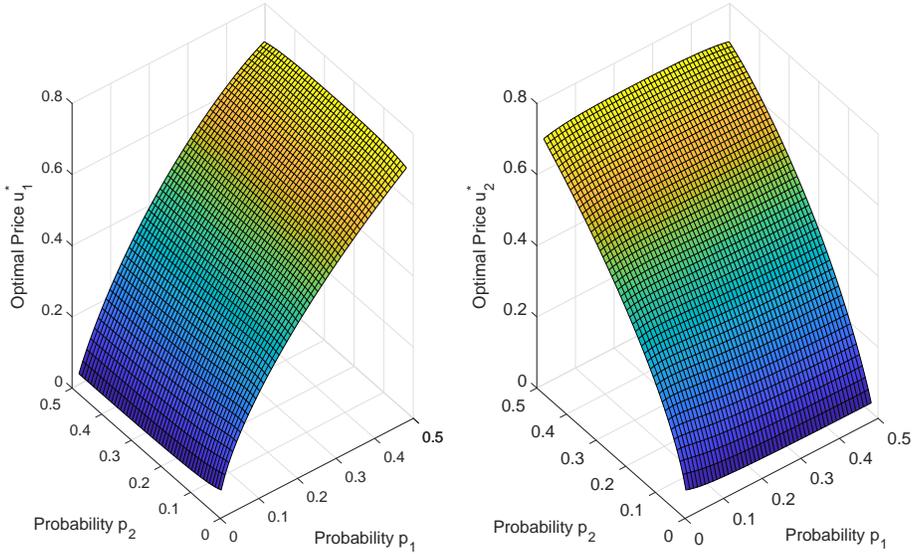}
	\\[-4ex]
	\caption{\small Surface plot of $\hat u_1^*$ and $\hat u_2^*$ with $\gamma=2$, $T - t = 5$ and $q=(0, 0, 0)$}
	\label{fig:cor_prob}
\end{figure}

\begin{figure}[htb]
	\centering  
	\includegraphics[height = 2.5in, width= 0.7 \textwidth]{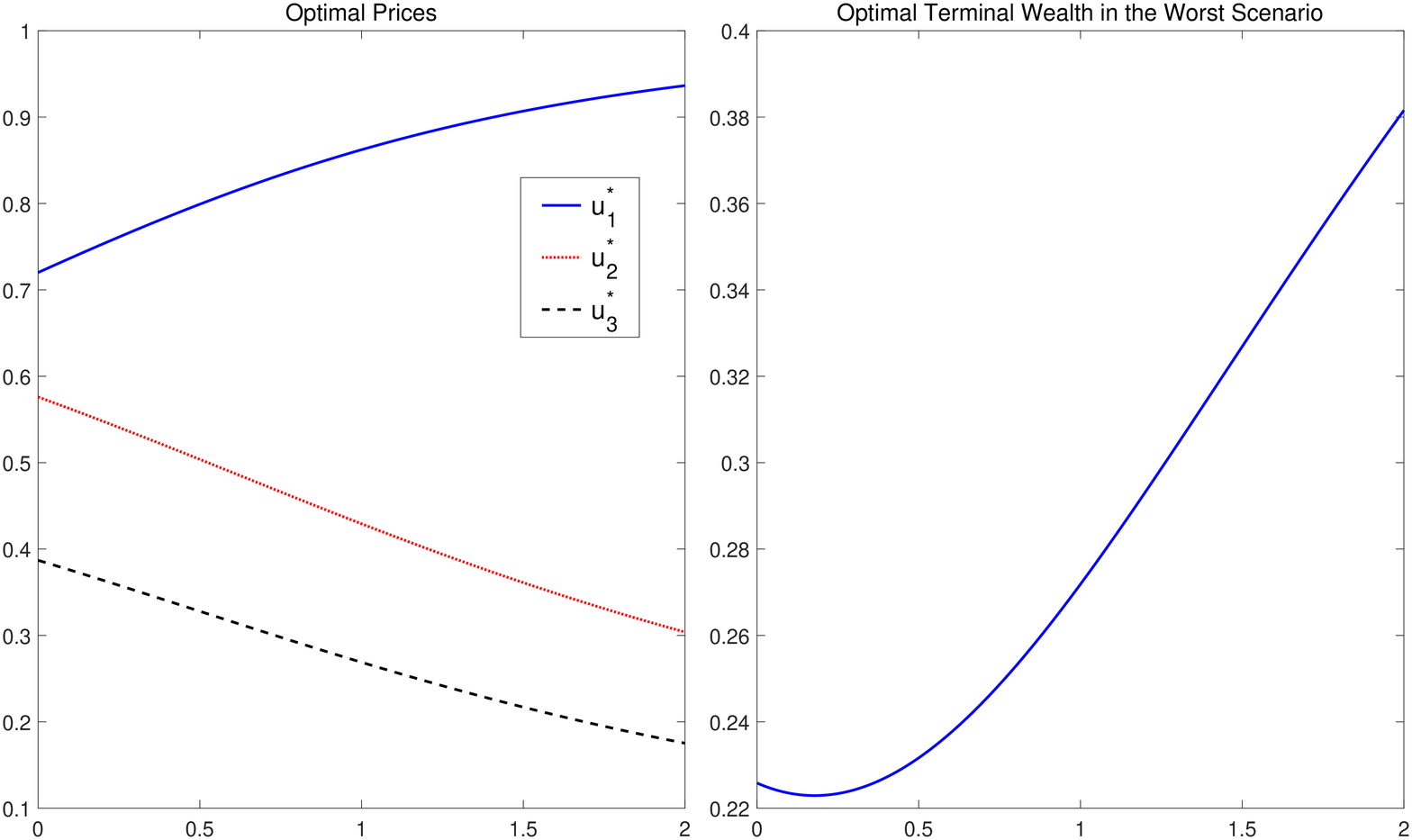}
	\caption{\small Left panel plots the optimal prices $u_i^* (= \hat u_i^*)$, $i=1,2,3$. 
			Right panel plots the bookmaker's terminal wealth after paying out the winning bets under the optimal price strategy $u^*$ in the worst scenario. In both panels, the x-axis is the current number of bets $q_1$ accumulated on outcome $A_1$, and the remaining model parameters are set as in \eqref{eq:para} with the initial wealth $x=0$.}
	\label{fig:cor_quan}
\end{figure}

Next we analyze the results of Corollary \ref{cor:exp} when the sets $(A_i)_{i \in \Nb}$ form a partition of $\Omega$. By Theorem \ref{thm:utility_cont_prob}, the optimizer $\hat u^*$ solves a static problem given in \eqref{eq:V1_hat} and is then an $n$-dimensional scalar vector independent of time $t$. It seems from \eqref{eq:new} that $\hat u^*$ may depend on $t$ through function $g$. 
However, noting the definition of $\lam_i$ in  \eqref{eq:lambda-1}, we have
\begin{align}
Q_s^{\hat u^*, i} - Q_t^{\hat u^*, i} &= \frac{p_i}{1 - p_i} \left(\frac{1}{\hat u_i^*} - 1 \right)(s-t) 
&\text{ and } & & g(t, p_i, Q_t^{\hat u^*, i}; \hat u_i^*) \equiv g(s, p_i, Q_s^{\hat u^*, i}; \hat u_i^*),
\end{align}	
which in turn shows $\hat u^*$ is indeed independent of time $t$. 
To solve \eqref{eq:new}, we consider an example with three mutually exclusive outcomes $A_1$, $A_2$ and $A_3$. We consider $p_1, p_2 \in (0, 0.5)$ and solve \eqref{eq:new} numerically given $\gamma = 2$, $T - t = 5$ and $q=(0, 0, 0)$. The surface plots of the optimal prices $\hat u_1^*$ and $\hat u_2^*$ are given in Figure \ref{fig:cor_prob}. For fixed $p_2$ (resp. $p_1$), the optimal price $\hat  u_1^*$ (resp. $\hat u_2^*$) is a strictly  increasing and concave function of the outcome probability $p_1$ (resp. $p_2$). 
From \eqref{eq:new}, with all other parameters fixed, we obtain 
$\frac{\partial \hat{u}_i^*}{\partial q_i} > 0.$
To visualize this result, we fix the model parameters as follows 
\begin{align}
\label{eq:para}
p_1 &= \frac{1}{2},  & p_2&= \frac{1}{3}, & p_3 &= \frac{1}{6}, & q_2&=q_3 =0, & \gam &= 2, & T-t &= 1,
\end{align}
but allow the current accumulated bets on outcome $A_1$ to vary from 0 to 2. Recall that $q_i$ is the current accumulated bets on outcome $A_i$ by the initial time.
We then plot the optimal prices $\hat  u_i^*$, where $i=1,2,3$, in the left panel of Figure \ref{fig:cor_quan}. 
It is readily seen that the optimal price $\hat u_1^*$ on outcome $A_1$ is an increasing function of its accumulated number of bets $q_1$. 
Such an increasing relation is consistent with our intuition that, if the bookmaker has received too many (few) bets on event $A_1$, the bookmaker should increase (decrease) the price on $A_1$ to balance the books. In fact, Figure \ref{fig:cor_quan} also reveals that the bookmaker should simultaneously decrease the prices on mutually exclusive outcomes $A_2$ and $A_3$, which further helps maintain a balanced book. The monotonic relation of $\hat u_i^*$ w.r.t.  probability $p_i$ in Figure \ref{fig:cor_quan} also confirms our findings in Figures \ref{fig:opti} and \ref{fig:cor_prob}. 
In the right panel of Figure \ref{fig:cor_quan}, we plot the bookmaker's terminal wealth (profit) after paying out the winning bets in the worst scenario (i.e., $\min \, Y_T^{\hat u^*} = X_T^{\hat u^*} - \max_{i=1,2,3} \, \{ Q_T^{\hat u^*,i}\}$) against the inventory $q_1$ on outcome $A_1$, with the initial wealth setting to $0$. Here, since outcome $A_1$ has the largest probability among the three outcomes, it also has the largest total number of bets by the terminal time $T$. In consequence, the worst scenario to the bookmaker is when outcome $A_1$ happens at time $T$, i.e., $\min Y_T^{\hat u^*} = X_T^{\hat u^*} - Q_T^{{\hat u^*},1}$. 
As shown in the right panel of Figure \ref{fig:cor_quan}, the bookmaker will pocket positive profits from the bettors in regardless of the final result of the gamble.

\section{Wealth Maximization}
\label{sec:wealth_max}

In this section, we consider Problem \eqref{prime_prob} when $U(y) = y$ (i.e., the objective of the bookmaker is to maximize his expected terminal wealth).
Henceforth, we shall refer to this problem as the \textit{wealth maximization problem}. 
We solve the wealth maximization problem using three different methods, and obtain the solutions to Problem \eqref{prime_prob} in  Theorem  \ref{thm:wealth_max} and Corollaries \ref{cor:method1}, \ref{wealth_dyna} and \ref{method3}.

\subsection{Method I}
The following theorem transforms the bookmaker's dynamic optimization problem into a static optimization problem.

\begin{theorem}
	\label{thm:wealth_max}
	Assume $U(y) = y$, and the bet arrival process $Q^{u,i}$ is given by either \eqref{eq:det_arrival} or \eqref{eq:Poi_arrival} for all $i \in \Nb$. Then we have
	\begin{align}
	V(t,x,p,q) 
	&= x - p \cdot q + \Eb_{t,x,p,q} \int_t^T \sup_{\hat u \in \Ac} \; \sum_{i=1}^n \; \lambda_i(P_s,\hat u)\, (\hat u_i-P_s^i) \, \dd s ,
	\label{eq:wealth_op}
	\end{align}
	where $p \cdot q = \sum\limits_{i=1}^n p_i q_i$.
\end{theorem}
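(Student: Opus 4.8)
The plan is to evaluate $J(t,x,p,q;u)=\Et\,Y_T^u$ in closed form for an arbitrary admissible $u$ and then optimize the resulting integrand pointwise in the control. Conditioning the representation \eqref{eq:Y_def} on $(X_t^u,P_t,Q_t^u)=(x,p,q)$ gives
\begin{align}
Y_T^u = x-\sum_{i=1}^n P_T^i q_i + \sum_{i=1}^n\int_t^T\big(u_s^i-P_T^i\big)\,\dd Q_s^{u,i}.
\end{align}
I would take expectations term by term. Since each $P^i$ is a $(\Pb,\Fb)$-martingale with $P_t^i=p_i$, the tower property yields $\Et P_T^i=p_i$, so $\Et\sum_i P_T^i q_i=\sum_i p_i q_i=p\cdot q$, which accounts for the $x-p\cdot q$ in \eqref{eq:wealth_op}.

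The crux is the stochastic-integral term, where the obstacle is that the integrand contains the \emph{terminal} probability $P_T^i$ rather than the current one $P_s^i$. Splitting $(u_s^i-P_T^i)\,\dd Q_s^{u,i}$ into the two pieces, the $u^i$-piece is routine: since $u^i$ is progressively measurable, in both arrival models the compensator identity gives $\Et\int_t^T u_s^i\,\dd Q_s^{u,i}=\Et\int_t^T u_s^i\,\lam_i(P_s,u_s)\,\dd s$. The key identity is
\begin{align}
\Et\int_t^T P_T^i\,\dd Q_s^{u,i}=\Et\int_t^T P_s^i\,\lam_i(P_s,u_s)\,\dd s.
\end{align}
Under the continuous-arrivals model \eqref{eq:det_arrival} I would write $\dd Q_s^{u,i}=\lam_i(P_s,u_s)\,\dd s$, apply Fubini, and for each fixed $s$ condition on $\Fc_s$: because $\lam_i(P_s,u_s)$ is $\Fc_s$-measurable and $\Eb[P_T^i\mid\Fc_s]=P_s^i$, the integrand $\Et[P_T^i\lam_i(P_s,u_s)]$ collapses to $\Et[P_s^i\lam_i(P_s,u_s)]$. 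Under the Poisson-arrivals model \eqref{eq:Poi_arrival} the integrand $P_T^i$ is constant in $s$ but not adapted, so I would instead use that $N^{u,i}$ has $\Fb$-compensator $\int_t^s\lam_i(P_r,u_r)\,\dd r$; integrating against a point process reduces to integrating the predictable projection of the integrand against the compensator, and the predictable projection of the (time-constant) random variable $P_T^i$ equals $\Eb[P_T^i\mid\Fc_{s-}]=P_{s-}^i=P_s^i$ for Lebesgue-a.e.\ $s$. Either way, collecting the two pieces gives
\begin{align}
J(t,x,p,q;u)=x-p\cdot q+\Et\int_t^T\sum_{i=1}^n\lam_i(P_s,u_s)\,(u_s^i-P_s^i)\,\dd s.
\end{align}

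It remains to pass the supremum over $u\in\Acc(t,T)$ through the expectation. Writing $\Phi(p,u):=\sum_i\lam_i(p,u)(u_i-p_i)$, the pathwise bound $\Phi(P_s,u_s)\le\sup_{\hat u\in\Ac}\Phi(P_s,\hat u)$ holds for every admissible $u$, so taking expectations and then the supremum gives ``$\le$'' in \eqref{eq:wealth_op}. For ``$\ge$'' I would select $u_s^\ast\in\argmax_{\hat u\in\Ac}\Phi(P_s,\hat u)$, which turns the integrand into the pointwise supremum and hence attains the bound once $u^\ast$ is shown to be admissible. I expect the measurable-selection step to be the main obstacle: one must produce a progressively measurable selector $s\mapsto u_s^\ast$, using compactness of $\Ac=[0,1]^n$ together with joint measurability (and continuity in $\hat u$) of $\Phi$, and if a clean selector is unavailable fall back on a family of $\eps$-optimal feedback controls as in the construction \eqref{eq:eps_policy}. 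Alongside this, one must impose or verify the integrability conditions that make the displayed integrals finite and justify the use of Fubini's theorem and the compensator identities.
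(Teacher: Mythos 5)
Your proposal is correct and follows essentially the same route as the paper's proof: reduce $\Et Y_T^u$ via the martingale property of $P^i$ and the compensator identities to $x - p\cdot q + \Et\int_t^T \sum_i \lam_i(P_s,u_s)(u_s^i - P_s^i)\,\dd s$, then interchange the supremum and the expectation by a measurable selection argument (the paper cites \cite{wagner1977survey} for exactly this step). The only difference is that you spell out details the paper leaves implicit — notably the predictable-projection treatment of the non-adapted integrand $P_T^i$ in the Poisson case and the $\eps$-optimal fallback — which strengthens rather than changes the argument.
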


\begin{proof}
	As $Q^{u,i}$ is given by either \eqref{eq:det_arrival} or \eqref{eq:Poi_arrival}, we have
	\begin{align}
	\Eb_t  \dd Q_t^{u,i} 
	&= \lam_i(P_t, u_t) \, \dd t,  & 
	\Eb_t  \Ib_{A_i}
	&= P_t^i,
	\end{align}
	and, as a result,
	\begin{align}
	V(t,x,p,q) 
	&= x - p \cdot q+ \sup_{u \in \Acc(t,T)} \Eb_{t,x,p,q}  \sum_{i=1}^n \left(\int_t^T u_s^i \, \dd Q_s^{u,i} -  Q_T^{u,i}\Ib_{A_i}\right)   \\
	&=  x - p \cdot q+ \sup_{u \in \Acc(t,T)} \Eb_{t,x,p,q}  \sum_{i=1}^n \left(\int_t^T \lam_i(P_s, u_s) \, (u_s^i  - P_s^i) \, \dd s \right)   \\
	&= x - p \cdot q+ \Eb_{t,x,p,q}  \int_t^T\sup_{\hat u \in \Ac} \; \sum_{i=1}^n \lambda_i(P_s,\hat u)\, (\hat u_i-P_s^i)\, \dd s , 
	\end{align}
	where the last equality can be shown by a measurable selection argument (see \cite{wagner1977survey}).
\end{proof}

\begin{remark}
	\label{rem:value}
	Due to the assumption of $U(y)=y$, the value function obtained in \eqref{eq:wealth_op} depends linearly on both the current wealth $x$ and the current books $q$. To be precise, we have
	\begin{align}
	\frac{\partial V(t,x,p,q)}{\partial x} &= 1 > 0  &\text{ and } & &
	\frac{\partial V(t,x,p,q)}{\partial q_i} &= - p_i < 0, \quad \forall \, i \in \Nb,
	\end{align}
which verify the derivatives conditions in Theorem \ref{thm:wealth_ver}.

For any $\eps > 0$, $u^\eps= (u^\eps_s(p))_{t \le s < T}=(u_s^{\eps,1}(p),\dotso,u_s^{\eps,n}(p))_{t \le s < T} $ measurably selected such that
	$$\sum_{i=1}^n \lambda_i(p,u_s^\eps(p))\cdot (u_s^{\eps,i}(p)-p_i)\geq\sup_{\hat u \in \Ac}\sum_{i=1}^n \lambda_i(p,\hat u)\,(\hat u_i-p_i)-\eps, \qquad \forall \, p \in \Ac$$
	is an $\eps$-optimizer of the value function $V$. 
\end{remark}

In light of Theorem \ref{thm:wealth_max}, we are able to reduce the complexity of the wealth maximization problem significantly.
As mentioned above, Theorem \ref{thm:wealth_max} allows us to transform a dynamic optimization problem (over $u \in \Acc(t,T)$) into a static optimization problem (over $\hat u \in \Ac$), and shows that the value function are the same for these two problems.
Under the general setup, the characterization in \eqref{eq:wealth_op} is not enough for us to obtain the optimal price process $u^*$ explicitly.
However, when the rate or intensity function $\lam_i$ is given by \eqref{eq:lambda-1} or \eqref{eq:lambda-2}, we are able to find $u^*$ in closed forms; see the corollary below.

\begin{corollary}
	\label{cor:method1}
	Assume $U(y) = y$, and the arrival process $Q^{u,i}$ is given by either \eqref{eq:det_arrival} or \eqref{eq:Poi_arrival} for all $i \in \Nb$.
	\begin{itemize}
		\item[(i)] 
		If the rate or intensity function $\lam_i$ is  given by \eqref{eq:lambda-1}, then the optimal price process to the wealth maximization problem is $u^*= (u^{1,*}_t, u^{2,*}_t, \cdots,u^{n,*}_t )_{t \in [0,T)}$, where $u_t^{i, *}$ is given by
		\begin{align}
		\label{eq:wealth_op_exm1_general}
		u^{i,*}_t = \sqrt{P_t^i}, && \forall\,  i \in \Nb.
		\end{align}
		If we assume further that $P_t \equiv p \in (0,1)^n$ for all $t \in [0,T)$, then the optimal price process $u^*$ is
		\begin{align}
		\label{eq:wealth_op_exm1}
		u^{i,*}_t \equiv \sqrt{p_i}, && \forall \,  i \in \Nb, 
		\end{align}
		and the value function $V$ is given by 
		\begin{align}
		\label{eq:dy_weal_value}
		V(t,x,p,q) = x - p \cdot q + (T-t) \, \sum_{i=1}^n \frac{p_i}{1-p_i} \left(1 - \sqrt{p_i}\right)^2.
		\end{align}

		\item[(ii)] If the rate or intensity function $\lam_i$ is  given by \eqref{eq:lambda-2}, 
		then the optimal price process to the wealth maximization problem is $u^*= (u^{1,*}_t, u^{2,*}_t, \cdots,u^{n,*}_t )_{t \in [0,T)}$, where  $u^{i,*}_t$ is the unique solution   on $(\ee^{-1},1)$ to the equation
		\begin{align}
		\label{eq:wealth_op_exm2}
		r(1+\log r)= P_t^i, && \forall \, i \in \Nb.
		\end{align}
	\end{itemize}
\end{corollary}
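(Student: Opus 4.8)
The plan is to build directly on Theorem \ref{thm:wealth_max}, whose representation of the value function reduces the entire problem to the pointwise (static) maximization
\[
\sup_{\hat u \in \Ac} \; \sum_{i=1}^n \lambda_i(P_s, \hat u)\,(\hat u_i - P_s^i).
\]
The crucial structural observation is that both rate/intensity functions \eqref{eq:lambda-1} and \eqref{eq:lambda-2} are of the separable form $\lambda_i(P_s, \hat u) = \lambda_i(P_s^i, \hat u_i)$, so the $i$-th summand depends only on $\hat u_i$. Since $\Ac = [0,1]^n$ is a product set, the supremum then decouples into $n$ independent one-dimensional problems $\sup_{\hat u_i \in [0,1]} \lambda_i(P_s^i, \hat u_i)\,(\hat u_i - P_s^i)$, each of which I would solve by elementary calculus. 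Because the resulting maximizers are explicit continuous functions of the adapted state $P_t^i$, they are automatically progressively measurable, so no separate measurable-selection argument (beyond the one already invoked in Theorem \ref{thm:wealth_max}) is needed to justify admissibility.

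For part (i), I would write $p = P_s^i$ and maximize $h(u) = \frac{p}{1-p}\frac{1-u}{u}(u-p)$ over $u \in (0,1)$. Checking the boundary behavior shows the product tends to $-\infty$ as $u \to 0^+$ and to $0$ as $u \to 1^-$, so the maximizer is interior; rewriting $h(u) = \frac{p}{1-p}\big[-u + (1+p) - \frac{p}{u}\big]$ and setting $h'(u)=0$ yields $u^2 = p$, hence $u = \sqrt{p}$, while $h''(u) < 0$ confirms this is the unique maximum. Replacing $p$ by the state $P_t^i$ gives the feedback law \eqref{eq:wealth_op_exm1_general}. When $P_t \equiv p$ is constant, I would substitute $u_i^* = \sqrt{p_i}$ back into the integrand, simplify $\frac{p_i}{1-p_i}\big(1 - 2\sqrt{p_i} + p_i\big) = \frac{p_i}{1-p_i}(1-\sqrt{p_i})^2$, note the integrand is constant in $s$, and integrate over $[t,T]$ to obtain \eqref{eq:dy_weal_value}.

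For part (ii), again with $p = P_s^i$, I would maximize $g(u) = \frac{\log u}{\log p}(u - p)$ over $u \in (0,1)$. The first-order condition $g'(u)=0$ simplifies, after multiplying through by $u$, to $u(1 + \log u) = p$, which is exactly \eqref{eq:wealth_op_exm2}. To identify the maximizer I would introduce $\psi(u) := 1 - \frac{p}{u} + \log u$, so that $g'(u)$ has the sign of $-\psi(u)$ because $\log p < 0$; since $\psi$ is strictly increasing with $\psi(0^+) = -\infty$ and $\psi(1) = 1 - p > 0$, the function $g$ increases and then decreases, giving a unique interior maximum at the root $r$ of $\psi$. Finally I would locate $r$ in $(\ee^{-1}, 1)$: evaluating $\psi(\ee^{-1}) = -p\,\ee < 0$ together with monotonicity of $\psi$ forces $r > \ee^{-1}$, and $r < 1$ is immediate, so the maximizer is the unique solution of \eqref{eq:wealth_op_exm2} on $(\ee^{-1}, 1)$.

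The only mildly delicate points are the boundary and second-order checks that distinguish a genuine maximum from a stray critical point, and in part (ii) the monotonicity argument for $\psi$ that pins the root down to $(\ee^{-1}, 1)$; everything else is routine single-variable calculus once the decoupling is in place. The main conceptual step, and the only one that really exploits the structure of the model, is recognizing that the separability of $\lambda_i$ turns the $n$-dimensional static optimization inherited from Theorem \ref{thm:wealth_max} into $n$ scalar problems with closed-form solutions.
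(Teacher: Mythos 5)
Your proposal is correct and follows essentially the same route as the paper: invoke Theorem \ref{thm:wealth_max} to reduce to the pointwise static problem, use the separability $\lambda_i(P_s,\hat u)=\lambda_i(P_s^i,\hat u_i)$ to decouple the supremum over $\Ac=[0,1]^n$ into $n$ scalar maximizations, and solve each by elementary calculus. The paper's proof merely records the per-coordinate argmax and says ``the rest follows naturally,'' so your write-up just fills in those calculus details; as a minor bonus, by retaining the negative factor $1/\log P_t^i$ in part (ii) you avoid the sign slip in the paper's displayed formula, where the stated $\argmax$ of $\log(\hat u_i)(\hat u_i - P_t^i)$ should really be an $\argmin$.
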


\begin{proof}
According to Theorem \ref{thm:wealth_max}, we have  
$u_t^{i,*} = \argmax_{\hat u_i \in [0,1]} \; \frac{1 - \hat{u}_i}{\hat{u}_i} \left(\hat{u}_i - P_t^i \right),$
when $\lam_i$ is given by \eqref{eq:lambda-1}; and 
$u_t^{i,*} = \argmax_{\hat u_i \in [0,1]} \; \log(\hat{u}_i) (\hat{u}_i - P_t^i),$
when $\lam_i$ is given by \eqref{eq:lambda-2}. The rest follows naturally.
\end{proof}

If $\lam_i$ is given by \eqref{eq:lambda-1} and $P_t \equiv p$, we obtain the value function in \eqref{eq:dy_weal_value}, from which we easily see $\frac{\partial V}{\partial x} = 1$ and $\frac{\partial V}{\partial q_i} = - p_i$ as in Remark \ref{rem:value}. In addition, we have
\begin{align}
\frac{\partial V}{\partial t} &= - \sum_{i=1}^n \frac{p_i}{1-p_i} \left(1 - \sqrt{p_i}\right)^2<0, & 
\frac{\partial V}{\partial p_i} &= - q_i - \frac{p_i + \sqrt{p_i} - 1}{ \left(1 + \sqrt{p_i}\right)^2}, \quad \forall \, i \in \Nb.
\end{align}
Due to $\frac{\partial V}{\partial t}<0$, we conclude that, all things being equal, it is always a better decision to start accepting bets earlier.
Straightforward calculus shows that $\varphi(p_i) := \frac{p_i + \sqrt{p_i} - 1}{ \left(1 + \sqrt{p_i}\right)^2}$  is an increasing function of $p_i$ and takes values in $ (-1, 1/4)$. 
When $p_i > ( (\sqrt{5}-1) / 2)^2 \approx 38\%$ or  when $q_i>1$, we have $\frac{\partial V}{\partial p_i} < 0$.

Let $\mathscr{U}: (0,1) \to (\ee^{-1},1)$ be the solution function to \eqref{eq:wealth_op_exm2}, namely, $u_t^{i, *} = \mathscr{U}(P_t^i)$. 
We easily verify that 
\begin{align}
0 < \frac{\d \mathscr{U}(z)}{\d z} = \frac{1}{2 + \log \mathscr{U}(z) } \in \left( \frac{1}{2}, 1\right) \qquad \text{and} \qquad 
\frac{\d^2 \mathscr{U}(z)}{\d z^2} = \dfrac{-1}{\mathscr{U}\big(2 + \log \mathscr{U}(z)  \big)^3} < 0.
\end{align}
Thus, the function $\mathscr{U}$ is increasing and concave.
%, see also Figure \ref{fig:log-sol}.
As such, the optimal price $u^{i, *}$ on set of outcomes $A_i$ increases when the conditional probability $P^i$ increases and the rate of increase is larger when $P^i$ is small.

\subsection{Method II (Dynamic Programming Method)}

Now, we use the PDE characterization of the bookmaker's value function (Theorem \ref{thm:wealth_ver}) to solve the wealth maximization problem under the continuous arrivals model \eqref{eq:det_arrival}.  We begin the analysis by establishing some analytical properties for the value function $V$.

\begin{proposition}
	\label{prop:det_value}
	Let $U(y)=y$ for all $y \in \Rb$.
	If the value function $V(t,x,p,q)$ is differentiable with respect to $t$, $x$ and $q$, we have 
	\begin{align}
	\d_t V(t,x,p,q) <0, \qquad \d_x V(t,x,p,q)=1, \qquad \text{ and } \qquad \d_{q_i} V(t,x,p,q) = -p_i, \; \forall \, i \in \Nb.
	\end{align}
\end{proposition}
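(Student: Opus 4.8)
The plan is to read off all three derivatives directly from the closed-form expression for the value function established in Theorem \ref{thm:wealth_max}. Under $U(y)=y$ and the continuous arrivals model \eqref{eq:det_arrival}, equation \eqref{eq:wealth_op} gives
\begin{align}
V(t,x,p,q)
&= x - p\cdot q + \Eb_{t,x,p,q}\int_t^T \sup_{\hat u \in \Ac}\;\sum_{i=1}^n \lam_i(P_s,\hat u)\,(\hat u_i - P_s^i)\,\dd s .
\end{align}
Since the integral term depends on $t$ and $p$ only through the future evolution of the (autonomous) process $P$ started from $P_t=p$, and does not depend on $x$ or $q$ at all, the claims for $\d_x V$ and $\d_{q_i} V$ are immediate. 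First I would observe that $x$ enters $V$ only through the leading additive term $x$, so $\d_x V = 1$; similarly $q$ enters only through $-p\cdot q = -\sum_i p_i q_i$, so $\d_{q_i} V = -p_i$ for each $i\in\Nb$. These two facts require no hypothesis beyond differentiability and confirm the sign conditions demanded by the verification Theorem \ref{thm:wealth_ver}, as already noted in Remark \ref{rem:value}.

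For the time derivative I would write $G(s,p) := \Eb\big[\sup_{\hat u\in\Ac}\sum_{i=1}^n \lam_i(P_s,\hat u)(\hat u_i - P_s^i)\big]$, so that the integral term is $\int_t^T G(s,p)\,\dd s$ when $P$ is started from $p$ at time $t$ (using that $P$ is a time-homogeneous Markov process with generator $\Mc$). Differentiating in $t$, the fundamental theorem of calculus contributes $-G(t,p)$ from the lower limit, together with a term coming from the $t$-dependence of the law of $(P_s)_{s\ge t}$; but the cleanest route is to note that the inner supremum is pointwise nonnegative, since taking the fair price $\hat u_i = P_s^i$ yields summand value $0$, so $\sup_{\hat u}\sum_i \lam_i(P_s,\hat u)(\hat u_i - P_s^i)\ge 0$ and is strictly positive whenever a profitable price exists. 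The key step is therefore to argue that the integrand is strictly positive in expectation, which forces $\d_t V < 0$ because shrinking the integration window $[t,T]$ strictly decreases the accumulated profit.

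The main obstacle is handling the $t$-dependence of $V$ rigorously: $t$ appears both as the lower limit of integration and inside the conditional expectation that fixes the initial value $P_t=p$. I would resolve this by using the Markov property to rewrite the integral as $\Eb\big[\int_0^{T-t} \sup_{\hat u}\sum_i \lam_i(P_r^{p},\hat u)(\hat u_i - (P_r^{p})^i)\,\dd r\big]$, where $P^p$ denotes the process started at $p$, so that the only $t$-dependence is through the horizon $T-t$. Differentiating then yields $\d_t V = -\,\Eb\big[\sup_{\hat u\in\Ac}\sum_{i=1}^n \lam_i(P_{T-t}^{p},\hat u)\big((\hat u_i)-(P_{T-t}^{p})^i\big)\big]$, which is strictly negative by the positivity argument above, provided the supremum is strictly positive with positive probability (guaranteed as soon as some $\lam_i$ is not identically zero, which holds for the rate functions under consideration). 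I expect the positivity of the inner supremum to be the crux, while the differentiation and the identification of $\d_x V$ and $\d_{q_i} V$ are routine once the additive structure of \eqref{eq:wealth_op} is exploited.
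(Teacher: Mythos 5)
Your identification of $\d_x V = 1$ and $\d_{q_i} V = -p_i$ is correct and is essentially the paper's own argument: the paper computes $\Eb_{t,x,p,q} Y_T^u$ directly from \eqref{eq:Y_def}, using the martingale property $\Eb_t P_T^i = p_i$, to get the decomposition $x - p\cdot q + (\text{profit term independent of } x,q)$ for every admissible $u$, and then reads off the two derivatives; you reach the same decomposition by citing \eqref{eq:wealth_op}, which is the same computation packaged in Theorem \ref{thm:wealth_max}. Likewise, your observation that the inner supremum is nonnegative (take the fair price $\hat u_i = P_s^i$) and strictly positive for the rate functions under consideration is the counterpart of the paper's assertion that $u^{i,*}_s - P^i_s > 0$.

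The gap is in your treatment of $\d_t V < 0$: you rewrite the profit term as $\Eb\bigl[\int_0^{T-t} \sup_{\hat u} \sum_i \lam_i(P_r^p,\hat u)(\hat u_i - (P_r^p)^i)\,\dd r\bigr]$ ``using that $P$ is a time-homogeneous Markov process with generator $\Mc$,'' so that $t$ enters only through the horizon $T-t$. In this framework $P$ is generally \emph{not} time-homogeneous: it is a conditional-probability martingale pinned to $\Ib_{A_i}$ at the fixed terminal date $T$, and its dynamics typically involve $T-t$ explicitly — see Example \ref{ex:n-goals}, where $\dd P_t^j = \Ib_{\{P_{t-}^j<1\}}\ee^{-\mu(T-t)}(\dd N_t^\mu - \mu\,\dd t)$, and Example \ref{ex:n-points}, where $\dd P_t^i = \Phi'(\Phi^{-1}(P_t^i))/\sqrt{T-t}\,\dd W_t$. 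Consequently, starting the process at the same point $p$ at a later time changes the \emph{law} of the future path, not merely the length of the integration window, and positivity of the integrand alone does not yield monotonicity in $t$: writing $g(t,p)$ for the profit term, the dynamic programming identity gives $g(t,p) = \Eb_{t,p}\int_t^{t'}(\cdots)\dd s + \Eb_{t,p}[g(t',P_{t'})]$, and the comparison of $\Eb_{t,p}[g(t',P_{t'})]$ with $g(t',p)$ is exactly what is missing (e.g., a $P$ that diffuses violently toward $\{0,1\}$ early and calmly later can make a later start at $p$ more valuable). To be fair, the paper's own proof disposes of $\d_t V<0$ with ``the desired results are then obvious,'' so your proposal mirrors its informality; but your specific reduction is the step that would fail in general, whereas it is valid precisely in the settings where the paper actually verifies the sign — constant conditional probabilities (cf. \eqref{eq:dy_weal_value}) or a genuinely time-homogeneous $P$ — and your write-up should state that restriction rather than attribute time-homogeneity to the general framework.
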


\begin{proof}
	Recalling the definition of $Y_T^u$ in \eqref{eq:Y_def},
	for any price process $u \in \Acc(t,T)$,  we obtain 
	\begin{align}
	\Eb_{t,x,p,q} Y_T^u &= \Eb_{t,x,p,q} \left[x + \sum_{i=1}^n \int_t^T u_s^i \dd Q_s^{u,i} 
	- \sum_{i=1}^n P_T^i q_i - \sum_{i=1}^n  P_T^i \int_t^T \dd Q_s^{u,i} \right] \\
	&= x - \sum_{i=1}^{n} p_i q_i + \Eb_{t,x,p,q} \sum_{i=1}^n \int_t^T (u_s^i - P_s^i ) \lam(P_s^i, u_s^i) \dd s . \label{eq:wealth_obj}
	\end{align}
	It is clear that, for the optimal price process $u^*$, we have $u_s^{i,*} - P_s^i > 0$ for all $s \in [t, T]$ and $i \in \Nb$. The desired results are then obvious.
\end{proof}

We now present the explicit solutions to the wealth maximization problem.

\begin{corollary}
	\label{wealth_dyna}
	Assume $U(y)=y$ and the bets arrive according to the continuous arrivals model \eqref{eq:det_arrival}. 
	(i) If the rate function $\lam_i$ is given by \eqref{eq:lambda-1}, then the optimal price process $u^*$ is given by \eqref{eq:wealth_op_exm1_general}. 
	(ii) If the rate function $\lam_i$ is given by \eqref{eq:lambda-2}, then the optimal price process $u^*$ is the solution to \eqref{eq:wealth_op_exm2}
\end{corollary}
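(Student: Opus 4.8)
The plan is to recognize that Corollary \ref{wealth_dyna} is essentially a special case of the already-established Corollary \ref{cor:method1}, specialized to the continuous arrivals model \eqref{eq:det_arrival}. Since Theorem \ref{thm:wealth_max} and Corollary \ref{cor:method1} were proved for \emph{either} arrival model, the optimal price formulas \eqref{eq:wealth_op_exm1_general} and \eqref{eq:wealth_op_exm2} already apply here. However, because this section is devoted to the dynamic programming (HJB/verification) approach, I would give an independent proof via Theorem \ref{thm:wealth_ver} to demonstrate consistency of the two methods and to exploit the structural information from Proposition \ref{prop:det_value}.

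First I would make an ansatz for the value function suggested by Proposition \ref{prop:det_value}, namely $V(t,x,p,q) = x - p \cdot q + \phi(t,p)$ for some function $\phi$ independent of $x$ and $q$, reflecting $\d_x V = 1$ and $\d_{q_i} V = -p_i$. Substituting this into the HJB equation \eqref{eq:HJB} with $\Lc^{\hat u}$ given by the continuous arrivals generator \eqref{eq:Lc_det}, and using that $\Mc(p \cdot q) = 0$ since $p$ is a martingale coordinate and the terminal payoff is linear, the supremum term decouples across $i$. Specifically, plugging $\d_x V = 1$ and $\d_{q_i} V = -P^i$ into $\sum_i \lam_i(\hat u)(\hat u_i \d_x V + \d_{q_i} V)$ reduces the pointwise maximization to $\sup_{\hat u_i} \lam_i(\hat u_i)(\hat u_i - P_t^i)$ for each coordinate separately.

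Next I would solve these one-dimensional maximizations. For case (i) with $\lam_i$ given by \eqref{eq:lambda-1}, maximizing $\frac{p_i}{1-p_i}\frac{1-\hat u_i}{\hat u_i}(\hat u_i - P_t^i)$ over $\hat u_i \in (0,1)$ leads, via the first-order condition, to $\hat u_i = \sqrt{P_t^i}$, recovering \eqref{eq:wealth_op_exm1_general}. For case (ii) with $\lam_i$ given by \eqref{eq:lambda-2}, maximizing $\log(\hat u_i)(\hat u_i - P_t^i)$ yields the first-order condition $r(1 + \log r) = P_t^i$, which is \eqref{eq:wealth_op_exm2}; I would note the solution lies in $(\ee^{-1},1)$ and is unique there since the left-hand side is strictly increasing on that interval. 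With the optimizer in hand, $\phi$ is determined by integrating the resulting ODE $\d_t \phi + \sup_{\hat u}\sum_i \lam_i(\hat u_i)(\hat u_i - P_t^i) = 0$ with terminal condition $\phi(T,p) = 0$, and I would verify the candidate satisfies the differentiability and monotonicity hypotheses $\d_x V > 0$, $\d_{q_i} V < 0$ required by Theorem \ref{thm:wealth_ver}.

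The main obstacle is purely technical rather than conceptual: one must confirm that the pointwise optimizer genuinely attains the supremum in the HJB equation and that the decoupling is legitimate, i.e., that the cross-terms involving $\Mc$ vanish or are absorbed into $\phi(t,p)$ without coupling the coordinate maximizations. Since the payoff and the ansatz are linear in $x$ and $q$, and $\lam_i$ here depends only on $(P^i, u^i)$, this decoupling is clean. The only delicate point is checking the second-order (concavity) condition at the critical point to ensure it is a maximizer rather than a saddle, which for \eqref{eq:lambda-1} follows from the strict concavity of $\hat u_i \mapsto \frac{1-\hat u_i}{\hat u_i}(\hat u_i - P_t^i)$ in the relevant range, and for \eqref{eq:lambda-2} from monotonicity of $r \mapsto r(1+\log r)$.
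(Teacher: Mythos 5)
Your proposal is correct and takes essentially the same approach as the paper: both arguments rest on the dynamic-programming characterization of Theorem \ref{thm:wealth_ver} combined with the derivative information $\d_x V = 1$ and $\d_{q_i} V = -p_i$ from Proposition \ref{prop:det_value}, which decouples the Hamiltonian maximization into the one-dimensional problems $\sup_{\hat u_i}\lam_i(\hat u_i)(\hat u_i - P_s^i)$ whose first-order conditions yield \eqref{eq:wealth_op_exm1_general} and \eqref{eq:wealth_op_exm2}. The only cosmetic difference is that you package these facts as an explicit ansatz $V = x - p\cdot q + \phi(t,p)$ and run a verification (your added second-order and uniqueness checks on $(\ee^{-1},1)$ are sound), whereas the paper substitutes the derivatives of $V$ directly into the first-order condition \eqref{eq:wealth_u1}.
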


\begin{proof}
(i) Under the continuous arrivals model \eqref{eq:det_arrival} with rate function \eqref{eq:lambda-1}, we derive from \eqref{eq:wealth_u1} that   
$\d_x V -\left(u_s^{i,*} \right)^{-2} \cdot \d_{q_i} V = 0$, $\forall \, i \in \Nb$.
	Together with the results from Proposition  \ref{prop:det_value}, we obtain, for all $s \in [t,T)$, that 
	$u_s^{i,*} = \sqrt{ -\d_{q_i} V / \d_x V} \equiv \sqrt{P_s^i} \in (0,1)$, $\forall \, i \in \Nb$.
(ii) Under the continuous arrivals model \eqref{eq:det_arrival} with rate function \eqref{eq:lambda-2}, we derive from \eqref{eq:wealth_u1} that 
$(1 + \log u_t^{i,*} ) \, \d_x V + \d_{q_i} V / u_t^{i,*}=0$, $\forall \, i \in \Nb$.
The desired results are then obtained.
\end{proof}

\subsection{Method III}

Lastly, we use the results from Theorem \ref{thm:utility_cont_prob} to solve the wealth maximization problem.

\begin{corollary} 
\label{method3}
	Let Assumption  \ref{assumption:con_de} hold and suppose $U(y) = y$. If an interior optimizer $\Lam^*$ in Theorem \ref{thm:utility_cont_prob} exists, then we have
	\begin{align}
	\hat{f}_i'(\Lam_i^*) = p_i, && \forall \, i \in \Nb.
	\end{align} 
	In particular, if $\lam_i$ is given by \eqref{eq:lambda-1},  we obtain
	\begin{align}
	\label{eq:uti_wealth_op}
	\Lam_i^* = \frac{p_i}{1 - p_i} \left( \frac{1}{\sqrt{p_i}} - 1\right) \qquad \text{and} \qquad 
	\hat{u}^*_i = \sqrt{p_i} , && \forall \, i \in \Nb 
	\end{align}
	and if $\lam_i$ is given by \eqref{eq:lambda-2}, then
	\begin{align}
	\label{eq:uhat_lam2}
	p_i^{\Lam_i^*} \left(1 + \Lam_i^* \log p_i \right) = p_i \qquad \text{and} \qquad 
	\hat{u}^*_i (1 + \log \hat{u}^*_i) = p_i, && \forall \, i \in \Nb.
	\end{align}
	
\end{corollary}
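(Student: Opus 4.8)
The plan is to start from the reformulated representation of the value function in Theorem~\ref{thm:utility_cont_prob} (the one written in terms of $\Lam \in \Dc_\lam$), specialize it to $U(y)=y$, and exploit the separable structure of the resulting static problem. Substituting the identity utility and using that the probabilities are frozen at $P_t\equiv p$, so that $\Et \Ib_{A_i} = p_i$, linearity of expectation decouples the deterministic $\Lam$-dependent part from the random indicators, giving
\begin{align}
V(t,x,p,q) = x - p\cdot q + (T-t)\,\sup_{\Lam \in \Dc_\lam} \sum_{i=1}^n \big[\hat{f}_i(\Lambda_i) - p_i \Lambda_i\big].
\end{align}
The key observation is that this supremum separates across the index $i$, so one optimizes $\hat{f}_i(\Lambda_i) - p_i\Lambda_i$ over $\Lambda_i \in \Dc_{\lam_i}$ independently. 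Since $\hat{f}_i$ is a concave envelope, each summand is concave in $\Lambda_i$, and an interior maximizer $\Lam_i^*$ is characterized by the first-order condition $\hat{f}_i'(\Lam_i^*) = p_i$, which is the general assertion.

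For the two explicit cases I would compute $f_i$ from \eqref{eq:f} by inverting the rate function. Inverting \eqref{eq:lambda-1} gives $\lam_i^{-1}(\Lambda) = p_i/[p_i + (1-p_i)\Lambda]$, hence $f_i(\Lambda) = p_i\Lambda/[p_i+(1-p_i)\Lambda]$, which is concave by Remark~\ref{rem:coin_con} so that $\hat{f}_i = f_i$. Solving $f_i'(\Lam_i^*) = p_i$ reduces to $[p_i+(1-p_i)\Lam_i^*]^2 = p_i$, which yields the stated $\Lam_i^*$ together with $\hat{u}_i^* = \lam_i^{-1}(\Lam_i^*) = \sqrt{p_i}$. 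Inverting \eqref{eq:lambda-2} gives $\lam_i^{-1}(\Lambda) = p_i^{\Lambda}$, hence $f_i(\Lambda) = \Lambda\, p_i^{\Lambda}$ with $f_i'(\Lambda) = p_i^{\Lambda}(1+\Lambda\log p_i)$; the first-order condition then reads $p_i^{\Lam_i^*}(1+\Lam_i^*\log p_i)=p_i$, and substituting the relation $\Lam_i^* = \lam_i(\hat{u}_i^*) = (\log \hat{u}_i^*)/\log p_i$ converts this into $\hat{u}_i^*(1+\log\hat{u}_i^*) = p_i$.

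The main delicacy I expect to address is the gap between $\hat{f}_i$ and $f_i$: the first-order condition genuinely involves $\hat{f}_i'$, yet in deriving the explicit equations I use $f_i'$. The resolution is that an interior maximizer of the concave map $\Lambda_i \mapsto \hat{f}_i(\Lambda_i) - p_i\Lambda_i$ must lie where $\hat{f}_i = f_i$, because on any open interval where the envelope strictly dominates $f_i$ the function $\hat{f}_i$ is affine, so the objective is affine there and admits no interior maximum unless its slope vanishes identically. Hence $\hat{f}_i' = f_i'$ at $\Lam_i^*$, which legitimizes using $f_i'$ in both examples. This point is automatic for case \eqref{eq:lambda-1}, where $f_i$ is globally concave, but must be invoked explicitly for case \eqref{eq:lambda-2}, where $f_i$ fails to be concave (Remark~\ref{rem:coin_con}); there it dovetails with condition \eqref{eq:f=f_hat} of Corollary~\ref{cor:static}, guaranteeing that the optimizer sits on the part of the graph where $f_i$ coincides with its envelope.
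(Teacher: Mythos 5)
Your proposal is correct and follows essentially the same route as the paper: specialize Theorem \ref{thm:utility_cont_prob} to $U(y)=y$ to get the separable static problem and the first-order condition $\hat{f}_i'(\Lam_i^*)=p_i$, then use $\hat{f}_i=f_i$ for \eqref{eq:lambda-1} (concavity, as in Remark \ref{rem:coin_con}) and the coincidence of $\hat{f}_i'$ with $f_i'$ on $\bigl(0,-1/\log p_i\bigr)$ for \eqref{eq:lambda-2}. Your affine-segment argument for why an interior optimizer must lie in the coincidence region is a clean justification of the step the paper settles by directly computing $\hat{f}_i'(x)=f_i'(x)>0$ for $0<x<-1/\log p_i$.
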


\begin{proof}
	The first general result is immediate thanks to Theorem \ref{thm:utility_cont_prob} and the assumption of $U(y) = y$. As pointed out in Remark \ref{rem:coin_con}, 
	if $\lam_i$ is given by \eqref{eq:lambda-1}, we have $f_i = \hat{f}_i$.
	If $\lam_i$ is given by \eqref{eq:lambda-2}, we calculate $\hat{f}'_i(x) = f_i'(x) >0$ for $0 < x < - \frac{1}{\log p_i}$.
\end{proof}

\subsection{Comparison of the Three Methods}

In terms of the model generality, Theorem \ref{thm:wealth_max} obtained via Method I is the most general one, since (i) the conditional probabilities $P$ can be modeled by any stochastic process taking values in (0,1) and (ii) the arrival process is given by either the continuous arrival model \eqref{eq:det_arrival} or the Poisson arrivals model \eqref{eq:Poi_arrival}. 
Corollary \ref{wealth_dyna} of Method II holds when the arrival process is given by the continuous arrival model \eqref{eq:det_arrival}. 
Corollary \ref{method3} of Method III is further restricted to $P$ being constants.

%\item 
In terms of application scope, Method I is the most restricted one, as it only applies to the wealth maximization problem. Both Methods II and III are developed to solve the main problem for a general function $U$, and hence can be applied to solve concave utility maximization problem (see for instance Corollary \ref{cor:exp}).

%\item 
Both Theorem \ref{thm:wealth_max} of Method I and Theorem \ref{thm:utility_cont_prob} of Method III provide an explicit characterization to the value function, but do not provide an explicit solution to the optimal price process. Method II (dynamic programming method) provides characterizations to both the value function and the optimal price process.  However, solving the HJB equation \eqref{eq:HJB} in Method II  is often a challenging task.

%\item 
From a computational point of view, the static optimization problem in Method I (see \eqref{eq:wealth_op}) or in Method III (see \eqref{eq:V1_hat}) is easy to solve, while finding the numerical solutions to the HJB equation \eqref{eq:HJB} under the feedback strategy \eqref{eq:wealth_u1} may be difficult computationally. 
%\end{enumerate}	

%\end{remark}

\subsection{Numerical Analysis of a Coin Example}
\label{subsec:prob}

In this subsection, we consider a simple coin tossing example, which has two mutually exclusive outcomes $A_1 = \{\text{Heads}\}$ and $A_2 = \{\text{Tails}\}$. Suppose $P_t^1 \equiv \hat p \in (0,1)$ for all $t \in [0,T)$, and then $P_t^2 \equiv 1 - \hat p$. 
We carry out the analysis based on the two models of the arrival process $Q^{u^*,i}$ -- the continuous arrivals model \eqref{eq:det_arrival} and the Poisson arrivals model \eqref{eq:Poi_arrival}. 
The rate or intensity function $\lam_i$ is given by \eqref{eq:lambda-1}. 
In addition, we shall assume that the bookmaker has taken zero bets at time $t=0$ and has zero initial wealth:
\begin{align}
P_t
	&=	(\hat p, 1 - \hat p) , & 
\forall \, t
	&< T , &
X_0
	&=	0 , &
Q_0^i
	&=	0 , &
\forall \, i
	&\in \Nb . \label{eq:ICs}
\end{align}
Under these conditions, the bookmaker's optimal policy $u^*$ is given by \eqref{eq:wealth_op_exm1}.
In what follows, we analyze the probability that the bookmaker makes profits when he follows the optimal price process, denoted by $\Pb(Y_T^* >0)$.
\\[0.5em]
\textbf{Case 1:} Continuous arrivals model \eqref{eq:det_arrival} with $\lam_i$ given by \eqref{eq:lambda-1}.
In this case, the bookmaker's terminal wealth $Y^*_T$ is given by 
	\begin{align}
	Y^*_T (\text{Heads}) = \psi_1(\hat p) \cdot T \qquad \text{and} \qquad Y^*_T (\text{Tails}) = \psi_1(1-\hat p) \cdot T,
	\end{align}
	where function $\psi_1$ is defined by 
	\begin{align}
	\label{eq:psi_1}
	\psi_1(\hat{p}) 
		&:= \frac{\hat{p}}{1 - \hat{p}} \left( 2 - \sqrt{\hat{p}} - \frac{1}{\sqrt{\hat{p}}} \right) + \frac{1-\hat{p}}{\hat{p}} (1 - \sqrt{1-\hat{p}}), \qquad \hat{p} \in (0,1).
	\end{align}
	%As seen in Figure \ref{fig:psi},  
	It can be shown that $\psi_1$ is a decreasing function over $(0,1)$ with  
	$\lim_{\hat{p} \to 0} \psi_1(\hat{p}) = \frac{1}{2} $ and $\lim_{\hat{p} \to 1} \psi_1(\hat{p}) = 0$.
	%\begin{align}
	%\lim_{\hat{p} \to 0} \psi_1(\hat{p}) = \frac{1}{2} \qquad \text{and} \qquad 
	%\lim_{\hat{p} \to 1} \psi_1(\hat{p}) = 0.
	%\end{align}
Hence, regardless of the probability that the coin toss results in a head, the bookmaker is guaranteed to make a profit by following the optimal price policy given by \eqref{eq:wealth_op_exm1}. If $\hat p=0.5$ (the coin is fair), the bookmaker's profit is a constant given by
	\begin{align}
	Y_T^* = \psi_1(0.5) \cdot T \approx 0.171573 \cdot T.
	\end{align}	
		
	%\begin{figure}[htb]
	%	\centering
	%	\vspace{-5ex}
	%	\includegraphics[width=0.4\textwidth]{./codes/psi_plot}
	%	\\[-5ex]
	%	\caption{\small Graph of $\psi_1(\hat p)$ given by \eqref{eq:psi_1} over $\hat p \in (0,1)$}
	%	\label{fig:psi}
	%\end{figure}	

\noindent
\textbf{Case 2.} Poisson arrivals model \eqref{eq:Poi_arrival} with $\lam_i$ given by \eqref{eq:lambda-1}.
%\\[0.5em]
In this case, the bookmaker's terminal wealth $Y^*_T$ is given by 
\begin{align}
Y^*_T (\text{Heads}) &= (\sqrt{\hat p}-1) \cdot Q_T^{u^*,1} + \sqrt{1-\hat p} \cdot Q_T^{u^*,2}, \\ 
Y^*_T (\text{Tails}) &= \sqrt{\hat p} \cdot Q_T^{u^*,1} + (\sqrt{1-\hat p} - 1) \cdot Q_T^{u^*,2},
\end{align}
where the total number of bets placed on Heads and Tails $Q_T^{u^*,i}$ $(i=1,2)$ are independent Poisson random variables with expectations given by 
$\lam_i^* \cdot T= \frac{\sqrt{p_i} (1- \sqrt{p_i})}{1 - p_i} \cdot T$, where $p_1= \hat p$ and  
$p_2 = 1-\hat p$ 
%\begin{align}
%\label{eq:Q_int}
%\lam_i^* \cdot T
%&= \frac{\sqrt{p_i} (1- \sqrt{p_i})}{1 - p_i} \cdot T, &
%&\text{where}&
%p_1 
%&= \hat p , &
%p_2 
%&= 1-\hat p .
%\end{align}
Let us compute $\Pb(Y_T^* > 0)$, the probability that the bookmaker obtains a profit.  We have 
\begin{align}
\Pb(Y_T^* > 0) &= \Pb(\text{Heads}) \Pb\left(  Q_T^{u^*,1} < \dfrac{\sqrt{1-\hat p}}{1 - \sqrt{\hat p}} \cdot Q_T^{u^*,2} \right) + \Pb(\text{Tails})  
\Pb \left( Q_T^{u^*,2} <  \dfrac{\sqrt{\hat p}}{1 - \sqrt{1-\hat p}}\cdot  Q_T^{u^*,1}   \right) \\
&= \hat{p} \sum_{j = 1}^\infty \ee^{-\lam_2^*T} \dfrac{(\lam_2^*T)^j}{j!} \sum_{i=0}^{\Mc_1(j)} \ee^{-\lam_1^*T} \dfrac{(\lam_1^*T)^i}{i!}  + 
(1-\hat p) \sum_{i = 1}^\infty \ee^{-\lam_1^*T} \dfrac{(\lam_1^*T)^i}{i!}\sum_{j=0}^{\Mc_2(i)} \ee^{-\lam_2^*T} \dfrac{(\lam_2^*T)^j}{j!} , 
\end{align}
where $\Mc_1(k)$ is the largest integer less than $\frac{\sqrt{1-\hat p}}{1 - \sqrt{\hat p}} \cdot k$ and $\Mc_2(k)$ is the largest integer less than $\frac{\sqrt{\hat p}}{1 - \sqrt{1-\hat p}} \cdot k$ for all $k=0,1,\cdots$.
The above expression of $\Pb(Y_T^* > 0) $ allows us to compute $\Pb(Y_T^* > 0)$ numerically in an efficient way. 
For instance,  given $\hat p = 0.5$, we compute 
\begin{align}
\label{eq:prob_Y_coin}
\Pb(Y_T^* > 0)  = 33.67\%  \; (T=1); \quad 54.43\% \; (T=2); \quad 76.82\% \; (T=5); \quad 86.49\%  \; (T=10).
%	&=	\left\{ \begin{aligned}
		%&33.6747\%  && T=1, \\
		%	&54.4348\%  && T=2, \\
		%	&76.8231\%  && T=5, \\
		%	&86.4919\% && T=10. 
		%	\end{aligned} \right.
\end{align}

\subsection{Numerical Analysis of a Diffusion Example}
\label{sub:diff_exm}

In this section, we consider a diffusion example, in which the conditional probabilities are given by diffusion processes and the betting arrivals are given by the Poisson model \eqref{eq:Poi_arrival}. 
The arrival intensity $\lam_i$ is assumed to take the form in \eqref{eq:lambda-1}.
Our main focus is to investigate the bookmaker's terminal payoff $Y_T^*$, after paying out the winning bets, when he follows the optimal price strategy $u^*$ given by \eqref{eq:wealth_op_exm1_general}.

In particular, we follow the setup in Example \ref{ex:n-points} and consider a spreads betting on an NBA game, which offers three outcomes to bet on: $A_1 = \{\text{home team wins by 3 points or more} \}$, 
$A_2 = \{\text{home team wins by less than 3 points}\}$, and $A_3 = \{\text{away team wins}\}$. 
The reason behind such a choice of outcomes is that, home advantage is well documented and empirically tested in the sports literature, and the latest home advantage is 2.33 (points) per game during the 2019-2020 NBA season.\footnote{See statistics here \url{https://www.usatoday.com/sports/nba/sagarin/}.} 
Let $\Del=(\Del_t)_{0 \le t \le T}$ denote the scores of the home team minus the scores of the away team, and recall from Example \ref{ex:n-points} that $\Del_t = \mu t + \sig W_t$.
We have 
\begin{align}
P_t^1 &= \Eb_t \Ib_{A_1} = \Pb(\Del_T \ge 3 | \Del_t), &
P_t^2 &= \Eb_t \Ib_{A_2} = \Pb(\Del_T \ge 0 | \Del_t) - P_t^1, & 
P_t^3 &= 1 - \Pb(\Del_T \ge 0 | \Del_t),
\end{align}
where $\Pb(\Del_T \ge n | \Del_t)$ is given by \eqref{eq:prob_diff}. 
We set the model parameters by 
\begin{align}
\label{eq:para_diff}
\mu &= 2.33, & \sig &= 10, & 
x &= 0, & q &=(0,0,0), &
T &= 1,  & \kappa &= 10000,
\end{align}
where $\kappa$ is the scaling factor of the intensity function in \eqref{eq:lambda-1}. 
Since we normalize $T=1$, the factor of $\kappa = 10000$ predicts that the average number of bets is about 4142 if the probability of the outcome is 0.5 and the price is set to $\sqrt{0.5}$.
We comment that our objective is to obtain \emph{qualitative}, not \emph{quantitative}, finding, since real market data are business secretes and are not available to us. The findings are robust and hold under any positive scaling factor $\kappa$.

\begin{figure}[h!]
	\centering
	\includegraphics[width=0.6\textwidth]{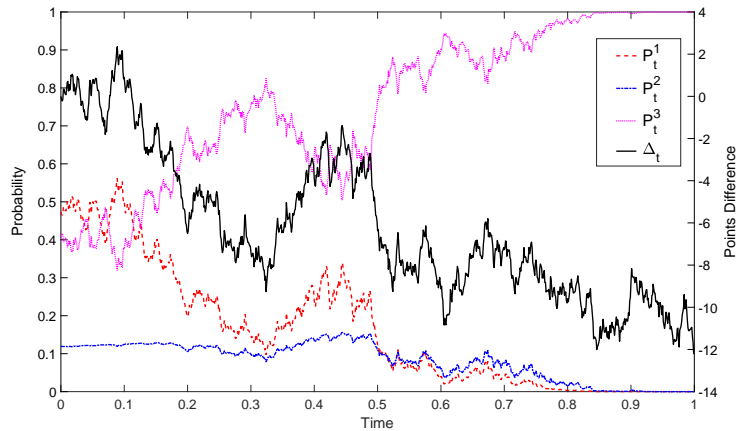}
	\\[-2ex]
	\caption{\small A simulated path of the points difference $\Del$ and the conditional probabilities $P^i$, $i=1,2,3$, over time $t \in [0,1]$. The $y$-axis on the left is probability, while the $y$-axis on the right is points difference $\Del$. The parameters are set in \eqref{eq:para_diff}.}
	\label{fig:path}
\end{figure}	
We first draw one simulated sample path of the points difference $\Del$ and compute the conditional probabilities $P_t^i$ for $i=1,2,3$ over time $t \in [0,1]$ in Figure \ref{fig:path}. 
In this particular path, the points difference (black solid line) eventually goes to negative, meaning the away team wins the game and outcome $A_3$ is the winning bet. Accordingly, we observe that the conditional probabilities $P_t^1$ (red dashed line) and $P_t^3$ (blue dash-dot line) converge to 0, and $P_t^3$ (pink dotted line) converges to 1. 
Recall $u_t^{*,i} = \sqrt{P_t^i}$, the dynamic movements of the conditional probabilities imply that the bookmaker indeed should adjust the prices dynamically to account for the changes of $\Del_t$. 
Along the particular path in Figure \ref{fig:path}, we compute 
\begin{align}
X_T^{*} &= 5471, & Q_T^{*,1} &= 2356, & Q_T^{*,2} &= 2032, & Q_T^{*,3} &= 4323, 
\end{align}
which shows the final profit of the bookmaker is $Y_T^* = X_T^* - Q_T^{*,3} = 1148 > 0$.

Next, we use the Monte Carlo method and repeat the same simulation process $N$ times (we take $N = 10,000$). We report the descriptive statistics of $Y_T^*$ in Table \ref{tab:profit}.
The key finding is that, regardless which outcome turns out to be the winning one, the bookmaker is able to make a positive profit when following the optimal dynamic price policy $u^*$. 
The median of $Y_T^*$ is 3134, greater than the mean $2433$, indicating the distribution is left-skewed. %as also confirmed by Figure \ref{fig:hist}.  
Comparing this example with the coin example considered in Section \ref{subsec:prob}, we see the importance of dynamically adjusting the prices for the bookmaker. 
In the coin example, the conditional probabilities remain unchanged and the bookmaker's optimal price strategy is a constant policy. In such a case, the probability of profitability is strictly less than 1; see \eqref{eq:prob_Y_coin}. 
However, in the example considered here, the dynamic movements of the points difference lead to dynamic changes of the conditional probabilities. The bookmaker then adjusts its offering on odds in a dynamic manner, and is able to achieve strict profitability by following the optimal strategy.

\begin{table}[h]
\centering
\begin{tabular}{ccccccc} \hline
mean & s.d.  & min & Q25\% & median & Q75\% & max \\ \hline 
2433 & 1048  & 163 & 1400 & 3134 & 3322 & 4035 \\ \hline
\end{tabular}
\caption{\small Descriptive statistics of the bookmaker's profit $Y_T^*$. Here, s.d. is standard deviation, and Q25\% and Q75\% refer to 25\% and 75\% quantiles, respectively. The parameters are set in \eqref{eq:para_diff}.}
\label{tab:profit}
\end{table}

\section{Exponential Utility Maximization}
\label{sec:exp}

In this section, we study Problem \eqref{prime_prob}  when $U(y)=-\ee^{-\gamma y}$, with $\gam>0$, henceforth called the \emph{exponential utility maximization problem}. 
We solve this problem assuming $Q^u$ is described by the Poisson arrivals model \eqref{eq:Poi_arrival}.\footnote{Notice Corollary \ref{cor:exp} solves the exponential utility maximization problem under the continuous arrivals model \eqref{eq:det_arrival}, among other model assumptions.}
We summarize the key results in Theorem \ref{thm:exp_dyna_v1}.

\subsection{Main Results}
Let us begin by stating some assumptions, which are assumed to hold throughout the analysis that follows.

\begin{assumption}
\label{assumption:exp}
The utility function $U$ is given by $U(y)=-\ee^{-\gamma y}$, where $\gam>0$. 
The vector of conditional probabilities is a vector of constants, namely, $P_t \equiv p \in (0,1)^n$ for all $t \in [0,T)$. 
The bet arrivals process $Q^{u,i}$ is given by the Poisson arrivals model \eqref{eq:Poi_arrival}.
The intensity function $\lam_i$ is given by \eqref{eq:lambda-3}, i.e., $\lam_i(p_i,u_i) = \kappa \ee^{-\beta(u_i - p_i)}$, where $\kappa, \beta>0$. 
\end{assumption} 

As in the previous section, because the conditional probability of set $A_i$ is assumed to be a fixed constant $p_i$, in order to simplify the notation, we write 
the intensity function as $\lam_i(u^i)$ for all $i \in \Nb$. 
Under Assumption \ref{assumption:exp}, with $X_t=x$, $P_t=p$ and $Q_t = q$, the bookmaker's terminal wealth $Y^u_T$ is given by
\begin{align}
Y_T^u = \left(x+\sum_{i=1}^n\int_t^T u_t^i \dd N_t^{u,i}-\sum_{i=1}^n\left(q_i+\int_t^T \dd N_t^{u,i}\right) \Ib_{A_i}\right).
\end{align}
The standard method of solving a stochastic control problem is to develop a verification theorem first, solve the associated HJB PDE with appropriate boundary conditions, and verify all the conditions in the verification theorem are met. We have followed exactly this standard method previously; see Theorem \ref{thm:wealth_ver} and Corollary \ref{wealth_dyna}. 
It is clear that the general verification theorem obtained in Theorem \ref{thm:wealth_ver} also applies to the problem we are considering in this section, with operator $\Lc^u$ given by \eqref{eq:Lc_poi} and $\Mc=0$.
Please refer to \cite{oksendal2005applied}[Chapter 3] for standard stochastic control theory with jumps.

The HJB associated with the exponential utility maximization problem is
\begin{equation}\label{e7}
\d_t V(t,x,p,q)+\sum_{i=1}^n\sup_{u_i}\left(\lambda_i(u_i)[V(t,x+u_i,p,q+ \e_i)-V(t,x,p,q)]\right)=0,
\end{equation}
and the boundary condition is
\begin{align}
\label{eq:boun1_V4}
V(T,x,p,q)=-\ee^{-\gamma x} \cdot a(q) ,
\end{align}
where $\e_i\in\N^n$ is the vector whose $i$-th component is $1$ and other components are $0$, and
\begin{align}
a(q):=\Eb \exp\left(\gamma\sum_{i=1}^n q_i \Ib_{A_i}\right) .
\end{align}
To better present the solutions in Theorem \ref{thm:exp_dyna_v1}, we define constants $c$, $b_i$ and $h_i$ by 
\begin{align}
\label{eq:cont_bc}
c:=\frac{\beta}{\gamma}, \qquad 
b_i := \kappa \ee^{\beta p_i}\cdot\frac{\gamma}{\beta+\gamma}\cdot\left(\frac{\beta}{\beta+\gamma}\right)^{\frac{\beta}{\gamma}},  \qquad \text{ and } \qquad 
h_i := c b_i,
\qquad i \in \Nb,
\end{align}
and functions $d(q)$ and $\alpha_k(q)$, $k=0,1,2,\cdots$, by 
\begin{align}
\label{eq:d_def}
d(q) &:=  [a(q)]^{-c},  &
\alpha_0(q) &:= d(q),  &
%\alpha_1(q) &:= - \sum_{i = 1}^n \, b_i \cdot d(q+\e_i), \\
\alpha_k(q) &:= \frac{1}{k!}\sum_{\j\in\N^n:|\j|=k} \left(\prod_{i=1}^{n} h_i^{j_i}\right) \cdot d(q+\j), 
\end{align}
where $\j=(j_1,\dotso,j_n)\in\N^n$ and $|\j|:=j_1+\dotso+j_n$. For instance, when $k=1$, we have 
$\alpha_1(q) := \sum_{i = 1}^n \, h_i \cdot d(q+\e_i)$.

\begin{theorem}
\label{thm:exp_dyna_v1}
Let Assumption \ref{assumption:exp} hold. 
The value function $V$ of Problem \eqref{prime_prob} is given by 
\begin{align}
\label{eq:value_V4}
V(t,x,p,q) 
%= - \ee^{-\gam x} \, \tilde{V}_4(t,q) 
= - \ee^{-\gam x} \, \left[G(t,q) \right]^{- 1/c},
\end{align}
where function $G$ is defined by 
\begin{align}
\label{eq:hat_V4}
G(T,q) &=  d(q), &
G(t,q) &= \sum_{k=0}^{\infty} \alpha_k(q) \cdot (T-t)^k, \qquad t \in [0,T).
\end{align}
The optimal price process $u^*=(u_s^*)_{s \in [t,T]}$ to Problem \eqref{prime_prob} is given by 
\begin{align}
\label{eq:op_exp_dyna}
u_s^{i,*} = - \frac{1}{\gam}\log\left[\dfrac{\beta \cdot H(s,Q_s^*)}{(\beta+\gamma) \cdot 
	H(s,Q_s^*+\e_i)}\right], \qquad i \in \Nb,
\end{align}
where $H(t,q) := [G(t,q)]^{- 1/c}$.
\end{theorem}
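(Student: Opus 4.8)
\emph{Proof strategy.} The plan is to produce an explicit candidate and verify it through the HJB/verification machinery of Theorem \ref{thm:wealth_ver}, taking $\Lc^u$ of the Poisson type \eqref{eq:Lc_poi} with $\Mc=0$ (since $P_t\equiv p$ is frozen). The exponential utility together with the terminal datum \eqref{eq:boun1_V4} strongly suggest that the wealth variable $x$ factors out multiplicatively, so I would posit the ansatz
\begin{align}
V(t,x,p,q) = -\ee^{-\gam x}\,\phi(t,q),
\end{align}
with $\phi>0$ to be found. Because $V(t,x+u_i,p,q+\e_i)-V(t,x,p,q)=-\ee^{-\gam x}[\ee^{-\gam u_i}\phi(t,q+\e_i)-\phi(t,q)]$, substituting into \eqref{e7} cancels the common factor $\ee^{-\gam x}$ and reduces the HJB to a single equation for $\phi$ in the variables $(t,q)$ alone; the terminal condition \eqref{eq:boun1_V4} becomes $\phi(T,q)=a(q)$.

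Next I would carry out the pointwise maximization in each summand. With $\lam_i(u_i)=\kappa\ee^{\beta p_i}\ee^{-\beta u_i}$ the $i$-th term is smooth in $u_i$, and its first-order condition yields (a second-order check confirms a maximizer) $\ee^{-\gam u_i^*}=\beta\,\phi(t,q)/[(\beta+\gam)\,\phi(t,q+\e_i)]$, which is exactly the feedback law \eqref{eq:op_exp_dyna} once $\phi$ is identified with $H=[G]^{-1/c}$. Plugging $u_i^*$ back collapses the $i$-th summand to $b_i\,\phi(t,q)^{c+1}\phi(t,q+\e_i)^{-c}$, with $b_i$ as in \eqref{eq:cont_bc}, leaving the nonlinear equation $\d_t\phi=\sum_{i=1}^n b_i\,\phi(t,q)^{c+1}\phi(t,q+\e_i)^{-c}$. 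The crucial and most delicate step is a Hopf--Cole-type substitution $\phi=G^{-1/c}$ (equivalently $G=\phi^{-c}$): because the exponents $c+1$ and $-c$ are precisely tuned to $c=\beta/\gam$, this \emph{linearizes} the equation into
\begin{align}
\d_t G(t,q) = -\sum_{i=1}^n h_i\,G(t,q+\e_i), \qquad G(T,q)=d(q),
\end{align}
with $h_i=c b_i$ and $d=a^{-c}$ as in \eqref{eq:d_def}; indeed $\phi(T,q)=a(q)$ corresponds to $G(T,q)=a(q)^{-c}=d(q)$. This linearization hinges entirely on the exponential form \eqref{eq:lambda-3} and is the heart of the argument.

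The linear lattice equation is then solved explicitly. Writing $\tau=T-t$ and introducing the bounded shift operator $\Sc$ on $\ell^\infty(\N^n)$ defined by $(\Sc f)(q):=\sum_{i=1}^n h_i f(q+\e_i)$, the solution is $G=\ee^{\tau\Sc}d=\sum_{k\ge 0}\tfrac{\tau^k}{k!}\Sc^k d$; expanding $\Sc^k$ over multi-indices $\j\in\N^n$ with $|\j|=k$ produces precisely the coefficients $\alpha_k(q)$ and hence the power series \eqref{eq:hat_V4}. Convergence (and smoothness in $t$) is easy: since $\Pb(A_i)\in(0,1)$ and $q_i\ge 0$, the exponent in $a$ is nonnegative, so $a(\cdot)\ge 1$ and $0<d(\cdot)\le 1$; as $\Sc$ has operator norm $\le\sum_i h_i$ on $\ell^\infty$, one gets $0<G(t,q)\le \ee^{(T-t)\sum_i h_i}<\infty$. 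Positivity of $G$ makes $H=G^{-1/c}$ and $V$ in \eqref{eq:value_V4} well-defined, with $V<0$.

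Finally I would check the hypotheses of Theorem \ref{thm:wealth_ver}. The boundary condition holds by the computation above, $[d(q)]^{-1/c}=a(q)$, matching \eqref{eq:boun1_V4}. The sign conditions follow structurally: $\d_x V=\gam\ee^{-\gam x}H>0$; moreover $a$ is increasing in each $q_i$ (as $\Pb(A_i)>0$), so $d$ and every $\alpha_k$ are decreasing in $q_i$, whence $G$ is decreasing and $H=G^{-1/c}$ increasing in $q_i$, giving $\d_{q_i}V<0$ (the formulas extend smoothly to real $q\ge 0$, so the required differentiability in $q$ is met while $\Lc^u$ only evaluates integer shifts). The main obstacles I anticipate are exactly (i) recognizing and justifying the power substitution $\phi=G^{-1/c}$ that turns the nonlinear HJB into the solvable linear equation, and (ii) confirming \emph{admissibility}, namely that the feedback control \eqref{eq:op_exp_dyna} is progressively measurable and actually lands in $\Ac=[0,1]^n$ so that $u^*\in\Acc(t,T)$ and is a genuine maximizer of $\Lc^{\hat u}V$. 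Granting these, the regularity and growth bounds above let the verification theorem identify $V$ with the value function and $u^*$ with the optimal price process.
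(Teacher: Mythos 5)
Your proposal follows the paper's own route almost step for step: the multiplicative ansatz $V=-\ee^{-\gam x}\phi(t,q)$, the pointwise optimization yielding the feedback law, the reduction to the nonlinear lattice equation $\d_t\phi=\sum_i b_i\,\phi^{c+1}\phi(\cdot+\e_i)^{-c}$, and the power substitution $G=\phi^{-c}$ that linearizes it are exactly the paper's equations \eqref{e4}, \eqref{e9}, \eqref{e8} and \eqref{e5}. Your closing checks of the verification hypotheses ($\d_x V>0$, $\d_{q_i}V<0$ via monotonicity of $a$ in $q_i$, positivity and entireness of $G$) are a welcome addition that the paper leaves implicit, and the admissibility issue you flag (whether \eqref{eq:op_exp_dyna} actually stays in $[0,1]$) is likewise left unresolved by the paper, which only argues $u^{i,*}>0$.

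The one place you genuinely diverge is in solving the linear equation, and there your claim that expanding $\ee^{\tau\Sc}d$ ``produces precisely the coefficients $\alpha_k(q)$'' is not correct as the paper defines them. Since the shifts commute, $\tfrac{1}{k!}\Sc^k d(q)=\sum_{|\j|=k}\tfrac{1}{\j!}\,h^{\j}\,d(q+\j)$, where $\j!:=j_1!\cdots j_n!$ and $h^{\j}:=\prod_i h_i^{j_i}$; these are \emph{multinomial} weights, whereas \eqref{eq:d_def} assigns the uniform weight $\tfrac{1}{k!}$ to every multi-index. For $n\geq 2$ the two series are different functions (already at $k=2$ the cross terms $h_ih_j$, $i\neq j$, carry factors $1$ versus $\tfrac12$). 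The mismatch is not a defect of your argument but of the paper's: in the proof of Lemma \ref{lem:sol_hat_V4}, the re-indexing of $\sum_{i}\sum_{|\j|=k}$ into $\sum_{|\j'|=k+1}$ tacitly assumes each $\j'$ is reached exactly once, when in fact it is reached $\#\{i: j_i'\geq 1\}$ times; with the multinomial weights the recursion $\sum_i h_i\,\alpha_k(q+\e_i)=(k+1)\,\alpha_{k+1}(q)$ does hold, because $\sum_{i:\,j_i'\geq 1} j_i'=k+1$. So your operator-exponential derivation produces the series that actually solves \eqref{e5} and effectively repairs the paper's lemma — but to make your proof complete you must state the corrected coefficients (or verify the recursion directly), rather than assert agreement with \eqref{eq:d_def} and \eqref{eq:hat_V4} as written.
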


\begin{proof}
As the utility function is of exponential form, we make the following Ansatz
\begin{align}
V(t,x,p,q) = - \ee^{-\gamma x} \cdot H(t,q).
\end{align}
Inserting the Ansatz into the HJB equation \eqref{e7} and the boundary condition \eqref{eq:boun1_V4}, we obtain 
\begin{align}\label{e4}
\d_t H(t,q)+\sum_{i=1}^n\inf_{u_i}\left(\lambda_i(u_i)[\ee^{-\gamma u_i} H(t,q+\e_i)- H(t,q)]\right)=0,\qquad \text{and} \qquad H(T,q)=a(q).
\end{align}
Solving the infimum problems in \eqref{e4} gives
\begin{align}\label{e9}
u_i^*=-\frac{1}{\gamma}\log\left[\frac{\beta \cdot H(t,q)}{(\beta+\gamma) \cdot H(t,q+\e_i)}\right]>0,
\end{align}
which proves the optimal price process in \eqref{eq:op_exp_dyna} once $H$ is found.
Next, substituting $u_i^*$ in \eqref{e9} back into \eqref{e4}, we obtain
\begin{align}\label{e8}
\d_t H(t,q)-\sum_{i=1}^n b_i\cdot\frac{[H(t,q)]^{c+1}}{[H(t,q+\e_i)]^c}=0, %\qquad \text{and} \qquad  \hat V_4(T,q)=a(q).
\end{align}
where constants $b_i$ and $c$ are defined by \eqref{eq:cont_bc}. Now let
$G(t,q):=[H(t,q)]^{-c}.$
We establish the equation for $G$ as follows
\begin{align}\label{e5}
\d_t G(t,q) + \sum_{i=1}^n h_i \cdot G(t,q+\e_i)=0,\qquad \text{and} \qquad
 G(T,q)=d(q).
\end{align}
where $h_i=cb_i$ as in \eqref{eq:cont_bc}.
The results from Lemma \ref{lem:sol_hat_V4} verify that $G$ given by \eqref{eq:hat_V4} solves \eqref{e5}, and therefore the value function to Problem \eqref{prime_prob} is given by \eqref{eq:value_V4}. 
\end{proof}

\begin{lemma}
\label{lem:sol_hat_V4} 
We have 
\begin{enumerate}
	\item The series $x\mapsto \sum_{k=0}^{\infty} \alpha_k(q) x^k$ is convergent with radius $+\infty$, and hence,  
	$G(t,q)$ given by \eqref{eq:hat_V4} is well defined and is in fact analytic. 
	
	\item $G(t,q)$ given by \eqref{eq:hat_V4} is a solution to \eqref{e5}.
\end{enumerate}
\end{lemma}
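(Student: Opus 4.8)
The plan is to prove the two assertions separately, exploiting that in the variable $s:=T-t$ the function $G$ is a power series $\sum_{k\ge 0}\alpha_k(q)\,s^k$ whose coefficients I can control by a crude uniform bound. For the first assertion (convergence) I would begin by bounding $d$. Since $q\in\Rb_+^n$ and $\gam>0$, the exponent $\gam\sum_i q_i\Ib_{A_i}$ is nonnegative and at most $\gam\sum_i q_i$, so $1\le a(q)\le\ee^{\gam\sum_i q_i}<\infty$; as $c=\beta/\gam>0$, this gives $0<d(q)=[a(q)]^{-c}\le 1$ for every $q$. Since each $h_i>0$ and $0<d\le 1$, each $\alpha_k(q)$ is nonnegative and bounded by the same sum with $d$ replaced by $1$, and the multinomial theorem yields
\begin{align*}
\alpha_k(q)\le\frac{1}{k!}\Big(\sum_{i=1}^n h_i\Big)^{k},\qquad\text{whence}\qquad\sum_{k=0}^\infty\alpha_k(q)\,s^k\le\ee^{Hs}<\infty,\quad H:=\sum_{i=1}^n h_i,
\end{align*}
for all $s\in\Rb$. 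Thus the radius of convergence is $+\infty$, so $G$ is well defined, entire (hence analytic) in $s$, and differentiable term by term.

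For the second assertion I would verify the terminal condition and the equation in \eqref{e5} separately. The terminal condition is immediate, $G(T,q)=\alpha_0(q)=d(q)$, since every $k\ge 1$ term carries $(T-T)^k=0$. For the equation, term-by-term differentiation (licensed above, with $\d_t=-\d_s$) gives $\d_t G(t,q)=-\sum_{k\ge 1}k\,\alpha_k(q)(T-t)^{k-1}$, while the finite sum over $i$ passes through the series so that $\sum_i h_i\,G(t,q+\e_i)=\sum_{k\ge 0}\big(\sum_i h_i\,\alpha_k(q+\e_i)\big)(T-t)^k$. Equating coefficients of $(T-t)^m$, which is valid by uniqueness of power-series coefficients, reduces \eqref{e5} to the single recursion
\begin{align*}
(m+1)\,\alpha_{m+1}(q)=\sum_{i=1}^n h_i\,\alpha_m(q+\e_i),\qquad m\ge 0,
\end{align*}
together with the trivial identity $\alpha_0=d$.

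I expect this recursion to be the crux, on account of the multi-index bookkeeping. The transparent route is to introduce the shift operator $(\Bc f)(q):=\sum_{i=1}^n h_i\,f(q+\e_i)$ and to identify $\alpha_k=\tfrac{1}{k!}\Bc^k d$: expanding $\Bc^k d(q)=\sum_{i_1,\dots,i_k}h_{i_1}\cdots h_{i_k}\,d(q+\e_{i_1}+\dots+\e_{i_k})$ and grouping the $k$-tuples $(i_1,\dots,i_k)$ by their total shift $\j$---a given $\j$ with $|\j|=k$ arising from $k!/\prod_i j_i!$ of them---reproduces the multi-index sum defining $\alpha_k$ in \eqref{eq:d_def}. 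The recursion is then nothing more than the identity $\Bc^{m+1}=\Bc\,\Bc^m$, i.e.\ $(m+1)\alpha_{m+1}=\Bc\alpha_m=\sum_i h_i\,\alpha_m(\cdot+\e_i)$; equivalently, one peels a single $\e_i$ off each multi-index in the defining sum and reindexes. This operator viewpoint also explains the formula at a glance: $G(t,\cdot)=\ee^{(T-t)\Bc}d$ is the obvious solution of the linear system $\d_t G=-\Bc G$, $G(T,\cdot)=d$, and the first assertion is exactly the convergence of this operator exponential. The only remaining work is the elementary justification for interchanging differentiation and summation (already supplied) and the precise counting in the grouping step.
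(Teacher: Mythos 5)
Your overall strategy is the same as the paper's: part 1 via a factorial bound on $\alpha_k$ (your multinomial-theorem bound $\alpha_k(q)\le H^k/k!$ is a cleaner version of the paper's count of multi-indices, and is correct), and part 2 by term-by-term differentiation and matching of coefficients, which reduces the claim to the recursion $(k+1)\alpha_{k+1}(q)=\sum_{i}h_i\,\alpha_k(q+\e_i)$. The gap is in your proof of that recursion, specifically the assertion that grouping the expansion of $\Bc^k d$ by total shift \emph{reproduces} the sum in \eqref{eq:d_def}. Your own counting shows it does not: grouping the $k!/\prod_i j_i!$ tuples with shift $\j$ gives
\begin{align*}
\frac{1}{k!}\,\Bc^k d(q)\;=\;\sum_{\j\in\N^n:\,|\j|=k}\frac{1}{j_1!\cdots j_n!}\Big(\prod_{i=1}^n h_i^{j_i}\Big)\,d(q+\j),
\end{align*}
whereas \eqref{eq:d_def} carries the prefactor $1/k!$ in place of $1/\prod_i j_i!$; the multinomial coefficient cancels only when $\j$ is supported on a single coordinate. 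So what your operator argument actually proves is that $\ee^{(T-t)\Bc}d$ solves \eqref{e5} (true, and cleanly argued), not that $G$ as defined in \eqref{eq:hat_V4} does. Indeed, for the $\alpha_k$ literally defined in \eqref{eq:d_def} the recursion fails: with $n=2$, $k=1$,
\begin{align*}
\sum_{i=1}^2 h_i\,\alpha_1(q+\e_i)=h_1^2\,d(q+2\e_1)+2h_1h_2\,d(q+\e_1+\e_2)+h_2^2\,d(q+2\e_2)\;\neq\;2\alpha_2(q),
\end{align*}
since $2\alpha_2(q)$ carries the cross term with coefficient $1$, not $2$.

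You should know that the paper's own proof stumbles at exactly the same point: its reindexing of $\sum_i\sum_{|\j|=k}$ into $\sum_{|\j'|=k+1}$ treats the map $(i,\j)\mapsto\j'=\j+\e_i$ as a bijection, when each $\j'$ in fact has as many preimages as it has nonzero components. So the lemma as stated is only correct after amending \eqref{eq:d_def} to $\alpha_k(q)=\sum_{|\j|=k}\bigl(\prod_i h_i^{j_i}/j_i!\bigr)\,d(q+\j)$, i.e.\ $\alpha_k=\frac{1}{k!}\Bc^k d$ --- precisely the object your operator viewpoint produces. With that amendment, your part 1 bound goes through unchanged (one still has $\alpha_k\le\frac{1}{k!}\Bc^k\mathbf{1}=H^k/k!$ since $0<d\le 1$), and your identity $(m+1)\alpha_{m+1}=\Bc\alpha_m$, which is just $\Bc^{m+1}=\Bc\,\Bc^m$, completes part 2 rigorously. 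In short: your framework is the right one and in fact repairs the paper's argument, but as written your proof asserts a false identification with \eqref{eq:d_def} and therefore does not establish the lemma in the form stated.
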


We place the proof to Lemma \ref{lem:sol_hat_V4} in online Appendix, along with a technical corollary that deals with the scenarios when the bookmaker sets an upper bound on the number of bets $Q$.

\subsection{Sensitivity Analysis}
\label{sub:sen}

In this subsection, we focus on the sensitivity analysis of  the value function and the optimal strategy on the model parameters. 
By Theorem \ref{thm:exp_dyna_v1}, we obtain that 
\begin{align}
\frac{\partial V}{\partial t} &= -\ee^{-\gam x} \, \left[G(t,q) \right]^{- 1/c - 1} \, \left(\sum_{k=1}^\infty \, k \, \alpha_k(q) \, (T-t)^{k-1} \right) < 0, \\
\frac{\partial V}{\partial x} &= \gamma \ee^{-\gam x} \, \left[G(t,q) \right]^{- 1/c}>0,
\end{align}
and 
\begin{align}
\frac{\partial d(q)}{\partial q_j} &= -c \big[a(q) \big]^{-c-1} \, \Eb \left[\gam \Ib_{A_j} \sum_{i=1}^n \, q_i \Ib_{A_i} \right]  & &\Rightarrow& \frac{\partial \alpha_k(q)}{\partial q_j} &<0
 & &\Rightarrow&  & \frac{\partial V}{\partial q_j} &<0,
\end{align}
which are consistent with the findings in the previous section; see Remark \ref{rem:value}, the comments on Corollary \ref{cor:method1}, and Proposition \ref{prop:det_value}. 
In deriving Theorem \ref{thm:exp_dyna_v1}, we assume the intensity function $\lam_i$ is given by \eqref{eq:lambda-3}, i.e.,  $\lam_i(p_i, u_i) = \kappa \ee^{-\beta (u_i - p_i)}$, where $\beta$ is the key parameter while $\kappa$ is merely a scaling factor.
Indeed, $\kappa$ affects the constants $b_i$ and $h_i$ proportionally, and we easily obtain $\partial V / \partial \kappa >0$ as expected. 
However, the impact of $\beta$ on the value function is rather complex, and no analytical result is available, noting $\beta$ affects both $c$ and $G$.
Our conjecture is that the bookmaker benefits more from a smaller $\beta$, as the smaller the $\beta$, the more the expected number of bets.

\begin{figure}[h!]
	\centering
	\hspace{-2ex}
	\includegraphics[width=0.48 \textwidth]{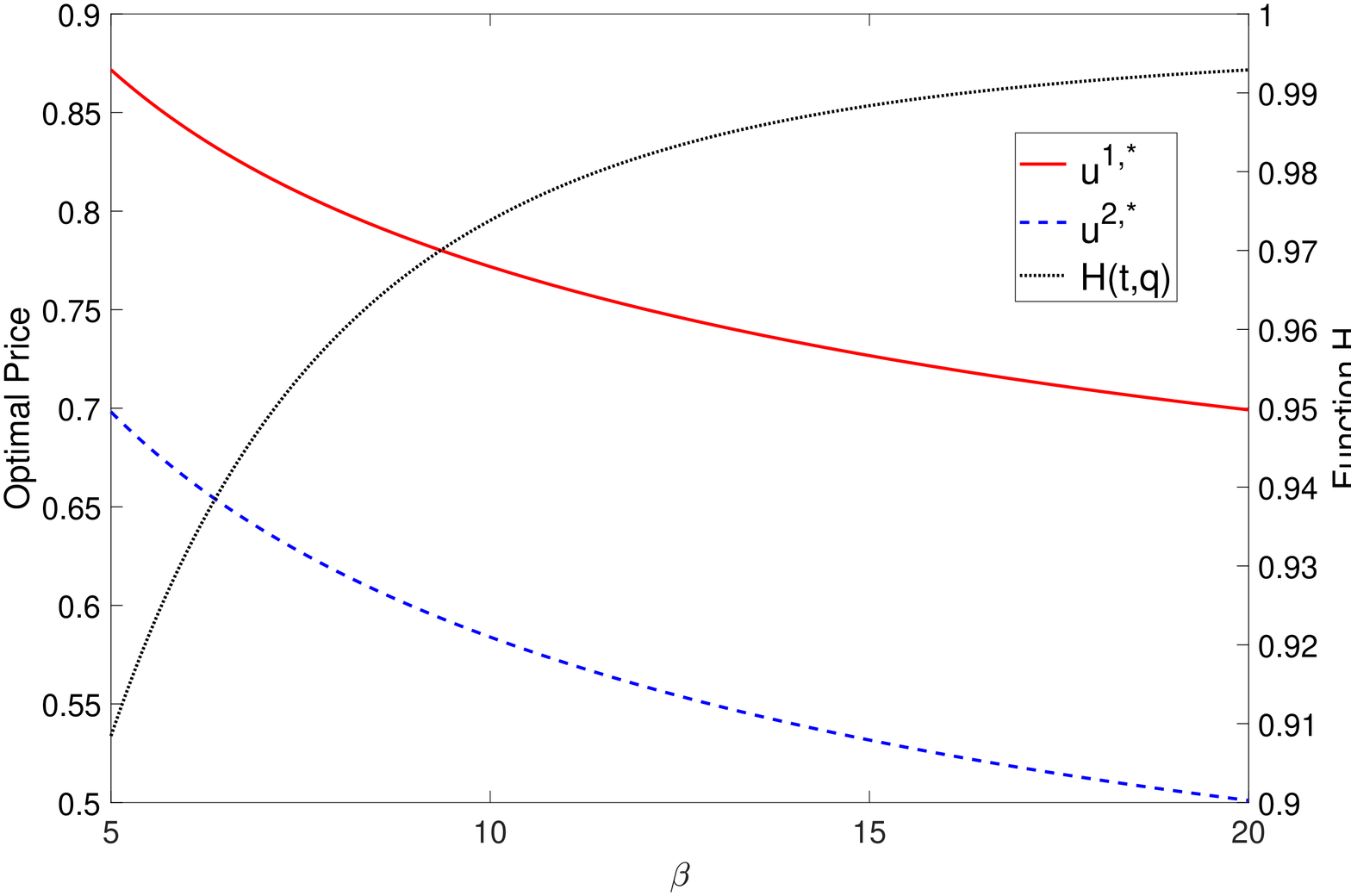}
	\includegraphics[width=0.48 \textwidth]{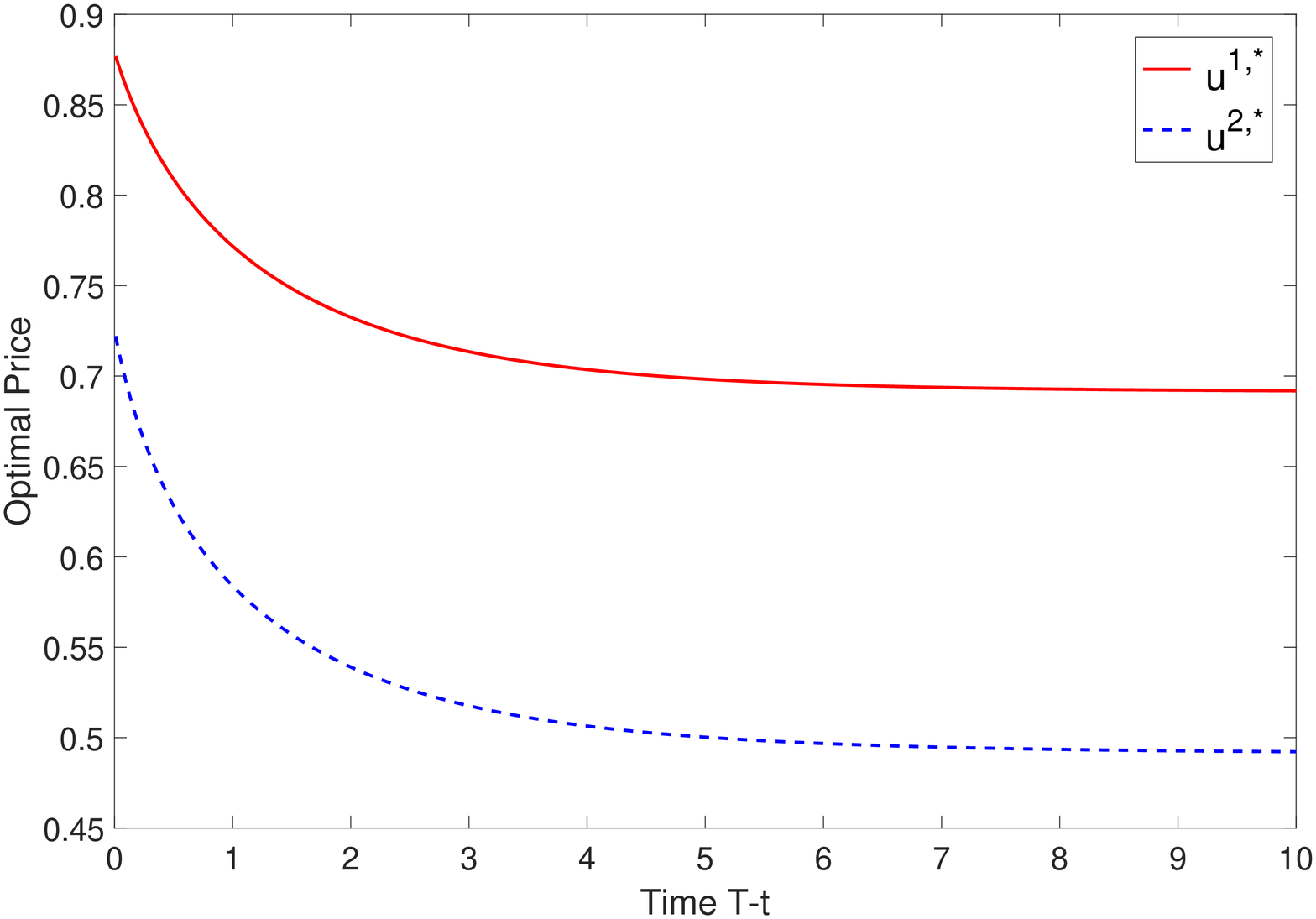}
	\\[-2ex]
	\caption{\small Left panel plots the impact of $\beta$ on the optimal price strategies $u^{1,*}$ (red solid line) and  $u^{2,*}$ (blue dashed line), and  the function $H(t,q)$ (black dotted line). The left $y$-axis is for optimal prices and the right $y$-axis is for the function $H$. 
	Right panel plots the impact of the remaining time $T-t$ on the optimal price strategies $u^{1,*}$ (red solid line) and  $u^{2,*}$ (blue dashed line). 
	All the parameters other than $\beta$ (left) or $T-t$ (right)  are set as in \eqref{eq:para_exp}.}
	\label{fig:exp_beta}
\end{figure}

To continue the investigations, we consider a coin example as in Section \ref{subsec:prob}, and denote $\Pb(A_1) = \Pb(\text{Heads}) = \hat p \in (0,1)$ and $\Pb(A_2) = \Pb(\text{Tails}) = \hat p$. 
We set the base parameters  as follows:
\begin{align}
\label{eq:para_exp}
\hat p &= 0.6,  & q_1&= q_2 = 0,  & T-t&=1, & \beta &= 10, & \kappa &= 1, & \gamma &= 2.
\end{align}
In the sensitivity analysis, we will allow the parameter of the study to vary around its base value while fix the values of all other parameters as in \eqref{eq:para_exp}. 
We first study how $\beta$ affects the bookmaker's price strategies, which in turn helps answer the impact of $\beta$ on the value function.
We comment that the bookmaker controls the prices (intensities) to attract more bets and to balance the book, which are the key to understand the following results.
We draw the plot of the optimal price strategies $(u^{1,*}, u^{2,*})$ and the function $H(t,q)$ over $\beta \in [5,20]$ in the left panel of Figure \ref{fig:exp_beta}.
As $\beta$ increases, the expected number of bets on both outcomes $A_1$ and $A_2$ decreases, and, to attract more bettors, the bookmaker lowers the prices on both $A_1$ and $A_2$, as observed in Figure \ref{fig:exp_beta}. 
Recall $V = - \ee^{-\gam x} \, H(t,q)$, the fact that $H$ is an increasing function of $\beta$ implies that $\partial V / \partial \beta < 0$, which means a decrease in the amount of bets is adverse to the bookmaker and hence verifies our previous conjecture. 
We next study the impact of the remaining time $T-t$ on the optimal prices and obtain the results in the right panel of Figure \ref{fig:exp_beta}. 
When the bookmaker has enough time (i.e., large $T-t$) to balance his book, he charges bettors a lower margin (commonly referred to juice, vig, cut, take, among others), but when the current time $t$ approaches the finishing time $T$, the bookmaker charges a very high margin trying to stay in profit. 
Recall from \eqref{eq:para_exp} that $p_1 = \hat{p} = 0.6$ and $p_2 = 0.4$,  Figure \ref{fig:exp_beta} shows that at the far end $T-t=10$, the bookmaker charges a margin of about 10\% but aggressively increases to 30\% or more when $t$ gets close to $T$. We also observe that the scale of increase is more significant on outcome $A_2$ than on outcome $A_1$, which is because outcomes of small probabilities impose higher risk to the bookmaker. 

In this section, the bookmaker is risk averse equipped with an exponential utility function $U(y) = - \ee^{-\gam y}$, where $\gam >0$ is the risk aversion parameter. In the next study, we investigate how the bookmaker's risk aversion $\gamma$ affects the optimal strategy $u^*$ and the value function $V$.  
Since both $u^*$ and $V$ also depend on $\gamma$ through the function $H$ or $G$, an analytical sensitivity result is not available. We then resort to numerical analysis and  plot both $u^*$ and $V$ as a function of $\gamma$ in Figure \ref{fig:exp_gamma}. 
With all other model parameters fixed, we observe that the bookmaker's value function increases as $\gamma$ increases, i.e., a bookmaker with higher risk aversion achieves higher expected utility when following the optimal strategy. 
As the risk aversion parameter $\gamma$ increases, the bookmaker becomes more risk averse and sets the optimal prices on both outcomes at a higher level.

\begin{figure}[h]
\centering
\includegraphics[trim = 1cm 0cm 0.5cm 1cm, clip=true, width = 0.5 \textwidth]{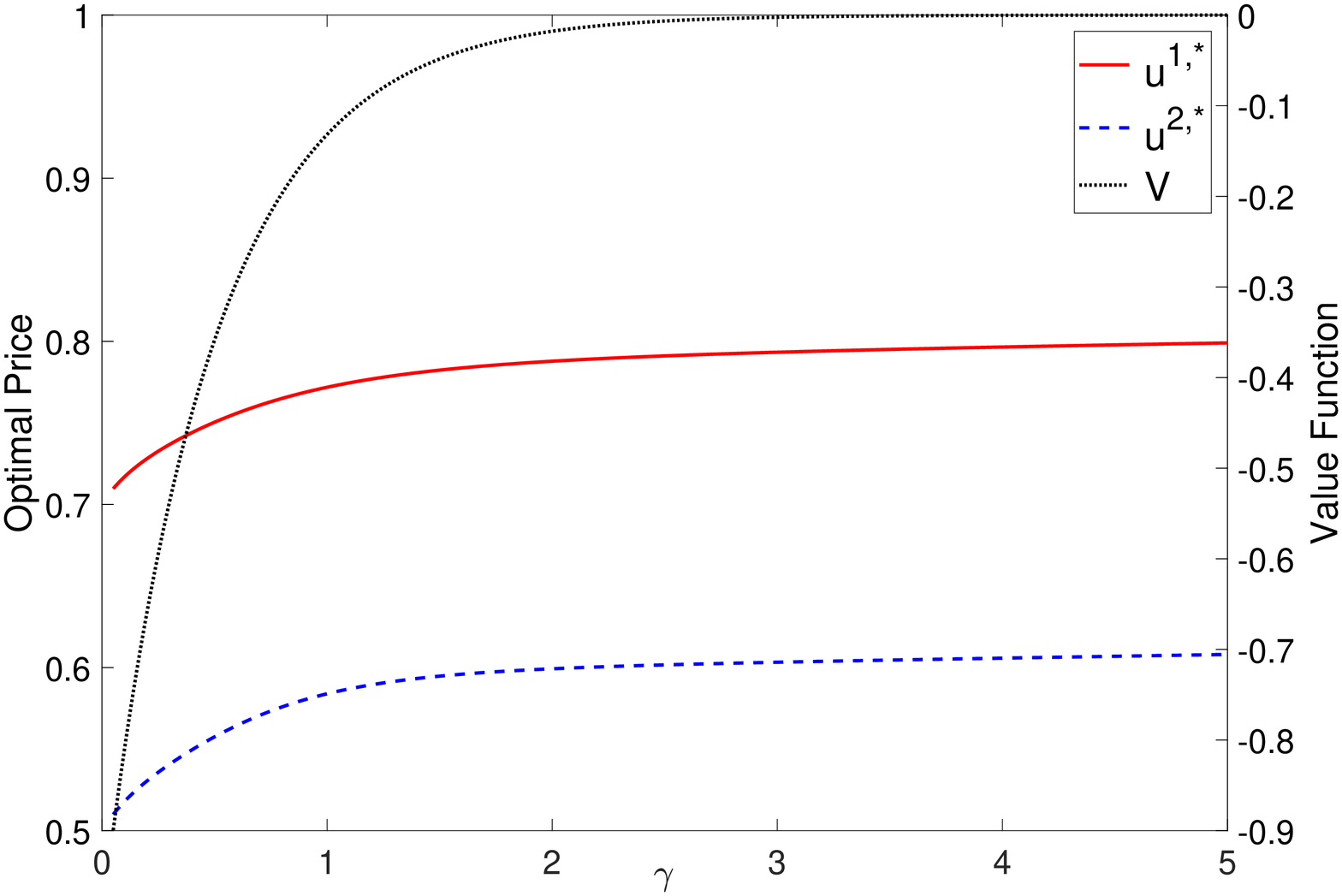}
\\[-2ex]
\caption{\small The graph plots the impact of the risk aversion parameter $\gamma$ on the optimal price strategies $u^{1,*}$ (red solid line) and  $u^{2,*}$ (blue dashed line), and  the value function $V$ (black dotted line). The left $y$-axis is for the optimal price $u^*$ and the right $y$-axis is for the value function $V$. 
	All the parameters other than $\gamma$ are set as in \eqref{eq:para_exp} and $x=1$.}
\label{fig:exp_gamma}
\end{figure}

We also conduct sensitivity analysis on the probability $\hat p$ (the probability of ``Heads'') and the inventory $q_1$ (the number of bets on ``Heads''), with results plotted in Figure \ref{fig:exp_p}. 
The main findings are summarized as follows:
(i) When $\hat p = p_1$ increases, the bookmaker increases the price $u^{1,*}$ on outcome $A_1$ and simultaneously decreases the price $u^{2,*}$ on outcome $A_2$. Notice that when $p_1 = p_2 = 0.5$, the optimal prices on two outcomes are the same.
(ii) Keeping the inventory $q_2$ on outcome $A_2$ constant (we set $q_2 = 5$), the optimal price $u^{1,*}$ on outcome $A_1$ is an increasing function of its inventory $q_1$ while the optimal price $u^{2,*}$ on outcome $A_2$ is a decreasing function of $A_1$'s inventory $q_1$, which is consistent with the result in the left panel of Figure \ref{fig:cor_quan}.

\begin{figure}[h]
	\centering
	\includegraphics[trim = 1cm 0cm 0.5cm 1cm, clip=true,  width = 0.48 \textwidth, height=2in]{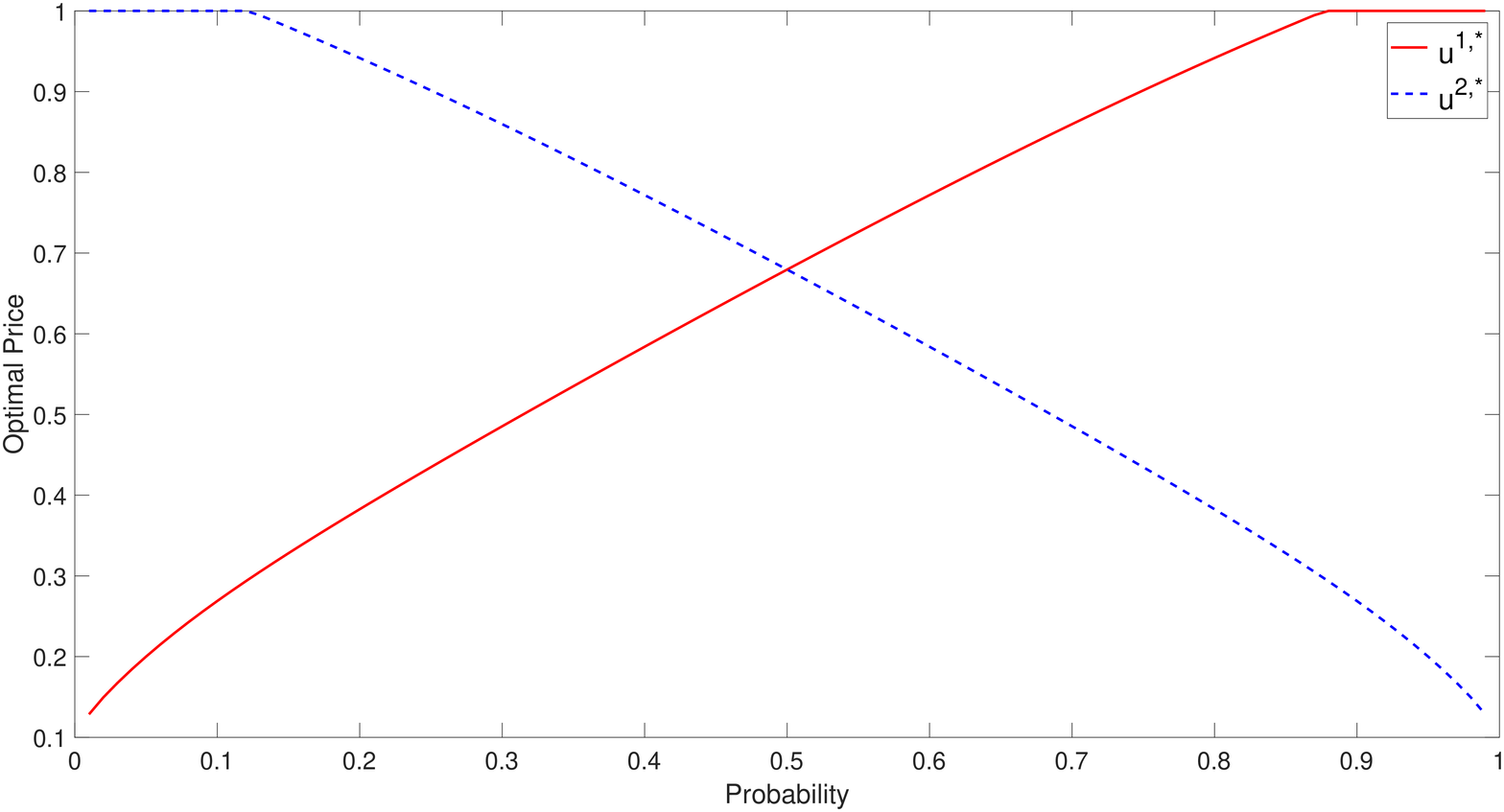}
		\includegraphics[trim = 1cm 0cm 0.5cm 1cm, clip=true,  width = 0.48 \textwidth, height= 2in]{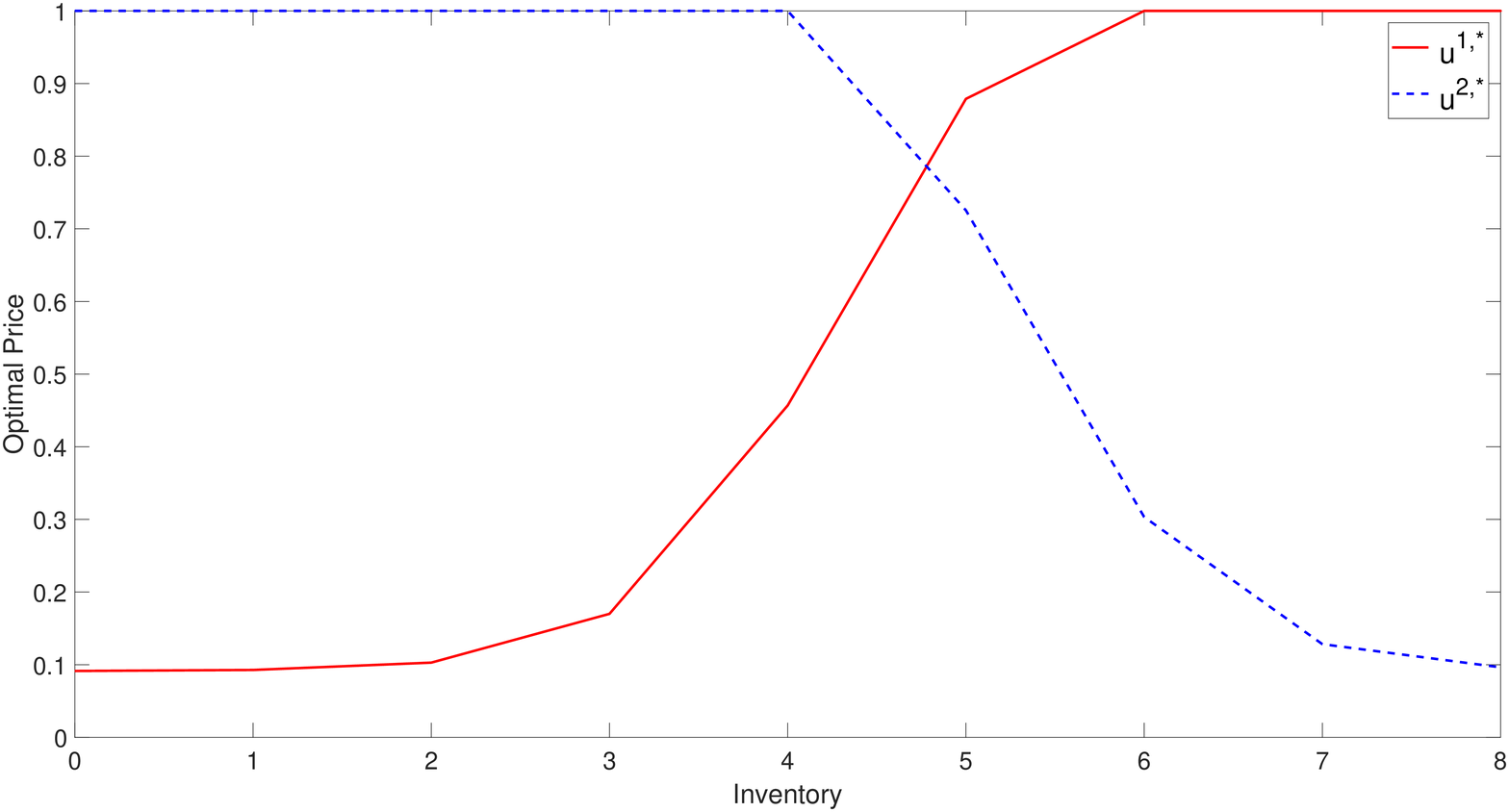}
	\\[-2ex]
	\caption{\small The graph plots the impact of the probability $\hat{p}$ (the probability of ``Heads'') in the left panel and the impact of the inventory $q_1$ (the number of bets on ``Heads'') in the right panel, on the optimal price strategies $u^{1,*}$ (red solid line) and  $u^{2,*}$ (blue dashed line). 
		All the parameters are set as in \eqref{eq:para_exp}, except $\hat{p}$ as a variable in the left panel and $q_1$ as a variable and $q_2 = 5$ in the right panel.}
	\label{fig:exp_p}
\end{figure}

\section{Conclusion}
\label{sec:con}

In this paper, we introduce a general framework for continuous-time betting markets. 
A bookmaker takes bets on random (sporting) events and pays off the winning bets at the terminal time $T$. 
The conditional probability of a set of outcomes is exogenous and may be stochastic, and the bets placed on a set of outcomes may arrive either at a continuous rate or as a state-dependent Poisson process.
The bookmaker controls (updates) the prices of bets dynamically.  In turn, the prices set by the bookmaker affect the rates or intensities of bet arrivals. 
The bookmaker seeks to set prices in order to maximize his expected terminal wealth or expected utility of terminal wealth. 
We are able to obtain either explicit solutions or characterizations of such optimal bookmaking problems in a number of distinct settings.

To solve the optimal bookmaking problem \eqref{prime_prob}, we consider two settings for the bookmaker's utility function $U$: the risk neutral setting and the exponential utility (risk averse) setting.  We end this paper by commenting on the challenges when the bookmaker is risk averse with a power utility function.\footnote{We are indebted to an anonymous referee for motivating us discuss these challenges faced in the power utility setting.} Let us illustrate the main difficulty under the power utility setting of $U(y) = \frac{1}{\alpha} \, y^\alpha$, where $\alpha < 1$ and $\alpha \neq 0$, via a simple example when Assumption \ref{assumption:con_de} holds (i.e., under the continuous arrival model \eqref{eq:det_arrival} and the constant conditional probabilities). Suppose $(A_i)_{i=1,2,\cdots,n}$ form a partition of the sample space with $\Pb(A_i) = p_i  \in (0,1)$, and the rate function $\lam$ is given by \eqref{eq:lambda-1}.
	Let us further restrict  to time-independent constant strategies only $u = (u_1, u_2, \cdots, u_n) \in [0,1]^n$.
	For any strategy $u$, the bookmaker's final wealth $Y_T^u$ is given by 
$Y_T^u = x -  \sum_{i=1}^n \, q_i \Ib_{A_i} + (T-t) \left( \sum_{i=1}^n \, \frac{p_i (1-u_i)}{1-p_i}  -
		\sum_{i=1}^n \,  \frac{p_i (1-u_i)}{u_i (1-p_i)} \Ib_{A_i} \right).$
	Maximizing $\Eb_{t,x,p,q}[U(Y_T^u)]$ leads to the following  system of $n$ non-linear equations:
$$(1- u_i^2)/u_i^2 \cdot \big( Y_i^u \big)^{\alpha -1}  p_i = \sum_{j \neq i} \, \big( Y_j^u \big)^{\alpha -1} p_j ,$$
	where $Y_i^u $ is the bookmaker's profit when $A_i$ occurs at $T$ (i.e., $Y_i^u = Y_T^u(\omega)$, where $\omega \in A_i$).
	Notice that $Y_i^u$ depends on the state variables and controls of \emph{all} outcomes, i.e., $Y_i^u = Y_i^u(t,x,p,q_i,u)$.
Hence, the above non-linear system is fully coupled, while in comparison, the first-order condition in the exponential utility case leads to a decoupled system and $u_i^*$ can be explicitly obtained; see \eqref{e4} and \eqref{e9}. 
The exact same issue appears when we analyze the HJB equation in more complicated cases, which prevents us from guessing an ansatz in a separable form. 
On the numerical side, the corresponding HJB equation is at least (5+1)-dimensional $(t,x,p_1,p_2,q_1,q_2)$. Solving an HJB equation in such high-dimension with any sort of accuracy is extremely difficult due to the curse of dimensionality. In addition, even if the HJB equation could be solved numerically, it is really the optimal control rather than the value function that is sought after.  And deriving the optimal control from a value function that has been solved numerically is even more challenging than solving the HJB equation numerically because the control is written in terms of derivatives of the value function.

\section*{Acknowledgments}
The authors would like to thank two anonymous referees, Editor Steffen Rebennack,  Agostino Capponi, Xuedong He, and Steven Kou for valuable comments and suggestions. We also thank seminar participants at Rutgers University and the University of Michigan, and the organizers (Kim Weston, Kasper Larsen, Erhan Bayraktar, and Asaf Cohen) for their hospitality. Bin Zou acknowledges a start-up grant from the University of Connecticut.

%\clearpage
\bibliographystyle{apalike}
\bibliography{reference}

\clearpage
\setcounter{page}{1}
\pagenumbering{roman}

\appendix

\begin{center}
\Large Online Appendix for ``Optimal Bookmaking'' \\
by Matthew Lorig, Zhou Zhou and Bin Zou
\end{center}

\section{Technical Proofs and Results in Section \ref{sec:exp}}

\begin{proof}[Proof to Lemma \ref{lem:sol_hat_V4}]
	Denote by $b:=\max\{b_1,\dotso,b_n \}$. For $k$ large enough, we have
	\begin{align}
	|\alpha_k(q)|\leq\frac{1}{k!}\sum_{\j\in\N^n:|\j|=k} b^k=\frac{1}{k!}{k+n-1\choose n-1}b^k\sim \frac{1}{k!}\frac{k^{n-1}}{(n-1)!}b^k<\frac{k^n}{k!}b^k\sim\frac{b^{k-n}}{(k-n)!}b^n,
	\end{align}
	which proves the first result.
	To show the second result, we obtain
	\begin{align}
	\d_t G(t,q) &=-\sum_{k=1}^\infty k \cdot  \alpha_k(q) \, (T-t)^{k-1}=-\sum_{k=0}^\infty(k+1) \, \alpha_{k+1}(q) \cdot (T-t)^k, \\
	\sum_{i=1}^n h_i G(t,q+\e_i) &=\sum_{k=0}^\infty\left(\sum_{i=1}^n h_i \cdot \alpha_k(q+\e_i)\right)(T-t)^k.
	\end{align}
	Since we have
	\begin{align*}
	\sum_{i=1}^n h_i\cdot \alpha_k(q+\e_i)&=\sum_{i=1}^n h_i \, \frac{1}{k!}\sum_{\j\in\N^n:|\j|=k} \left(h_1^{j_1}\dotso h_n^{j_n}\right) \cdot  d(q+\e_i+\j)\\
	&=\frac{1}{k!}\sum_{i=1}^n\sum_{\j\in\N^n:|\j|=k} h_i\cdot \left(h_1^{j_1}\dotso h_n^{j_n}\right) \cdot  d(q+\e_i+\j)\\
	&=\frac{1}{k!}\sum_{\j'\in\N^n:|\j'|=k+1} \left(h_1^{j_1'}\dotso h_n^{j_n'}\right) \cdot d(q+\j')\\
	&=(k+1)\alpha_{k+1}(q),
	\end{align*}
	the second result follows.
\end{proof}

In Section \ref{sec:exp}, we do not impose any upper bound on the number of bets the bookmaker takes. 
However, if the bookmaker sets an upper for each betting event, we need to modify the results in Theorem \ref{thm:exp_dyna_v1} as described in the corollary below.

\begin{coro-non}
	Let Assumption \ref{assumption:exp} hold. Assume the total number of bets placed on set of outcomes $A_i$ is at most $m_i$, where $i \in \Nb$. 
	The value function to Problem \eqref{prime_prob} is given by 
	\begin{align}
	V(t,x,p,q) 
	%= - \ee^{-\gam x} \, \tilde{V}_4(t,q) 
	= - \ee^{-\gam x} \, \left[\check{G}(t,q) \right]^{- 1/c},
	\end{align}
	where $\check{G}$ is defined by 
	\begin{align}
	\check{G}(T,q) &=  d(q), &
	\check{G}(t,q) &= \sum_{k=0}^{|m| - |q|} \check\alpha_k(q) \cdot (T-t)^k, \qquad t \in [0,T),
	\end{align}
	with functions $\check\alpha_k(q) $ given by 
	\begin{align}
	\check\alpha_0(q) := d(q), \qquad  
	\check \alpha_k(q) :=\frac{1}{k!}\sum_{\j\in I(k,q)} \left(\prod_{i=1}^{n} h_i^{j_i}\right) \cdot d(q+\j),
	\end{align}
	for all $k=1,2,\cdots, \sum_{i=1}^{n} m_i$ and 
	\begin{align}
	I(k,q):= \{\j\in\N^n:\ |\j|=k, \, q+\j\leq m:=(m_1,m_2,\cdots,m_n)\}.
	\end{align}
	The optimal price process $u^*=(u_s^*)_{s \in [t,T]}$ to Problem \eqref{prime_prob} is given by 
	\begin{align}
	u_s^{i,*} = - \frac{1}{\gam}\log\left[\dfrac{\beta \cdot \check H(s,Q_s^*) }{(\beta+\gamma)
		\cdot	\check H(s,Q_s^*+ \e_i) } \right], 
	\end{align}
	for all $i \in I(q)$, where $\check H (t,q) :=  [\check G(t,q)]^{- 1/c}$ and
	\begin{align}
	\label{eq:Iq}
	I(q) := \{i:\ q_i<m_i\} \subset \Nb = \{1,\dotso,n\}.
	\end{align}
\end{coro-non}

\begin{proof}
	With the extra upper bound assumption, the bets on $A_i$ will arrive according to a Poisson process at intensity $\lambda_i(u_i)$ if the total number is less than $m_i$; and 0 if otherwise. 
	Notice that all the equations \eqref{e7}, \eqref{e4}, \eqref{e8}, and \eqref{e5} still hold, except that the summation over index $i$ in these equations will be restricted to the set $I(q)$, defined by \eqref{eq:Iq}. All the results follow naturally by Theorem \ref{thm:exp_dyna_v1}.
\end{proof}

\end{document}